\theoremstyle{plain}
\newtheorem{theorem}{Theorem}[section] % reset theorem numbering for each chapter
\theoremstyle{definition}
\newtheorem{defi}[theorem]{Definition} % definition numbers are dependent on theorem numbers
\newtheorem{exmp}[theorem]{Example}
\newtheorem{lemma}[theorem]{Lemma}
\newtheorem{prop}[theorem]{Proposition}
\newtheorem{remark}[theorem]{Remark}
\newcommand\RR{\mathbb{R}}
\newcommand\NN{\mathbb{N}}
\newcommand\cT{\mathcal{T}} % tree space with leaf edges
\newcommand\cO{\mathcal{O}} % orthants
\newcommand\leafset{\mathcal{L}}    % leaf sets, and leaves of T are \leafset(T)
\newcommand\cE{\mathcal{E}} % edge set of a tree; need to use mathcal to distinguish from extension set
\newcommand{\BHV}{\textrm{BHV}}
\title{Geometric comparison of phylogenetic trees with different leaf sets}
\author{Gillian Grindstaff}
\address{Department of Mathematics, The University of Texas at Austin}
\email{gillian.grindstaff@math.utexas.edu}
\author{Megan Owen}
\address{Department of Mathematics, Lehman College, City University of New York}
\email{megan.owen@lehman.cuny.edu}
\date{}
\begin{document}
\maketitle

\begin{abstract}
The metric space of phylogenetic trees defined by Billera, Holmes,
 and Vogtmann~\cite{BHV}, which we refer to as $\BHV$ space, provides a natural geometric setting for
 describing collections of trees on the same set of taxa.  However, it
 is sometimes necessary to analyze collections of trees on
 non-identical taxa sets (i.e., with different numbers of leaves), and
 in this context it is not evident how to apply $\BHV$ space.
 Davidson et al.~\cite{UIUC} approach this problem by describing a
 combinatorial algorithm extending tree topologies to regions in
 higher dimensional tree spaces, so that one can quickly compute
 which topologies contain a given tree as partial data.  In this
 paper, we refine and adapt their algorithm to work for metric trees
 to give a full characterization of the subspace of extensions of a
 subtree.  We describe how to apply our algorithm to define and search a space of possible supertrees and, for a collection of tree fragments with different leaf sets, to
 measure their compatibility.
\end{abstract}

\section{Introduction}
In the context of evolutionary biology, given a set of organisms referred to as taxa, a phylogenetic tree is a semi-labeled, weighted acyclic graph representing a possible evolutionary relationship between the taxa, using genotypic or phenotypic data. Such trees typically have a root which represents the common ancestor of the taxa, with a branch point at each speciation event, and a leaf for each taxon, such that the taxa which share more features are ``nearer" to each other in the tree. Here the intrinsic tree distance is exhibited by shortest path length in the weighted tree: a series of edges without repetition gives a unique path from one leaf to another, and the sum of their lengths is distance, indicative of the genetic or phenotypic changes and differences between the taxa. 

In topological data analysis, phylogenetic trees represent an important class of metric spaces, finite additive spaces, which exhibit no persistent topological features at any scale in degree larger than 0. In this way, techniques such as these we present can be viewed as complementary to homological methods.
\begin{figure}
\label{Tree of Life}
\begin{center}
\includegraphics[scale=.7]{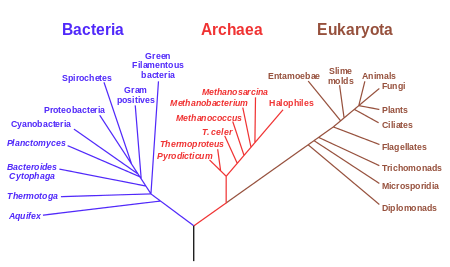}\caption{Phylogenetic Tree of Life \cite{Wiki}}
\end{center}
\end{figure}

%Phylogenetic trees also have many applications outside evolutionary biology and bioinformatics, including anatomy, comparative linguistics, personalized medicine, cultural evolution, and literary studies. 

In addition to the intrinsic distances between the taxa that a single phylogenetic tree represents, we can also define an extrinsic distance between distinct phylogenetic trees with the same set of taxa. In 2001, Billera, Holmes, and Vogtmann defined a configuration space of possible phylogenetic trees relating a set of taxa \cite{BHV}. These trees can be continuously parametrized by the topology and edge lengths, and the result is a contractible geodesic space with non-positive curvature, referred to here as tree space, or $\mathcal{T}^n$, for taxa labels $\{1,\dots,n\}$. The extrinsic distance between two trees is realized as the length of the unique geodesic between the two points in $\mathcal{T}^n$ which represent the trees.
%
%Though there are many sophisticated algorithms for inferring a tree from sampled data, because of a variety of factors, different data samples often yield different shapes (``topologies") of trees. A natural problem, then, is to aggregate these tree estimates, and represent a collection of phylogenies for a fixed taxa set in some meaningful way for analysis. 

 To combine the data of more than two trees, e.g. if $\mathbf{T} = \{T_i\}$ is a set of phylogenetic trees describing different evolutionary relationships between the taxa (leaf set) $\leafset$, $\mathbf{T}$ is represented as a set of points in $\mathcal{T}^n$. By taking the mean of $\mathbf{T}$ \cite{mop, bacak,bacak_applied,brownowen}, or clustering the points \cite{gori_etal}, or constructing confidence regions \cite{willis}, we can describe $\mathbf{T}$ in a way which incorporates the range of metric and combinatorial shape differences. 

However, there are situations in which one of the assumptions of this model, that each tree in $\mathbf{T}$ has a fixed leaf set $\leafset$, is not reasonable.  For example, with improvements in sequencing technology, many phylogenetic datasets now consist of thousands of gene trees, each of which represents the evolutionary history of a single gene in the species set of interest \cite{Maddison}.  However, not all genes appear in all species, and currently genes with an incomplete leaf-set are often discarded before beginning the analysis.  A second example is comparing parallel evolutionary chains in viruses or tumors, where some strains are comparably similar across samples (and therefore can be considered the same leaf) but are not necessarily all present in every sample \cite{ZKBR2}, i.e. each $T_i\in \mathbf{T}$ has its own leaf set $\leafset_i$ which is contained in some common larger set $[N]$. The fact that the trees $T_i$ belong to different parametrized spaces prevents us from using the techniques of BHV analysis described previously, but as we will show, tree sets with some ``combinatorial compatibility" will admit a fairly precise notion of distance which is based on the BHV metric in $\mathcal{T}^N$, with no loss of data. 

%Another application is related to 
Our approach to this problem uses the {\bf tree dimensionality reduction} map $\Psi$ defined in Zairis et al. \cite{ZKBR2}, which gives a map from a tree space $\mathcal{T}^N$ to the lower-dimensional tree space $\mathcal{T}^{\leafset}$ that contains all trees with a subset of the leaves $\leafset \subset[N]$.  This map is induced by the natural subspace projection. % $([N],d) \mapsto (\leafset,d)$. 
We will first construct the pre-image $\Psi^{-1}$ of this map, which can be used to recover information about the original tree $T$ from the images $\{\Psi_{\leafset}(T)\}$ for varying $\leafset$. This map $\Psi$ is also fundamental to the previous applications, which we solve by mapping $T_i$ to their preimages $\Psi^{-1}(T_i)$ in the common domain space $\mathcal{T}^N$, and comparing the sets.

This precise problem, of analyzing trees with different numbers of taxa collectively in $\BHV$ tree space, was first approached by Bi et al. \cite{UIUC}. They developed the theory behind the combinatorial step in Section \ref{extcombstep}, toward the goal of comparing trees with different taxa sets. The algorithm presented in that section, together with Proposition \ref{CT}, clarifies their results and shows their implications for the computation of tree dimensionality reduction and its preimage.

Analysis in $\BHV$ space is, of course, not the only way to approach problems of this type. Given the set $\{T_i\}$, it is sometimes efficient to ``prune" the trees to their common taxa $\cap_i \leafset_i$ for comparison, %using $\Psi$, 
if such a set $\cap \leafset_i$ is sufficiently large to preserve important data. In this case, any tool for analyzing sets of trees with identical taxa can then be used.  In the context of reconstructing a species tree from gene trees, the relationship between these trees is modeled by the coalescent process, and algorithms and approaches can specific to this situation can take advantage of this model \cite{Rhodes, Astral2}. To avoid making simplifying assumptions, there are also some software packages currently available which use Bayesian coalescent-based techniques, from the original data rather than trees, to assemble multiple parallel, incomplete data samples into a single tree \cite{BEAST, starBeast, BEST}. There are also algorithms, based on the (often reasonable) assumption that differences in topology arise from recombination events, that aggregate metric data into phylogenetic networks \cite{PhyloNet}. These can often accommodate non-uniform data as well. However, they share the same drawback as most classical phylogenetic tree algorithms, in that they produce a single tree or tree-like object, rather than a region of possible trees in tree space.

There is also the problem of supertree reconstruction, which aims to combine partially overlapping phylogenies into a common tree. Summaries and selected supertree methods can be found in Bininda-Emonds \cite{BE}, Akanni et al. \cite{RSOS}, Warnow \cite{warnow}, and Wilkinson et al. \cite{SB}. The techniques in this paper give a conservative (low tolerance for topological error), split-based supertree method for BHV space, which does not necessarily represent an improvement on the search for a maximum-likelihood supertree; rather, we can rigorously (rather than heuristically) define the space of possible supertrees, in a manner amenable to search, and expand the possible analyses available. 

 With the geometric framework established in this paper, we can define and compute some useful objects. First, in Section 3, we show how to efficiently compute $\Psi^{-1}(T)$, the preimage of tree $T$ under the tree reduction map, which gives all trees with the full set of leaves $N$ that map onto $T$. The algorithm, given in two parts, calculates the extension space $E_{T,n,\ell}$, which represents the set of all phylogenetic trees which can result from adding $\ell$ additional leaves to $T\in \mathcal{T}^n$. Theorem \ref{Eispreimage} shows that this construction,	which extends the results and definitions of \cite{UIUC}, coincides with $\Psi_n^{-1}(T)$ in $\mathcal{T}^{n+\ell}$.

This fact immediately gives a method of finding the set of trees $X$ which satisfy the system $\{\Psi_{\leafset_i}(X) = T_i\}$ for some collection of trees $\mathbf{T} = \{T_i\}$, and we suggest some shortcuts to speed up the process. This solution space $E_{\mathbf{T}}$ is computed efficiently in Section 4 in a method similar to the one presented in Section 3, and is shown in Proposition \ref{intext} to be the intersection of sets $\Psi_{\leafset_i}^{-1}(T_i)$ in a common domain.

Stability concerns lead us to Section 5, which first defines an approximate solution space to $\{\Psi_{\leafset_i}(X) = T_i\}$ with some parameter $\alpha$ of constant error tolerance, or $p_\alpha$ of error tolerance proportional to local size. These will be the products of Sections 5.1 and 5.2, and will allow for stability results (\ref{neighborhood}) and (\ref{alphastab}). The proposition (\ref{neighborhood}) implies an additional non-trivial fact about a set $\Psi^{-1}(T)$, that if it intersects a cubical face $\sigma \subset \mathcal{T}^N$, it intersects all cubes $\tau\supset \sigma$.

From these definitions we can define two parameters $\alpha_\mathbf{T}$ and $p_\mathbf{T}$ measuring the degree of metric distortion for a collection of trees $\mathbf{T} = \{T_i\}$ satisfying a combinatorial compatibility condition. The parameters represent the minimum error tolerance (uniform or proportional) necessary to construct a supertree from the $\{T_i\}$. These parameters will result from linear optimization problems related to the equations defining the approximate solutions spaces, and can be directly computed using the most efficient linear programming methods available.

% Additionally, the proportional parameter $p_\mathbf{T}$ can be utilized as a ``test" of compatibility, using some threshhold value for error tolerance, which is explained in Section \ref{Hyp}.

% In Section \ref{rot}, we take a first step toward extending these tree set parameters past the case of combinatorial compatibility with the definition of a rotation relaxation parameter, which can be combined with the metric parameters to give a similar threshholding test in Section \ref{rothyp}.

Some directions for future work are sketched in Section \ref{FW}. The suggested projects include giving a full extension of the definitions, computations, and parameters to tree sets which need not be combinatorially compatible, and relaxations which can exceed the boundaries of its supporting orthants. Additionally, a probabilistic framework for random trees on different leaf sets would be able to give significance to the threshholding tests, which in this paper are merely heuristic.

\section{Background}

\subsection{Phylogenetic trees}
\begin{defi} A {\bf phylogenetic tree} $T$ is an acyclic connected graph (a {\bf tree}) with
\begin{itemize}
\item No degree 2 vertices.
\item Degree 1 vertices each have a unique label. Such vertices are called {\bf leaves} of $T$. The set of leaf labels is denoted $\leafset(T)$.
\item There is a positive weight $w_e$ for each edge $e$, and the set of edges is denoted $\cE(T)$. 
\end{itemize}
% note that E(T) is the set of all edges, leaves and internal
Unless indicated otherwise, $\leafset(T) = [n]=\{1,2,\dots, n\}$ for $n$ the number of leaves. Phylogenetic trees are sometimes {\bf rooted}, meaning the tree has a distinguished leaf, the {\bf root}, often an ancestor. For this paper, we will use unrooted trees, but all results carry over to rooted trees by fixing one of the leaves as the root, and assuming this leaf is in all trees considered.  The {\bf topology} of a tree is the unweighted underlying tree with leaf labels. 
\end{defi}
Because phylogenetic trees are acyclic, the removal of an edge $e$ separates $T$ into two connected components. Since leaves are vertices in one component or another, this gives a partition of $\leafset(T)$ into the two components $P_e$ and $P_e^c = \leafset(T)\setminus P_e$, called a {\bf split} and represented as $P_e | P_e^c$. When the ground set is obvious, we will suppress the complement and give a split by the smaller of its two partition sets, or if the two partitions are the same size, with the partition containing the lexicographically first leaf.  A split is called {\bf thick} if $P_e$ and $P_e^c$ both have cardinality greater than 1, or equivalently if neither endpoint of $e$ is a leaf. Such an edge or split is called {\bf internal}.

\begin{defi}\label{compatible} Two splits $P$ and $Q$ are called {\bf compatible} if one of: $P\cap Q, P\cap Q^c, P^c \cap Q, P^c \cap Q^c$ is empty. Two splits that are not compatible are called {\bf incompatible}.
\end{defi} It is easy to see that one of these intersections being empty implies that the other three are non-empty. Compatibility of different splits $P$ and $Q$ is equivalent to the existence of a tree $T$ such that the removal of one edge of $T$ gives $P$, and the removal of another gives $Q$.

In fact there is a deep duality between phylogenetic trees and split sets: given a set of $i$ different splits on leaf set $\leafset$ which are pairwise compatible, and weights for each, there is a unique phylogenetic tree realizing them (Buneman et al., 1971 \cite{BHKT}). Conversely, for a phylogenetic tree $T$, the collection of all splits $S(T) = \{P_e\}$ (one for each internal edge $e$) is pairwise compatible. A phylogenetic tree contains at most $2|\leafset(T)| - 3$ splits, and $|\leafset(T)| - 3$ thick splits.  It will be very useful for us to have this structural equivalence between a phylogenetic tree $T$, and the split set $S(T)$ which defines its topology. We will alternately refer to an edge $e\in T$ and the partition $P_e$ it induces; for both, the weight is denoted $w_e$.

If the external (leaf) edges of $T$ are also endowed with weights, then $T$ is equivalent to an additive metric space, whose points are leaves with the weighted path metric on $T$. This is discussed further in Section 2.4.

\subsection{Tree Space}
For a fixed leaf set $\leafset$ and a set of compatible thick splits $S$ on $\leafset$, there exists a unique tree topology realizing $S$, as discussed in the previous section. We can then organize the set of all phylogenetic trees with this topology by their weight sets, ordered lexicographically by the corresponding split of each weight, in a space isometric to $\mathbb{R}_+^{|S|}$.  We can include the boundary, by allowing weights to be 0, and this gives us $\RR_{\geq 0}^{|S|}$, which is called an {\bf orthant}.  Maximal orthants have dimension $|\leafset|-3$.  %In this space, a tree $T$ with splits $S$ is represented by the non-negative vector $(w_1,w_2,\dots, w_{i})$ of split weights, $w_j$ the length of $S_j$ in $T$. 
This is illustrated in Figure~\ref{fig:tree_1}. 
The {\bf norm} of a tree $||T||$ is the $L^2$ norm of the vector of its split weights.  We will denote the lowest-dimensional orthant containing tree $T$ by $\cO(T)$, and the lowest-dimensional orthant containing all trees with exactly the splits $S$ by $\cO(S)$.  Conversely, the set of splits contained in all trees in the interior of orthant $\cO$ is denoted by $S(\cO)$.

\begin{figure}
    \centering
    \includegraphics[scale=0.4]{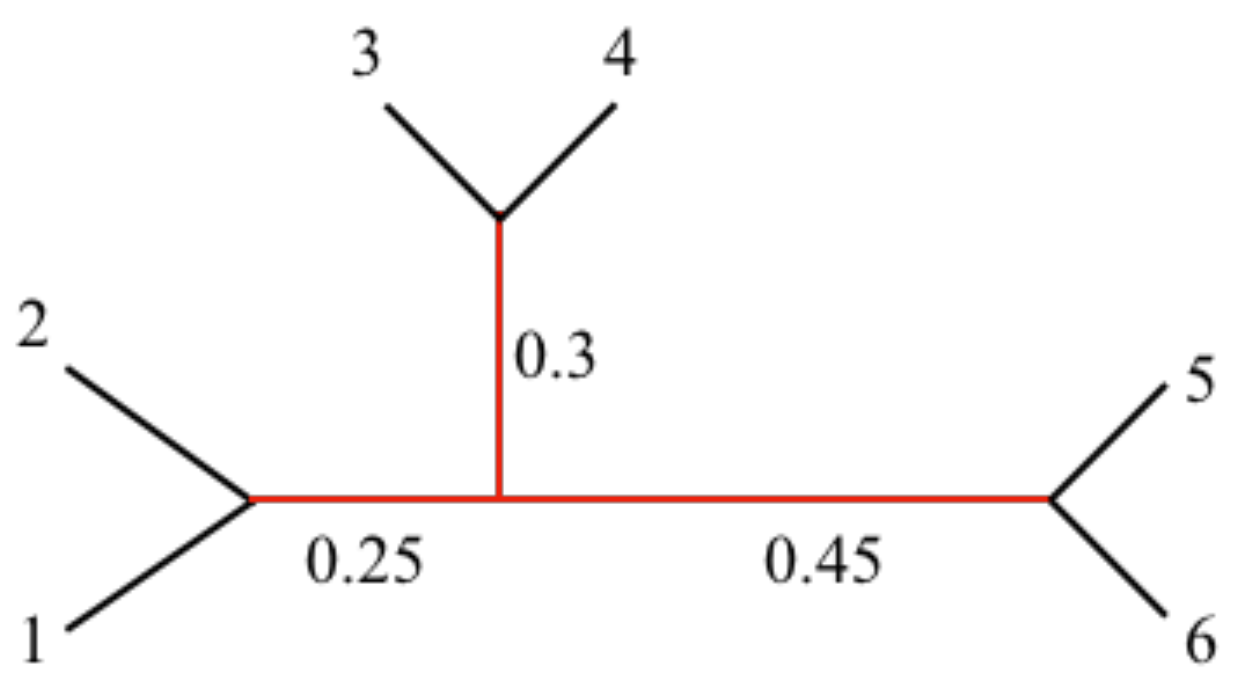}
    \includegraphics[scale=0.4]{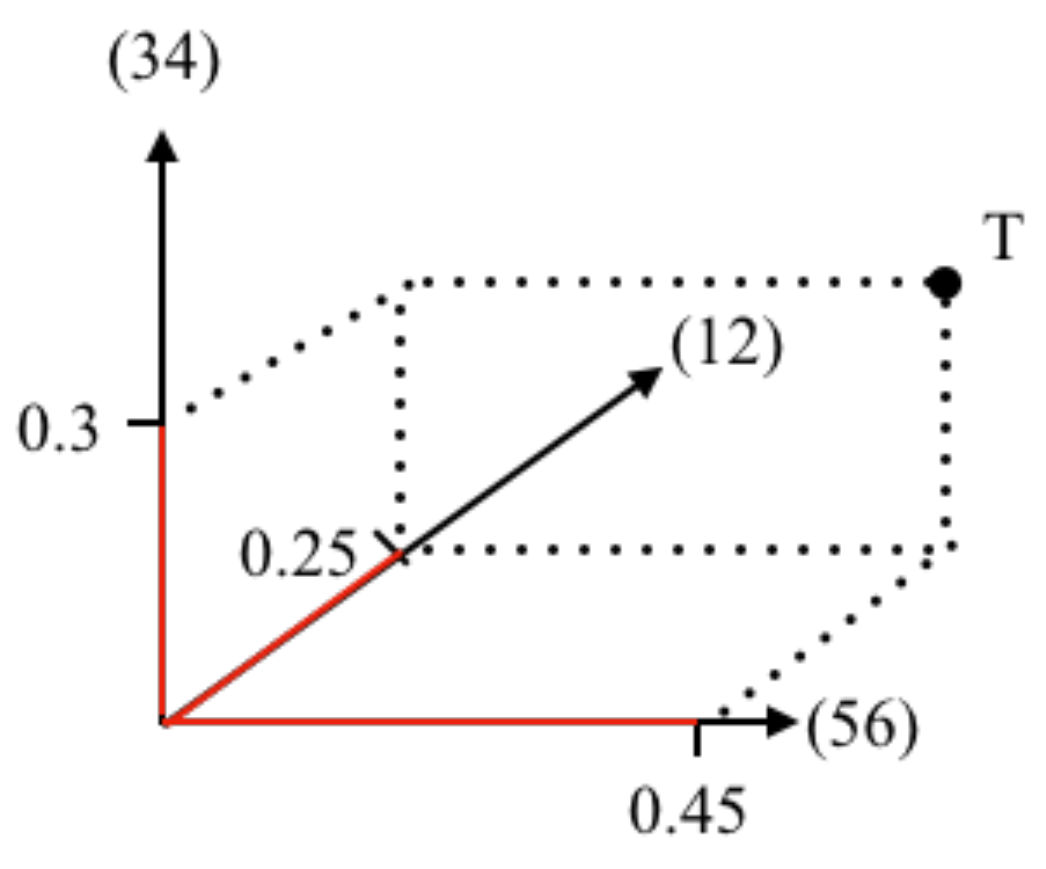}
    \caption{Left, a tree $T$ with 6 leaves, splits $(01)(2345)$, $(23)(0145)$, and $(45)(0123)$ with weights $0.25$, $0.3$, and $0.45$, respectively. Right, the point in the $\{(01),(23),(45)\}$ orthant of $\mathcal{T}^6$ representing $T$. The cone point $\mathbf{0}$ is shown at the origin.}
    \label{fig:tree_1}
\end{figure}

If two sets of compatible thick splits, $S_1$ and $S_2$, have splits in common, $C = S_1 \cap S_2$, then the orthants corresponding to $S_1$ and $S_2$ each have a boundary orthant $\RR^{|C|}_{\geq 0}$ that contains the same trees.  We identify all such common boundary orthants to produce a single space, called the {\bf Billera-Holmes-Vogtmann (BHV) treespace} and denoted $BHV_{\leafset}$, where $\leafset$ is the leaf-set of all trees.  When $\leafset = [n]$, we will alternatively write $BHV_n$ for the space. The empty split set $S = \emptyset$ produces a single point, called the {\bf cone point}, $\mathbf{0}$, which represents the unique star-shaped tree with no internal edges. The cone point is contained in each orthant at the origin, so the identified space is path-connected. We define the distance $d_{\BHV}(T,T')$ between points $T$ and $T'$ in this space to be the infimum of the lengths of all piecewise smooth paths from $T$ to $T'$, where path length is calculated by summing the $L^2$ distances of the path restricted to each orthant it passes through. The norm $||T|| = d_{\BHV}(\mathbf{0},T)$, via the straight line path from the origin to the $T$ point in the orthant containing $T$. 

The BHV treespace was first proposed by Billera, Holmes, and Vogtmann in \cite{BHV}, where they showed that it is a contractible, complete, and globally non-positively curved, or CAT(0), cube complex. Global non-positive curvature implies that there is a unique shortest path, or geodesic, between each pair of trees in the space.  %It is also geodesic, meaning the metric $d_{\BHV}$ as an infimum over paths is achieved by a (piecewise linear) path from $T$ to $T'$. 
There exists a polynomial time algorithm to calculate this path and its length, given by Owen and Provan in \cite{OP}.

For the purposes of this paper, we will have to keep track of the weights of edges ending in leaves as well, but since all trees in $\BHV_{\leafset}$ have the same leaves, and therefore the same leaf partitions, we can represent these globally with non-negative coordinates $(\mathbb{R}_{\geq 0})^{|\leafset|}$, and define {\bf tree space} $\mathcal{T}^{\leafset}$ with this product
$$ \mathcal{T}^{\leafset} := \BHV_{\leafset}\times (\mathbb{R}_{\geq 0})^{|\leafset|}$$
%Then along the boundary of $\mathcal{T}^n$ there are trees which have some leaves identified, and the cone point is again the origin in each coordinate, 
In this case, the cone point is the tree with no edges and all leaves identified into a single point. Importantly, $\mathcal{T}^{\leafset}$ has all of the important features of $\BHV_{\leafset}$: it remains connected, globally non-positively curved, and contractible.  As above, when $\leafset = [n]$, we may alternatively write $\cT^n$ for the space.  The distance $d_{\cT^{\leafset}}(T,T')$ between trees $T,T' \in \cT^{\leafset}$ can also be computed by a version of the algorithm of Owen and Provan \cite{OP}.
%, and geodesics are componentwise geodesic, i.e. $\gamma:I \to \BHV_n$ and $\tau: I \to (\mathbb{R}^+)^n$ are both geodesic iff $(\gamma,\tau):I \to \mathcal{T}^n$ is geodesic. 

\subsection{Link graph}
\begin{defi} The {\bf link} $L_{\leafset} := L_{\leafset}(\textbf{0})$ of the cone point \textbf{0} is the set of all trees in $BHV_{\leafset}$ which have internal edge lengths summing to 1.  Homeomorphically, $L_{\leafset}$ is the set of trees in $BHV_{\leafset}$ at fixed $L_1$ distance from \textbf{0}. 
\end{defi}

Because $\BHV_{\leafset}$ is a cube complex, $L_{\leafset}$ is a simplicial complex; the face maps are restrictions of face maps of the cube complex, and every $k$-face of the cube complex intersects the link in a $(k-1)$-simplex. In particular, the 0-simplices correspond to splits of length 1, the 1-simplices correspond to compatible split pairs, and $k$-simplices correspond to trees sharing the same $k$ non-zero splits which have edge lengths summing to 1. 

$\BHV_{\leafset}$ can then be expressed as a cone on $L_{\leafset}$ based at \textbf{0} (hence the name ``cone point"), with the cone dimension parametrizing magnitude. Denote the 1-skeleton of the link $L_{\leafset}^1$. The global non-positive curvature condition on $\BHV_{\leafset}$ gives that $L_{\leafset}$ is a flag complex, meaning that each $k$-clique in $L_{\leafset}^1$ bounds a $k$-simplex in $L_{\leafset}$, which corresponds uniquely to the orthant of dimension $k$ spanned by the $k$ splits. This means that $L_{\leafset}$ is recoverable from $L_{\leafset}^1$, which together encode all of the non-linearity of $\BHV_{\leafset}$. In \cite{UIUC}, and in the algorithm presented in section 3.3, $L_{\leafset}^1$ is used to calculate the (combinatorial) extension objects $G_{T_s,n,\ell}$ and $S_{T_s,n,\ell}$.

\subsection{Tree dimensionality reduction}
A weighted graph, endowed with the shortest path metric, is a metric space whose underlying set is the vertices of the graph. Acyclic graphs have unique geodesics, and so a metric tree with $n$ leaves can be equivalently considered as a metric on the set of $n$ leaves, with distance between two leaves given by the length of the unique path between them. A metric $\delta$ which arises from a tree in this way is called an {\bf additive} metric, and satisfies the four point condition:
$$\delta(a,b) + \delta(c,d) \leq \max\{\delta(a,c)+\delta(b,d),\delta(a,d) + \delta(b,c)\}$$
for all leaves $a,b,c,d$.

The four point condition is also sufficient to determine additivity, which in turn implies the existence of a unique tree realizing this metric \cite{BHKT}.  The {\bf additive distance matrix} of a tree $T$ with leaf-set $\leafset = \{\ell_1, \ell_2, ..., \ell_n\}$ is denoted $A_T$ and is an $n \times n$ matrix where the $(i,j)$-th entry is $\delta(\ell_i, \ell_j)$, the distance between leaves $\ell_i$ and $\ell_j$ in tree $T$.

A subspace of an additive metric space is additive, and additive subspaces can be seen as forming subtrees.  {\bf Tree dimensionality reduction (TDR)}, as defined in \cite{ZKBR2}, is a method of generating the tree for a subspace of an additive metric space from the original metric tree, and for a more general class of metric spaces called ``nearly" additive. In this paper we will deal exclusively with additive metric spaces. 

\begin{defi}\label{TDR} Let $T$ be a tree with leaf set $[N] = \{1,2,\dots,N\}$, and let $\leafset \subset [N]$.  The {\bf tree dimensionality reduction map} $\Psi_{\leafset}:\mathcal{T}^{[N]}\rightarrow \mathcal{T}^{\leafset}$ is the map sending $T \in \mathcal{T}^N$ to the induced subtree spanned by the leaves $\leafset$
%$[N]\setminus L$
, where the induced subtree contains the vertices and edges on the shortest paths through $T$ between the leaves in $\leafset$, with each resulting degree 2 vertex $v$ and its incident edges $(v,u_1),(v,u_2)$ with lengths $\ell_1$ and $\ell_2$ respectively, being replaced by a single edge $(u_1,u_2)$ with length $\ell_1+\ell_2$. We refer to this process as {\bf concatenation} of $(v,u_1)$ and $(v,u_2)$. 
\end{defi}

We will also consider just the combinatorial reductions of splits, which we will refer to as {\bf projections}, and which simply remove some of the leaves from one or both partitions of a split.  For a split $P|P^c$ on leafset $[N]$, the projection onto the leaf-set $\leafset \subset [N]$ is the split $(P \cap \leafset)(P^c \cap \leafset)$.  Note that one of $(P \cap \leafset)$ or $(P^c \cap \leafset)$ may be empty, in which case we would then discard this split.  Since the tree dimensionality map $\Psi_{\leafset}$ operating on tree $T \in \cT^N$ has the effect of projecting all splits $S =S(T)$ onto the leaf-set $\leafset$, we will abuse notation and use $\Psi_{\leafset}(S)$ to represent this combinatorial projection. 
% and leaves $ and leaf $g \in \leafset$, $\Psi_{\bar{g}}(P)$ will be the split $(P\setminus\{g\})(P^c)$ if $g \in P$, and $(P)(P^c \setminus \{g\})$ if not. Thus a $g$-pruning of a tree induces a combinatorial pruning of each of its splits. %When not indicated otherwise, $\Psi_{n}$ will be the projection onto the first $n$ labels of a metric tree (with edge concatenation), a combinatorial tree (without edge concatenation), or a split. In this way $\Psi$ is a dimension reduction on $\mathcal{T}^{n+1}$, $\BHV_{n+1}$ and $L_{n+1}^1$ to the $n$-dimensional versions of each.  \megan{fix notation here... I'm not sure $\Psi$ should operate on both trees and splits} \gill{$\Psi$ acts in a related way on both, which I used pretty heavily in the next chapter; I wouldn't mind altering the notation but it should still be very similar. $\Psi^*$ or something?}

The following result states that the dimensionality reduction should also give the tree which will be constructed from partial information.

\begin{prop}[{{\cite[Proposition~4.4]{ZKBR2}}}]
\label{prop:subset_additive_matrix}Let $T$ be a tree with leaf set $[N] = \{1,2,\dots,N\}$, and additive distance matrix $A_T$. Let $\leafset \subset [N]$, and define $(A_T)_{\leafset}$ to be the submatrix of $A_T$ with rows and columns indexed by $\leafset$. Then $A_{\Psi_{\leafset}(T)} = (A_T)_{\leafset}$.
\end{prop}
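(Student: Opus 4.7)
The plan is to show, for every pair of leaves $a, b \in \leafset$, that the path length between $a$ and $b$ in $\Psi_{\leafset}(T)$ equals the path length between them in $T$; this is exactly the statement $A_{\Psi_{\leafset}(T)}(a,b) = (A_T)_{\leafset}(a,b)$.

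First, I would fix an arbitrary pair $a,b\in\leafset$ and consider the unique path $\pi_{ab}$ from $a$ to $b$ in $T$, whose total edge-weight is $A_T(a,b) = \delta(a,b)$. Because $\Psi_{\leafset}(T)$ is built (by Definition~\ref{TDR}) from the union of shortest paths in $T$ between leaves of $\leafset$, the path $\pi_{ab}$ is entirely contained in the pre-concatenation induced subgraph $T'$ spanned by $\leafset$. At this intermediate stage, the sum of edge weights along $\pi_{ab}$ in $T'$ is the same as in $T$, since the edges and their weights are inherited unchanged.

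Next, I would account for the concatenation step. A degree-2 vertex $v$ on $\pi_{ab}$ has incident edges $(v,u_1)$ and $(v,u_2)$ with weights $\ell_1$ and $\ell_2$; these are replaced by a single edge $(u_1,u_2)$ of weight $\ell_1+\ell_2$. This operation preserves the total weight along any path passing through $v$: two consecutive contributions $\ell_1$ and $\ell_2$ are replaced by the single contribution $\ell_1+\ell_2$. Degree-2 vertices off the path $\pi_{ab}$ are irrelevant to the $a$-to-$b$ distance. Iterating over all suppressed degree-2 vertices along $\pi_{ab}$ yields a path in $\Psi_{\leafset}(T)$ from $a$ to $b$ of the same total weight. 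Since $\Psi_{\leafset}(T)$ is a tree, this path is the unique one between $a$ and $b$, so it realizes $A_{\Psi_{\leafset}(T)}(a,b)$.

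Combining these two observations, $A_{\Psi_{\leafset}(T)}(a,b) = A_T(a,b) = (A_T)_{\leafset}(a,b)$ for every $a,b \in \leafset$, which gives the matrix identity. The only subtlety, and the mild obstacle in writing this carefully, is checking that the induced-subgraph construction genuinely contains $\pi_{ab}$ and is itself a tree (so that ``the'' path is well-defined both before and after concatenation); this follows from acyclicity of $T$ together with the fact that a union of geodesic paths in a tree between points of a common set is again a subtree. Everything else is a routine bookkeeping argument about edge weights under the concatenation operation.
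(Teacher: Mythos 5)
Your argument is correct. Note that the paper itself gives no proof of this proposition---it is imported verbatim as Proposition~4.4 of Zairis et al.~\cite{ZKBR2}---so there is no in-paper argument to compare against; your direct verification (the unique $a$--$b$ path survives into the induced subtree with its edge weights intact, and suppressing a degree-2 vertex merely replaces two consecutive contributions $\ell_1,\ell_2$ by the single contribution $\ell_1+\ell_2$) is the standard one and is complete, including the subtlety that only vertices of degree 2 \emph{in the induced subgraph} are concatenated while higher-degree vertices on the path simply persist.
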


Note that this formulation implies that certain dimension reductions act like projections: if $\leafset \subset \leafset' \subset [N]$, then $\Psi_{\leafset}\circ\Psi_{\leafset'} = \Psi_{\leafset}$ on $\mathcal{T}^{N}$.   

\begin{exmp} Starting with the tree on the left in Figure~\ref{fig:trees_2}, tree dimensionality reduction to the leaf set $\{1,2,3,4\}$ is performed by first pruning the 5th leaf and its leaf edge, which gives the center tree. This tree has a degree 2 vertex, in red, which is removed, its boundary edges concatenated, to produce the final tree on the right. 

\begin{figure}
    \centering
    \includegraphics[scale=0.5]{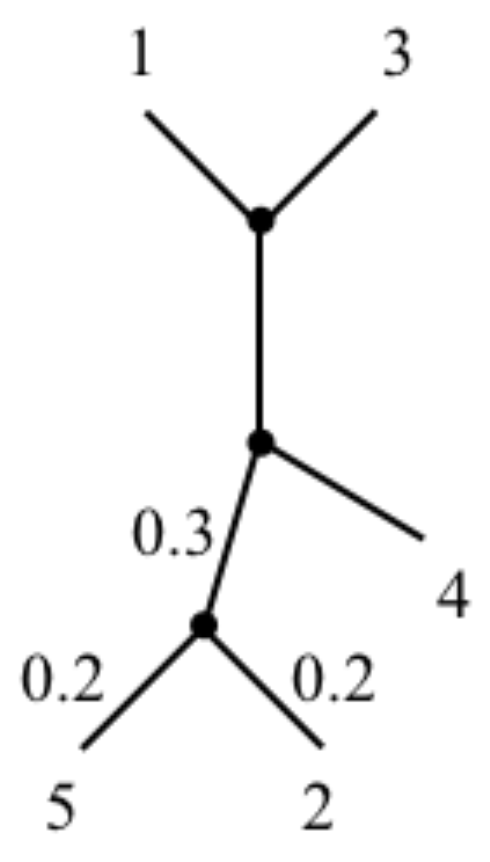} \quad\quad\quad
    \includegraphics[scale=0.5]{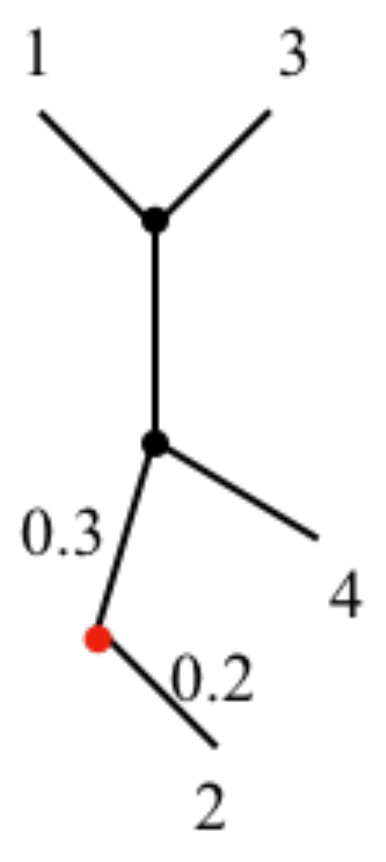} \quad\quad\quad
     \includegraphics[scale=0.5]{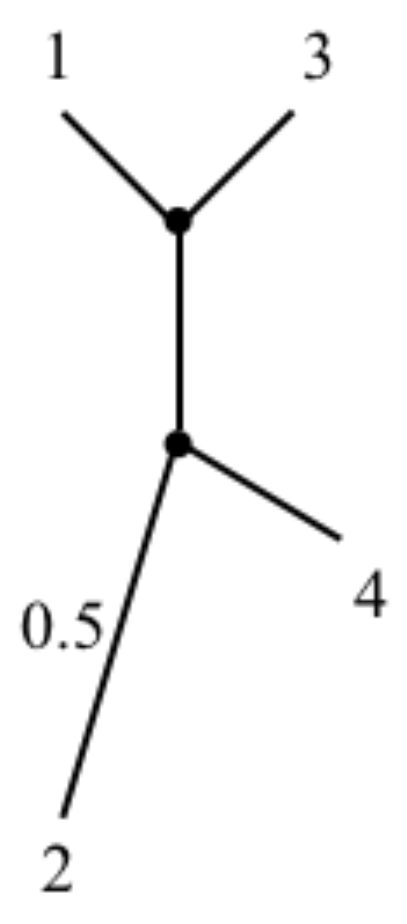}
    \caption{Left, a tree with 5 leaves. Center, the tree with leaf 5 and its edge deleted, resulting in a degree two vertex (in red).  Right, the tree after concatenating the two edges adjacent to the degree two vertex.}
    \label{fig:trees_2}
\end{figure}

\end{exmp}

\section{The Pre-Image of the Tree Dimensionality Reduction Map}
%\section{Extension of $n$-tree to $\mathcal{T}^{n+\ell}$}
The aim of this section will be to algorithmically construct the preimage of the tree dimensionality reduction map $\Psi_{\leafset}:\mathcal{T}^N\to \mathcal{T}^{\leafset}$, for $\leafset \subset [N]$, $|\leafset| = n$. We start with a binary tree $T \in \mathcal{T}^{\leafset}$ with %leaf labels $L(T)=\{1,2,\dots,n\}$ and 
edge lengths $w_e$ for $e\in \cE(T)$, and want to describe and compute the set of all trees $\bar{T} \in \cT^N$ such that $\Psi_{\leafset}(\bar{T}) = T$.
%
%. Then for $N = n+\ell$, the TDR map $\Psi_n: \mathcal{T}^N\to \mathcal{T}^n$ sends trees $\bar{T}\in \mathcal{T}^N$ with leaf labels $\{1,2,\dots,n,n+1,\dots,n+\ell = N\}$ to the metric subspace spanned by the first $n$ leaves, as prescribed by Definition \ref{TDR}. 
%
%Our goal, then, will be to define and compute the set of all $\bar{T} \in \mathcal{T}^N$ such that $\Psi_n(\bar{T}) =T$. 
Since by Proposition~\ref{prop:subset_additive_matrix}
the distance of the leaves $N \backslash \leafset$ to each other and to the leaves $\leafset$ does not affect the distance between the leaves $\leafset$, 
%$n+1,\dots, n+\ell$ to each other and to the first $n$ leaves does not affect the subspace spanned by $\{1,2,\dots,n\}$, 
many different tree topologies can map to $T$ under $\Psi_{\leafset}$.  Thus it is not immediately obvious how this set $\Psi_n^{-1}(T)$ should be described.

As this section demonstrates, one effective approach is to:
\begin{enumerate}
\item Note that for any $\bar{T} \in \cT^N$, the topology of the image $\Psi_{\leafset}(\bar{T})$ is completely determined by the topology of $\bar{T}$, and $\Psi_{\leafset}$ acts linearly on the $\cE(\bar{T})$ edge weights in the orthant $\cO(\bar{T})$ in $\mathcal{T}^N$.  Thus, for a fixed maximal orthant of $\mathcal{T}^N$, $\Psi_{\leafset}$ restricts to a linear map $M:\mathbb{R}^{2N-3} \to \mathbb{R}^{2n-3}$.  Any non-maximal orthant is on the boundary of at least three maximal orthants, and the linear map of any of these maximal orthants can be used.

\item Find the orthants with a topology $\bar{T}$ such that $\Psi_{\leafset}(\bar{T})$ has the same topology as $T$. By Proposition \ref{CT}, these orthants can be determined by individual and pairwise properties of their splits, a surprising result.

\item For a fixed orthant $\cO$, form the matrix $M^{\cO}$ which encodes the way the edges of trees in $\cO$ concatenate under $\Psi_{\leafset}$.

\item Find the positive solutions of the linear system of equations $M^{\cO}\mathbf{x}^{\cO}=\mathbf{w}$, where $\mathbf{w}$ is the vector of edge weights in $T$, to determine the points $\bar{T}\sim\mathbf{x}^{\cO}\in \cO$ such that when $\Psi_{\leafset}$ is performed, all of the edges of $\bar{T}\in \cO$ which concatenate to form an edge $e\in T$ have weights summing to $w_e$.

\item Take the union of all of the orthant-wise solutions, and call this the {\bf extension space} $E^N_T$.
\end{enumerate}

We will show that $E^N_T = \Psi_{\leafset}^{-1}(T) \subset \mathcal{T}^N$, and that the resulting space is connected, continuous, piecewise linear, of local dimension $2(N-n)$, and computable in cubic time relative to its size.  We call the above algorithm the {\bf extension algorithm}.

Note that we will assume that $T$ is binary, since an unresolved tree is often used in biology when the underlying relationship of certain leaves or subtrees is not known.  In such cases, the edge lengths near the unresolved vertex would not necessarily represent the expected length of their corresponding split in the true tree, which is the main assumption of this paper.  Thus we focus on binary trees in this paper, and leave incorportating unresolved trees into this framework for future work.

%%%%
%   Metric connection step 1 
%%%%
\subsection{Extension by one leaf}

To give some intuition for how the extension space relates to the original tree, and to show the mechanics of the base case for later results, we first examine the case where $N = |\leafset|+1$. This means finding the set of trees $\Psi_{\leafset}^{-1}(T)$ which have one additional leaf, labeled $g$.  % changed the leaf to be $g$ so it doesn't conflict with $ell$ = # of new leaves, as used in the rest of the paper

\begin{defi} Let $\Psi_{\bar{\ell}}:\mathcal{T}^N \to \mathcal{T}^{N \backslash g}$ be the tree dimensionality reduction map which deletes leaf $g \in [N]$ and its adjacent edge, and concatenates the two edges at leaf $g$'s attachment point.  We will refer to this reduction as an $g$-{\bf pruning}.
\end{defi}

The reverse of pruning a leaf $g$ is attaching a new leaf $g$ to the tree with a new edge.  We call this attachment operation grafting.

\begin{defi} For a tree $T \in \mathcal{T}^{\leafset}$, the tree $\bar{T}$ is a {\bf $g$-grafting} of $T$ if $\leafset(\bar{T}) \backslash \leafset(T) = \{g \}$%, the edge ending in leaf $g$ in $\bar{T}$ has length $x \leq 0$ \megan{$>$ or $\leq$?}
, and $\Psi_{\bar{g}}(\bar{T}) = T$.
\end{defi}

In other words, a grafting of $T$ consists of a tree identical to $T$, but with one additional leaf $g$ and its leaf edge $e_g$. In considering the possibilities for such a grafting, there are two independent choices: the non-negative length of $e_g$, and a point on $T$ at which to graft the non-leaf end. The next lemma shows the consequences of this, and a bit more.

 \begin{lemma}
 \label{l:grafting}
 For tree $T \in \cT^{\leafset}$ and leaf $g \notin \leafset$, the space of $g$-graftings of $T$, denoted $\Psi_{\bar{g}}^{-1}(T)$, is the direct product of $\mathbb{R}_{\geq 0}$ and a piecewise-linear connected curve which is graph-isomorphic to $T$ and which intersects a strict subset of orthants each in a 1-dimensional linear curve.
 %the direct product of a piecewise-linear connected curve which is graph-isomorphic to $T$ and intersecting a strict subset of orthants each in a 1-dimensional linear curve, and $\mathbb{R}^+$.
\end{lemma}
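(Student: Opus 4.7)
The plan is to decompose a $g$-grafting of $T$ into two independent pieces of data: the non-negative length $w_{e_g} \in \RR_{\geq 0}$ of the new leaf edge incident to $g$, and the attachment point $p$ on $T$ where the other endpoint of $e_g$ meets $T$. Since the pruning map $\Psi_{\bar g}$ deletes $e_g$ outright, the value of $w_{e_g}$ has no bearing on $\Psi_{\bar g}(\bar T)$; it ranges freely over $\RR_{\geq 0}$ and yields the $\RR_{\geq 0}$ factor of the asserted direct product. What remains is to identify the space of admissible attachment points with a piecewise-linear connected curve graph-isomorphic to $T$.

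For an attachment point $p$ lying in the relative interior of an edge $e \in \cE(T)$ with endpoints $u, v$ and weight $w_e$, the tree $\bar T$ has the same splits as $T$ (with $\leafset$ extended on the appropriate side to include $g$), except that $e$ is subdivided at $p$ into two sub-edges of weights $s$ and $w_e - s$, where $s$ is the $T$-distance from $u$ to $p$. If $P_e^u, P_e^v$ denote the two sides of the split $P_e$ induced by $e$ in $T$, then one sub-edge carries the split $P_e^u \,|\, P_e^v \cup \{g\}$ and the other the split $P_e^u \cup \{g\} \,|\, P_e^v$, while all remaining splits of $\bar T$ coincide with those of $T$. As $p$ varies along $e$, the tree $\bar T$ traces a one-dimensional linear segment in a single maximal orthant $\cO_e \subset \cT^N$; and since $\Psi_{\bar g}$ deletes $e_g$ and reconcatenates the two sub-edges back into a single edge of weight $s + (w_e - s) = w_e$, we have $\Psi_{\bar g}(\bar T) = T$ throughout the segment.

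Finally, I would glue these linear segments at the vertices of $T$. When $p$ coincides with an interior vertex $v$ of $T$, the leaf $g$ attaches directly at $v$, no new thick split is created, and $\bar T$ lies on the codimension-one boundary shared by all $\cO_e$ with $e$ incident to $v$; when $p$ coincides with a leaf $\ell \in \leafset$, the sub-edge from $p$ to $\ell$ collapses, producing a terminal boundary point of the curve. The resulting piecewise-linear object has exactly one segment per edge of $T$, joined exactly as $T$ joins its own edges, so it is connected and graph-isomorphic to $T$. The strict-subset claim follows from the count that at most $|\cE(T)|$ maximal orthants of $\cT^N$ are visited, which is in general far smaller than the total number of maximal orthants of $\cT^N$. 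I expect the principal obstacle to be the vertex-gluing verification --- confirming that the limits of the linear segments from distinct incident edges of $T$ agree as a single point of $\cT^N$ that correctly represents grafting at the vertex itself --- after which the remaining conclusions follow essentially from the construction.
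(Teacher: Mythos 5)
Your proposal is correct and follows essentially the same route as the paper: factor out the free leaf-edge length as the $\mathbb{R}_{\geq 0}$ component, parametrize attachment along each edge $e$ by the two sub-edge weights summing to $w_e$ (giving a line segment in one maximal orthant), and glue segments at vertices of $T$ by checking that the degenerate endpoints from incident edges have identical split and weight sets. The vertex-gluing step you flag as the main obstacle is exactly the verification the paper carries out explicitly, and your sketch of it is the right one.
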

\begin{proof}
Consider any tree $T \in \cT^{\leafset}$, leaf $g \notin \leafset$ and length $x \geq 0$.  Recall that $\cE(T)$ is the set of edges of tree $T \in \cT^{\leafset}$, with each edge $e \in \cE(T)$ having split $P_e$ and length $w_e$. 

%Let $ w_1, P_1,w_2,P_2\dots,w_k,P_k,w_{k+1},P_{k+1},\dots,w_{2k-3},P_{2k-3}$ be the respective lengths and partitions for each edge, including internal and leaf edges, in a $k$-tree $T$.

We can attach a new edge $e_g$ of length $w_g$ ending in leaf $g$ to any point, including an endpoint, on any edge of $T$ to get a $g$-grafting of $T$.  Thus the set of $g$-graftings of $T$, $\Psi_{\bar{g}}^{-1}(T)$, is not empty.  For any $\bar{T} \in \Psi_{\bar{g}}^{-1}(T)$, its additive metric $A_{\bar{T}}$ restricted to the leaves $\leafset$ is just the additive metric of $T$, $A_T$. 
%
%If $\bar{T}$ is a $(k+1,x)$-grafting, then $\bar{T}$ contains an isometric copy of $T$ (as a metric space) on the same label set $\{1,2,\dots,k\}$, and the $(k+1)$-edge intersects the copy of $T$ at some point on $T$.
%
It follows $\bar{T}$ can be completely characterized by two independent choices: the choice of point on $T$ for grafting, the space of which is graph-isomorphic to $T$, and a choice of length for the grafted leaf edge, which can be any non-negative real number.

%If $\bar{T}$ is a full $(k+1,x)$-grafting, then the $(k+1)$-edge intersects the tree in the interior of some edge. 
Let $e \in \cE(T)$ be the edge to which $e_g$, which has split $\bar{P}_g = (g)(\leafset)$, will be grafted to form $\bar{T}$.  If we are grafting $g$ to a vertex of $T$, then choose $e$ to be one of the edges adjacent to this vertex.  For each edge $f \in \cE(T) \backslash e$, the two partitions of the leaves in the corresponding split $P_f$ induce two subtrees of $T$, and edge $e$ is completely contained in one of these subtrees.  Add leaf $g$ to the partition of $P_f$ corresponding to this subtree to get $\bar{P_f}$, the corresponding split in $\bar{T}$.  The split $P_e$ becomes the splits $\bar{P_e}^L = (P_e)(P_e^c \cup g)$ and $\bar{P_e}^R = (P_e \cup g)(P_e^c)$ in $\bar{T}$.  If $e_g$ was grafted to an endpoint of $e$, then one of $\bar{P_e}^L, \bar{P_e}^R$ will have zero weight, but we will still include it here as a split for consistency. Thus $\bar{T}$ has precisely the splits 
$\{\bar{P_f}: f \in \cE(T) \backslash e\} \cup \bar{P}_g \cup \bar{P_e}^L \cup \bar{P_e}^R$.  
%$P_1',P_2',\dots, \hat{P_i}',\dots , P_{2k-3}', P^{(k+1)}_{2k-2}$, where $P_j'$ is $P_j$ with the label $k+1$ in the same partition as the descendants of the endpoints of the $P_i$ edge, and $P^{(k+1)}_{2k-2}$ is the leaf split $(\{1,2,\dots,k\},\{k+1\})$. 
%Then instead of $P_i$, we have two new edges, as $P_i$ is split into two, one $P_i^L = (P_i \cup k+1) (P_i^c)$, and $P_i^R = (P_i)(P_i^c \cup k+1)$. This fully characterizes the splits of $\bar{T}$. 

For each edge $f \in \cE \backslash e$, the weight of split $\bar{P_f}$ in $\bar{T}$ is the same as the weight of split $P_f$ in $T$, since the edge corresponding to $\bar{P_f}$ projects to the edge corresponding to $P_f$ without distortion.  Thus, we will represent the weight of edge $f$ in $\bar{T}$ by $w_f$ as well.  Split $\bar{P_g}$ has weight $w_g$, and let splits $\bar{P_e}^L$ and $\bar{P_e}^R$ have weights $w_e^L$ and $w_e^R$, respectively. Then the space of all $\bar{T}$ formed by grafting leaf $g$ to edge $e$ is a two-parameter family satisfying $w_e = w_e^L + w_e^R$, and $w_g, w_e^L,w_e^R \geq 0$.  Note that $w_g$ is a free parameter, and $w_e = w_e^L + w_e^R$ is the equation of a line.  Thus this solution space in this orthant is the direct product of $\RR_{\geq 0}$ with the line that intersects the orthant boundaries at $w_e^L = 0, w_e^R = w_e$ and at $w_e^L = w_e, w_e^R = 0$. 

%Then the space of all such $\bar{T}$ is a two-parameter family satisfying $w_f' = w_j$ for $j =1,2,\dots,\hat{i},\dots,2k-3$, $w_i = w_i^L + w_i^R$, and $w_{2k-2}, w_i^L,w_i^R \geq 0$, where the $w_e$'s represent the lengths of their respective partitions as before, as $P_j'$ projects to $P_j$ without distortion and $P_i^L \cup P_i^R$ both project to $P_i$ with additive length, and $P_{2k-2}$ is a free parameter. So if all $w_j$ and $w_i$ are fixed, then we have a two dimensional linear space of pre-images of $\Psi_{\bar{k+1}}$ in this particular orthant, intersecting the $w_i^L = 0$ and $w_i^R = 0$ hyperfaces in $w_i^R = w_i$ and $w_i^L = w_i$, respectively.

It remains to show that the lines given by $w_e^L + w_e^R = w_e$ in each orthant are connected and graph isomorphic to tree $T$.  Let $e$ and $e'$ be two adjacent edges in $T$, separated by vertex $v$.  Edges $e$ and $e'$ are compatible because they exist in the same tree, and thus the intersection of one partition from each split is empty.  Without loss of generality (by temporarily renaming the partitions if necessary), assume that $P_e \cap P_{e'} = \emptyset$.  Then the case $w_e^L = w_e, w_e^R = 0$ corresponds to a tree with splits $\bar{P_e}^L = (P_e)(P_e^c \cup g)$, with weight $w_e$, and $\bar{P_{e'}} = (P_{e'})(P_e \cup g)$, with weight $w_{e'}$, as well as splits $\bar{P_f}$, with weight $w_f$, for all $f \in \cE(T) \backslash \{e,e'\}$, and $\bar{P}_g$, with weight $e_g$.  The case $w_{e'}^L = w_{e'}, w_{e'}^R = 0$ corresponds to a tree with splits $\bar{P_{e'}}^L = (P_{e'})(P_{e'}^c \cup g)$, with weight $w_{e'}$, and $\bar{P_e} = (P_e)(P_e \cup g)$, with weight $w_e$, as well as splits $\bar{P_f}$, with weight $w_f$, for all $f \in \cE(T) \backslash \{e,e'\}$, and $\bar{P}_g$, with weight $e_g$.  But these are identical split and weight sets, and thus the two line endpoints coincide.  Since the two of these line segments meet if and only if they correspond to attaching leaf $g$ to adjacent edges in $e$, we get that the piecewise-linear connected curve is graph-isomorhpic to $T$.
\end{proof}

\begin{exmp}
\label{ex:running_start}
Suppose we have a tree $T$ with labels $\{1,2,3,5\}$ as depicted in Figure~\ref{fig:trees_3}, with leaf edges having length $\{0.15,0.3,0.2,0.25\}$ respectively, and interior edge length $0.2$. The corresponding additive distance matrix (indexed respectively) is given by
$$ A_{T} = \left(
\begin{array}{l l l l }
0 & .65 & .35 &  .6 \\
.65 & 0 & .7  &  .55\\
.35 & .7 & 0 &  .65 \\
.6 & .55 & .65 &  0 
\end{array}\right)$$ Then the preimage of $\Psi_{\bar{4}}$ is the product of the subspace of $\mathcal{T}^5$ depicted on the right in Figure~\ref{fig:trees_3} (with leaf edge length for $1,2,3,5$ determined uniquely by the point on $\Psi_{\bar{4}}(T)$ below) and the copy of $\mathbb{R}_{\geq 0}$ (not shown) representing the ``4"-leaf edge length. If we fix the length $y$ of the 4 leaf, the $(4,y)$-grafting of $T$ is the subspace shown by a thick line, together with unique local leaf coordinates $$(w_1,w_2,w_3,w_4,w_5) = (0.15-x_{(14)}, 0.3-x_{(24)}, 0.2 - x_{(34)}, y, 0.25- x_{(45)})$$
where $x_{(14)}, x_{(24)}, x_{(34)}, x_{(45)}$ are the weights of splits $(14),(24),(34),(45)$, respectively, if that split exists in the tree, and 0 otherwise. 

While it may appear that the four line segments corresponding to grafting $g$ to a leaf edge end mid-orthant, this is only because the figure omits the dimensions of those orthants corresponding to the leaf edges.  The line segments ends on boundaries where the respective leaf edge lengths are 0.

\begin{figure}
    \centering
    \includegraphics[scale=0.5]{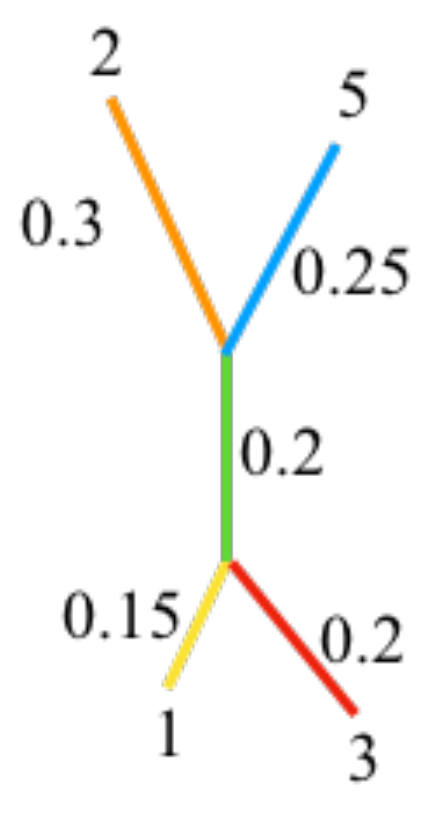} \quad\quad\quad
    \includegraphics[scale=0.5]{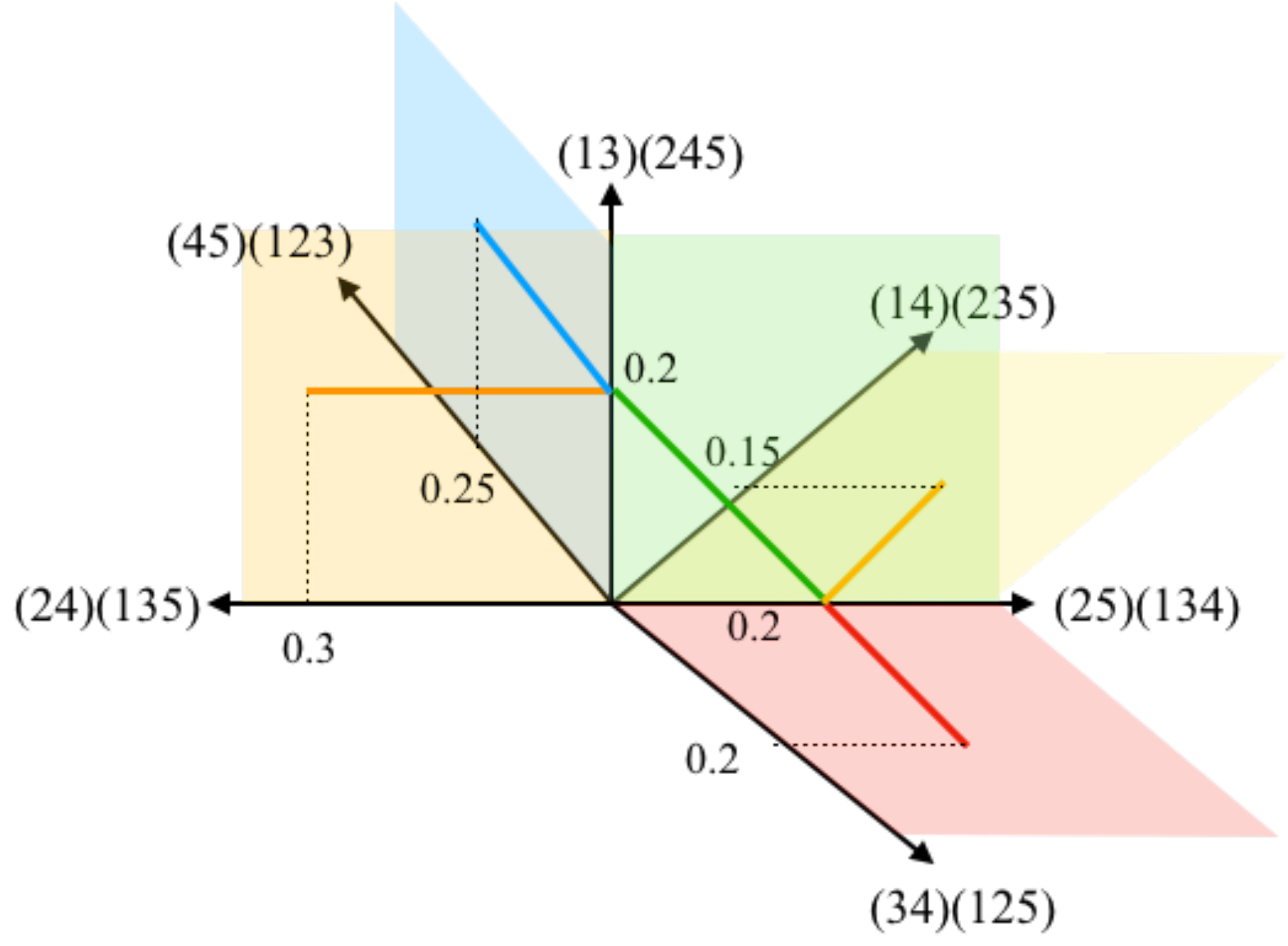}
    \caption{Left, a tree $T$ with 4 leaves, $\{1,2,3,5\}$. Right, the orthants of $\cT^5$ containing the preimage $\Psi^{-1}_{\bar{4}}(T)$, with the subspace corresponding to the preimage shown with the thick solid lines. Note that the dimensions corresponding to the 4 leaf edges lengths were not included for clarity.}
    \label{fig:trees_3}
\end{figure}

\end{exmp}

\subsection{Extension by Multiple Leaves}

As defined in \cite{UIUC}, the {\bf connection cluster} $C_{S(T),n,\ell}$ of a tree topology $S(T)$ on leaf set $[n] =\{1,2,\dots,n\}$ is the set of binary tree topologies with $n+\ell$ leaves obtained from adding $\ell$ leaves to arbitrary edges of $T$. We will generalize the definition of a connection cluster to allow the leafset $\leafset$ of $T$ to be any subset of $[N] = \{1, 2, ..., N\}$, and use the notation $C^N_T$, where $T \in \cT^{\leafset}$ and $\leafset \subset N$.  Throughout this section, we will still assume that $|\leafset| = n$, and $N = n + \ell$.  %Since the leaf set and split set are given in the presentation of $T$, we will often suppress the $n$ and $\ell$, and write $C^N_T$ to denote the extension of $S(T)$ to $\mathcal{T}^N$, for $N = n+\ell$. 
The {\bf connection space} $S_{S(T),n,\ell}$ in the notation of \cite{UIUC}, or $S^N_T$ in our notation, is the union of the closed orthants in $\cT^N$ that represent the elements of $C_N^T$, i.e. a non-negative real orthant for every unweighted tree in $C_N^T$ under the normal identification of faces. The {\bf connection graph} $G_{S(T),n,\ell}$, or with a change of notation, $G^N_T$, is the intersection of $S^N_{T}$ with the link $L_{N}^1$, in which maximal cliques give elements of $C^N_T$.  In \cite{UIUC} and Lemma 3.7 below, it is shown that the edges of a connection graph are determined by normal pairwise compatibility of splits in $\mathcal{T}^N$, which allows for quick computation of $C_N^T$.

The connection space $S_T^N$ can also be seen as the preimage in $\mathcal{T}^{N}$ under $\Psi_{\leafset}$ of the entire orthant represented by $S(T)$, namely $\Psi^{-1}_{\leafset}(\cO(T)).$ %, under contraction of the $\ell$ additional leaves specified by tree dimensionality reduction, and  
Similarly, the connection graph $G_T^N$ is the corresponding preimage of the complete $n$-graph on $S(T)$.  We are then interested in the subspace of $S_T^N$, restricted by the edge lengths of $T$, which projects under tree dimensionality reduction to $T$. This subspace will be a $2\ell$-dimensional linear submanifold supported in $S_T^N$. In other words, once the combinatorics of the extended trees are calculated through the connection cluster, 
%once the combinatorics of the dimension increase are calculated, 
we can use a set of $(2n-3)$ linear equations parametrized by the edge lengths in $T$ %the $n$-tree 
to constrain sums of fixed edges in $\cT^N$ %$(n+\ell)$-space
, and give the complete preimage $\Psi_{\leafset}^{-1}(T)$.

\subsection{Calculating the Metric Extension Space}
In this section we will construct, for phylogenetic tree $T \in \mathcal{T}^n$, the subset $E_T^N \subset S_N^T \subset\mathcal{T}^N$ which results from gluing $\ell$ leaves of arbitrary length to the metric tree $T$. The computation of the extenstion space $E_T^N$ has two steps: 

The first step is the computation of $S_T^N$, via the method in \cite{UIUC} for constructing $G_T^N$ and $C_T^N$. We will see that this is the preimage under $\Psi_{\leafset}$ of the orthant containing $T$.

 The second step introduces the constraint that under the action of $\Psi_{\leafset}$ on $S_T^N$, the process of deleting and concatenating edge lengths as described in Definition \ref{TDR} yields $T$ precisely. To find the trees which satisfy this constraint, we solve a system of linear equations separately for each orthant in $S_T^N$.

\subsubsection{Combinatorial Step}\label{extcombstep}
As in the previous section, we let $\{P_e\}_{e \in \cE(T)}$ be the splits of $T$ (including the leaf edges), with corresponding lengths $\{w_e\}_{e \in \cE(T)}$. We will first state the algorithm for computing the connection cluster $C^T_N$ and give an example, before proving correctness.\\

\hspace{-1.5em}{\bf Computation of 
Connection Cluster}
\begin{enumerate}
\item For each $P_e$, construct the set $\mathbf{Q}_e$ of splits projecting to $P_e$ by adding the $\ell$ labels $N \backslash \leafset$ to $P_e$ or $P_e^c$ in all possible $2^\ell$ ways. 

\item Take the union $\mathbf{Q} = \cup_{e \in \cE(T)} \mathbf{Q}_e$ to get the vertices of the connection graph $G_T^N$.  Add an edge between each pair of vertices if and only if the two splits are compatible, which can be checked by the condition given in Definition~\ref{compatible}.

\item Find all maximal ($n+\ell-3$) cliques in the subgraph of thick partitions, which is found by removing the leaf splits, $\{Q \in \mathbf{Q}:|Q| = 1 \}$. Extend each maximal clique to include the leaf partitions, which are compatible with all other partitions, and return the corresponding set of cliques $C^N_T$.
\end{enumerate}

\begin{exmp} Returning to the tree in Example 3.4, we find $C^5_T$ using the above algorithm. The set of splits $S(T) =\{25|13,1|235,2|135,3|125,5|123\}$
, so in Step 1, we find the set
$$\mathbf{Q} =\{13|245,25|134,14|235,24|125,34|125,45|123,1|2345,2|1345,3|1245,4|1235,5|1234\}$$
In the second step, we form the graph $G^5_T$, which is shown in Figure~\ref{fig:connection_graph}. 

\begin{figure}
    \centering
    \includegraphics[scale=0.5]{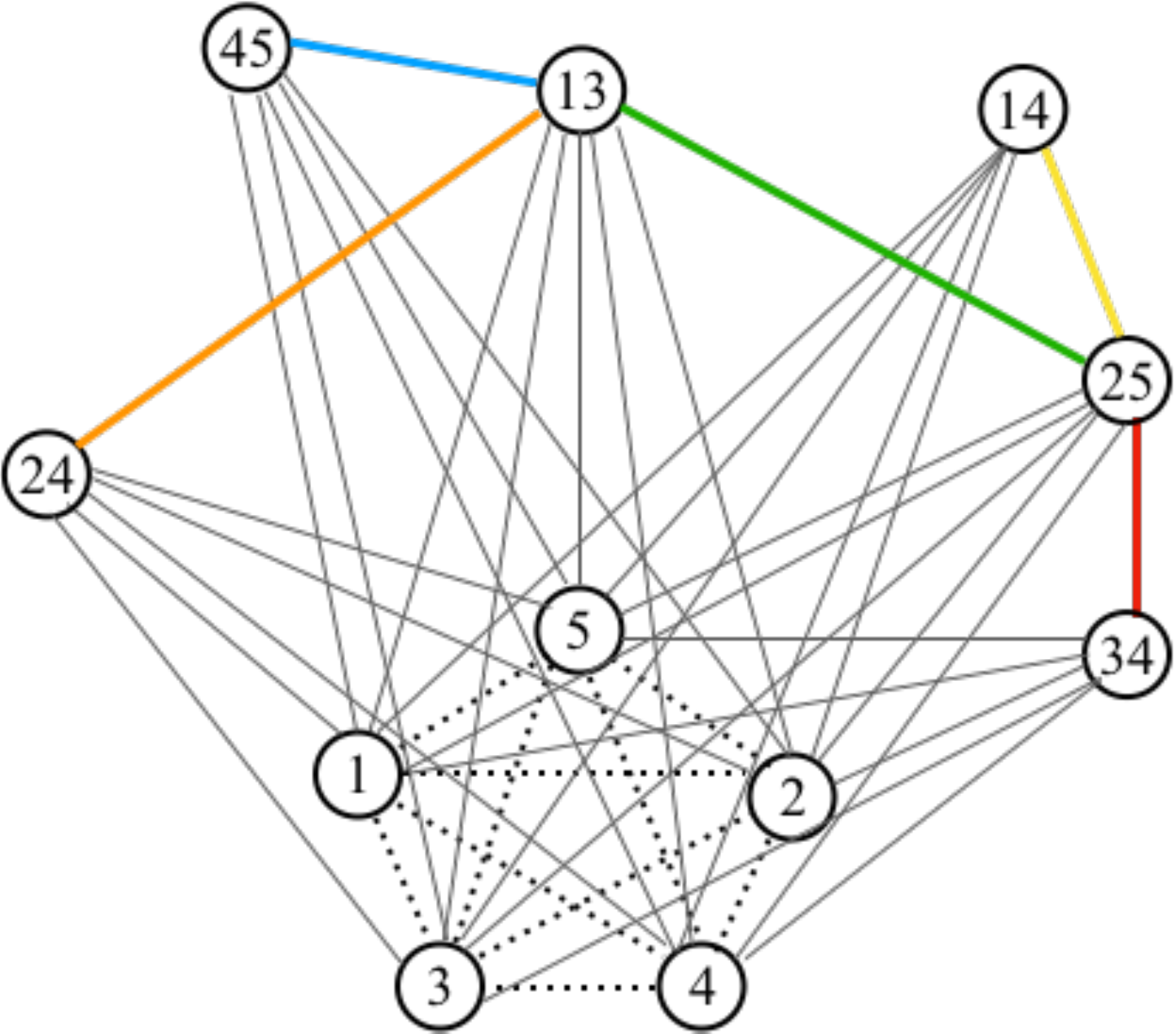}
    \caption{The connection graph $G^5_T$ for tree $T$ from Example~\ref{ex:running_start}.  The vertices corresponding to elements of $\mathbf{Q}$ are labeled by the smaller of the two pieces of the partition. The leaf partitions have automatic compatibility - these edges are shown dotted, while compatible thick partitions have colored edges. }
    \label{fig:connection_graph}
\end{figure}

%\begin{center}
%\begin{tikzpicture}
%  [scale=.8,auto=left]
%  
%  \node (45) at (-3,5) {(45)};
%  \node (n1) at (1,1)  {};
%  \node (13) at (1,5)  {(13)};
%  \node (25) at (6,1)  {(25)};
%  \node (24) at (-4,1) {(24)};
%
%  \node (14) at (5.5,4) {(14)};
%  \node (34) at (6,-1.5) {(34)};
%  \node (l1) at (-.25,-2.75) {(1)};
%  \node (l2) at (2.25,-2.75) {(2)};
%  \node (l3) at (0,-4) {(3)};
%  \node (l4) at (2,-4) {(4)};
%  \node (l5) at (1,-1.5) {(5)};
%  
%  \draw[thick,blue!50!white] (45) to (13);
%  \draw[thick,green!70!white] (13) to (25);
%  \draw[thick,yellow] (14) to (25);
%  \draw[thick,orange!60!white] (24) to (13);
%  \draw[thick,red] (34) to (25);
%  
%  \foreach \from/\to in {l1/45,l1/13,l1/25,l1/24,l1/14,l1/34,l2/45,l2/13,l2/25,l2/24,
%  l2/14,l2/34,l3/45,l3/13,l3/25,l3/24,l3/14,l3/34,l1/l2,l1/l3,l1/l4,l1/l5,l2/l4,l2/l5,l2/l3,l3/l5,l3/l4,l4/l5,
%  l4/45,l4/13,l4/25,l4/24,l4/14,l4/34,l5/45,l5/13,l5/25,l5/24,
%  l5/14,l5/34}
%    {  \draw[dotted] (\from) to (\to); }
%\end{tikzpicture}
%\end{center}

In Step 3, we find maximal $(4+1-3)$-cliques in the thick subgraph. The $2$-cliques are edges, and for each edge, we can include all of the leaf edges to that set to obtain a unique topology of $\mathcal{T}^5$. All such topologies form the connection cluster $C^5_T$.  The orthants corresponding to these topologies are precisely those pictured in Example 3.4, and form $S^5_T$, the connection space, which is shown again in Figure~\ref{fig:connection_space} without the leaf dimensions.
%The resulting orthants, $C^5_T$, are precisely those pictured in Example 3.4, the {\em orthant support} of the extension space, shown again below without the leaf dimensions. 

\begin{figure}
    \centering
    \includegraphics[scale=0.5]{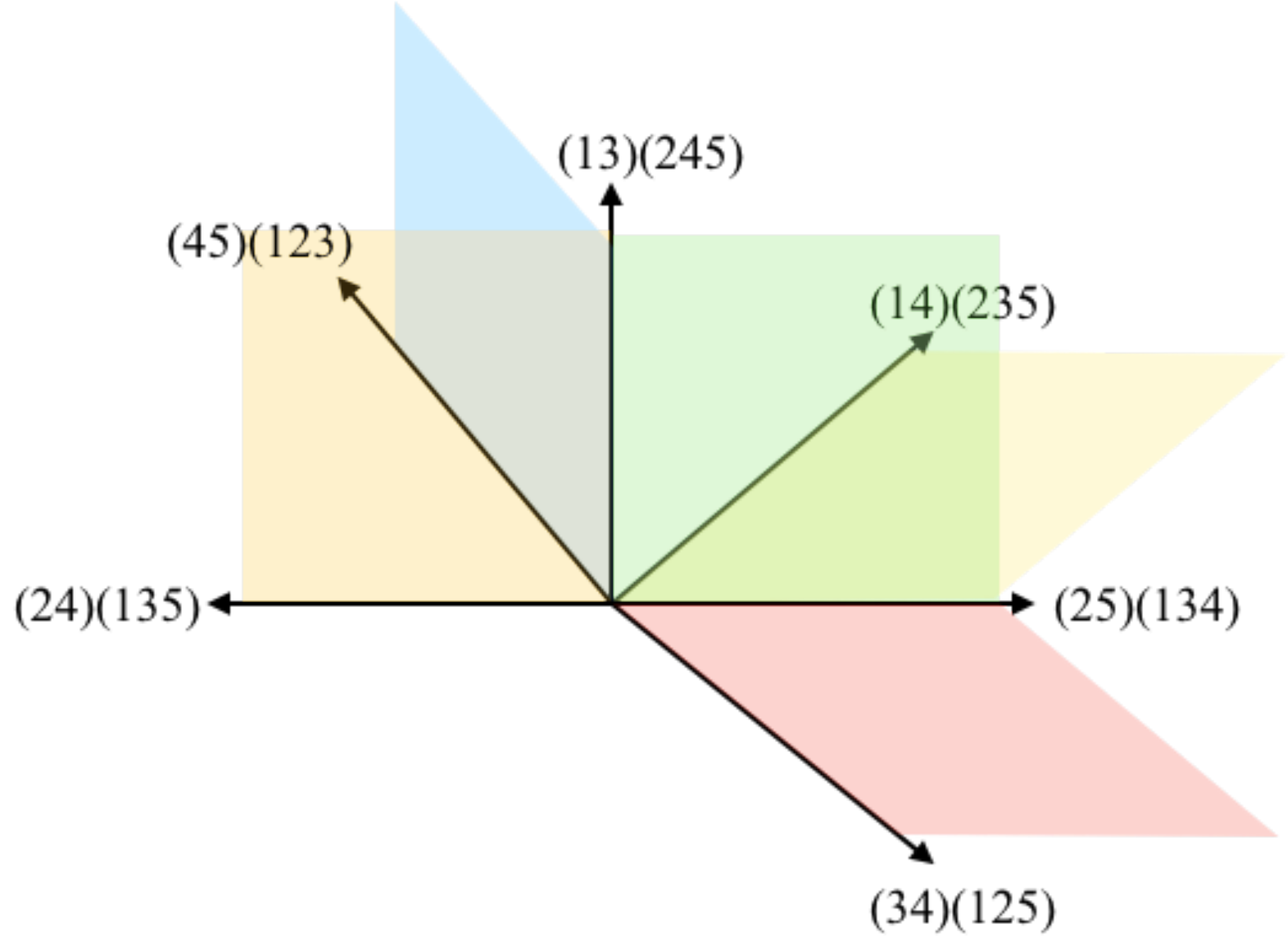}
    \caption{The connection space $S^5_T$ for tree $T$ from Example~\ref{ex:running_start}. }
    \label{fig:connection_space}
\end{figure}

%\begin{center}
%\begin{tikzpicture}
%  [scale=.85,auto=left]
%  \draw[fill=yellow!10,fill opacity=0.5, draw=none] (1,1) -- (6,1) -- (10.5,4) -- (5.5,4)--cycle;
%  \draw[fill=green!10, fill opacity=0.5,draw=none] (1,1) -- (6,1) -- (6,5) -- (1,5)--cycle;
%  \draw[fill=red!10,fill opacity=0.5,draw=none] (1,1) -- (6,1) -- (11,-1.5) -- (6,-1.5)--cycle;
%  \draw[fill=blue!10,fill opacity=0.5,draw=none] (1,1) -- (-3,5) -- (-3,9) -- (1,5)--cycle;
%  \draw[fill=orange!10,fill opacity=0.5,draw=none] (1,1) -- (-4,1) -- (-4,5) -- (1,5)--cycle;
%  \node (n0) at (-3,5) {(45)(123)};
%  \node (n1) at (1,1)  {};
%  \node (n7) at (1,5)  {(13)(245)};
%  \node (n8) at (6,1)  {(25)(134)};
%  \node (n10) at (-4,1) {(24)(135)};
%  \node (14) at (5.5,4) {(14)(235)};
%  \node (34) at (6,-1.5) {(34)(125)};
%
%  
%  \foreach \from/\to in {n1/n0,n1/n7,n1/n8,n1/n10,n1/14,n1/34}
%    {  \draw (\from) to (\to); }
%\end{tikzpicture}
%\end{center}
\end{exmp}

The proposition below shows that the set of cliques returned in the final step of the algorithm is indeed the connection cluster $C_T^N$, justifying the notation.

\begin{prop}\label{CT} For $T\in \mathcal{T}^{\leafset}$ with $\leafset \subset [N]$, the above algorithm returns the cliques $C_T^N$, which correspond to the orthant support of $\Psi_{\leafset}^{-1}(T)\subset \mathcal{T}^{N}$. 
\end{prop}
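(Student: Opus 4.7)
The plan is to establish the proposition in two directions: that the $(n+\ell-3)$-cliques returned by Step 3 are in bijection with binary tree topologies on $[N]$ projecting to $T$ (i.e., the set $C_T^N$), and that this same set of topologies parametrizes the orthant support of $\Psi_{\leafset}^{-1}(T)$ in $\mathcal{T}^N$. For the bijection, I would invoke the duality between pairwise-compatible split sets and tree topologies cited in Section~2.1. Every $(n+\ell-3)$-clique among the thick splits of $\mathbf{Q}$ determines a unique binary topology on $[N]$, because a binary tree on $N$ leaves has exactly $N-3$ thick splits. Since each split in the clique was built in Step~1 by extending some $P_e \in S(T)$, the concatenation rule in Definition~\ref{TDR} collapses the edges of $\bar{T}$ whose splits share a common $\mathbf{Q}_e$ back into the single edge $e$ of $T$, so the projected topology is $T$.

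For the converse combinatorial direction, I would proceed by induction on $\ell$. The base case $\ell=1$ is subsumed by Lemma~\ref{l:grafting}: every one-leaf extension of $T$ arises from grafting onto an edge, and its thick splits are exactly the extensions enumerated in $\mathbf{Q}$ for the single additional leaf. For the inductive step, grafting one more leaf subdivides an extended split into two splits still lying in the same $\mathbf{Q}_{e'}$, or contributes a new leaf split; in both cases all thick splits remain catalogued by $\mathbf{Q}$. Pairwise compatibility of the thick splits of $\bar{T}$ is automatic since they coexist in a tree, and counting gives exactly $n+\ell-3$ thick splits, producing a maximal clique in the thick compatibility graph.

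For the orthant-support statement, I would observe that for each $\bar{T} \in C_T^N$ the condition $\Psi_{\leafset}(\bar{T}) = T$ reduces, on the orthant $\cO(\bar{T})$, to the linear system $M^{\cO}\mathbf{x}^{\cO} = \mathbf{w}$ outlined at the start of Section~3. This system admits strictly positive solutions because each edge weight $w_e$ of $T$ can be distributed among the weights of the splits of $\bar{T}$ lying in $\mathbf{Q}_e \cap S(\bar{T})$, and the leaf-edge weights at leaves in $\leafset$ can be read off directly from $T$. Hence $\cO(\bar{T}) \cap \Psi_{\leafset}^{-1}(T)$ is nonempty. Conversely, any $\bar{T}^* \in \Psi_{\leafset}^{-1}(T)$ sits in some maximal orthant whose topology, by the forward bijection, corresponds to a clique returned by the algorithm.

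The hardest step will be the inductive reverse-direction argument, which must verify that no repeated grafting produces a thick split outside $\mathbf{Q}$. The key ingredient is that $T$ is binary: each $P_e \in S(T)$ admits a well-defined extension family $\mathbf{Q}_e$, no two distinct edges of $T$ merge under $\Psi_{\leafset}$, and the split lattice generated by successive graftings remains indexed by $\cE(T)$. This binarity assumption, highlighted at the start of Section~3, is what underwrites both the completeness of $\mathbf{Q}$ and the faithfulness of the correspondence between cliques and orthants.
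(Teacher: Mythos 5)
Your proposal follows essentially the same route as the paper: the combinatorial core is an induction on the number of added leaves with the one-leaf grafting case (Lemma~\ref{l:grafting}) as the base case, the flag-complex/split-set duality converts cliques of pairwise-compatible splits into orthants, and positivity of the weights $w_e$ guarantees that the linear system in each such orthant has a nonempty solution set. The paper packages the induction as Lemma~\ref{l:tdr_on_splits} and handles the converse of the base case by showing that exactly one split of $T$ can have both of its extensions present in a pairwise-compatible family, whereas you track splits through iterated graftings, but this is a difference of bookkeeping rather than of method.
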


First we show a preliminary result allowing us to reduce to conditions on the vertices of the extension graph.

\begin{lemma}
\label{l:tdr_on_splits}
For tree $T \in T^{\leafset}$ with $\leafset \subset [N]$, an orthant $\cO \subset \mathcal{T}^{N}$ contains an element of $\Psi_{\leafset}^{-1}(T)$ if and only if $\Psi_{\leafset}(S(\cO))= S(T)$. That is, $\cO$ contains a tree in the extension space of $T$ if and only if removing the labels $N \backslash \leafset$ from the splits $S(\cO)$ yields precisely the split set of $T$ (with multiplicity).
\end{lemma}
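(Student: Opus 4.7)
The plan is to establish both directions of the equivalence by tracking how $\Psi_{\leafset}$ acts on splits, invoking Definition~\ref{TDR} directly.

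For the forward implication, I take a tree $\bar{T}$ in the interior of $\cO$ satisfying $\Psi_{\leafset}(\bar{T}) = T$, so that $S(\bar{T}) = S(\cO)$. Under the TDR procedure, each edge $\bar{e} \in \cE(\bar{T})$ with split $\bar{P}$ either has trivial projection (when one of $\bar{P} \cap \leafset$ or $\bar{P}^c \cap \leafset$ is empty), in which case $\bar{e}$ is either a leaf edge for some $g \in N \setminus \leafset$ (deleted) or an internal edge whose weight is absorbed via concatenation; or else $\bar{P}$ projects to the non-trivial split $\Psi_{\leafset}(\bar{P})$, which becomes a split of the reduced tree. Since $\Psi_{\leafset}(\bar{T}) = T$, the collection of non-trivial projections of splits in $S(\cO)$ must comprise exactly the splits in $S(T)$, so $\Psi_{\leafset}(S(\cO)) = S(T)$. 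The subtle point here is that two distinct splits in $S(\cO)$ may project to the same $P \in S(T)$; this occurs exactly when the two corresponding edges get concatenated during TDR, and the weights then sum, consistent with the linearity of $\Psi_{\leafset}$ on each maximal orthant noted at the start of Section~3.

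For the reverse implication, I construct $\bar{T}$ in the interior of $\cO$ with $\Psi_{\leafset}(\bar{T}) = T$ explicitly. The hypothesis guarantees, for each $P \in S(T)$ with weight $w_P$, a nonempty preimage set $\{\bar{P} \in S(\cO) : \Psi_{\leafset}(\bar{P}) = P\}$. I distribute positive weights across each such preimage set so that the sum equals $w_P$. To the remaining splits in $S(\cO)$—those with trivial projection, including leaf edges for leaves in $N \setminus \leafset$ and any internal splits contained entirely in $N \setminus \leafset$—I assign arbitrary positive weights. The resulting $\bar{T}$ has positive weight on every split in $S(\cO)$, placing it in the interior of $\cO$. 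By Definition~\ref{TDR}, the splits of $\Psi_{\leafset}(\bar{T})$ are exactly the non-trivial projections of $S(\cO)$, which by hypothesis coincide with $S(T)$; and each edge weight of $\Psi_{\leafset}(\bar{T})$ is the sum of the weights of the corresponding preimages, which by construction equals $w_P$. Hence $\Psi_{\leafset}(\bar{T}) = T$.

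The main obstacle is bookkeeping in the forward direction: one must verify not merely that projections of splits of $S(\cO)$ are splits of $T$, but that every split of $T$ arises in this way and that the TDR procedure acts on weights by the projection-and-sum rule used in the backward construction. Both facts reduce to the linearity observation from Section~3: each weight $w_P > 0$ must be written as a sum of nonnegative weights from the preimage splits, which forces at least one preimage to exist for every $P \in S(T)$ and pins down the action on weights.
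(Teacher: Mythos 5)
Your proof is correct in substance but takes a genuinely different route from the paper's. The paper argues by induction on $\ell = |N\setminus\leafset|$: the base case handles single-leaf graftings via Lemma~\ref{l:grafting}, and its converse direction contains the one nontrivial combinatorial point, namely that in a pairwise-compatible split set projecting to $S(T)$ under deletion of a single leaf $g$, at most one split $P$ of $T$ can have both preimages $(P\cup g)(P^c)$ and $(P)(P^c\cup g)$ (two such doubled splits would be incompatible), which forces the split set to be exactly that of a grafting; the inductive step then factors $\Psi_{\leafset}$ through $\Psi_{n+\ell}$. You instead argue both directions in one step directly from the description of how TDR acts on splits (discard trivial projections, keep the rest) and on weights (sum over each projection class). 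This buys brevity and avoids the induction entirely, but it shifts the whole weight of the lemma onto the assertion that the splits of $\Psi_{\leafset}(\bar T)$ are precisely the nontrivial projections of $S(\bar T)$, with weights summed over preimage classes --- equivalently, that two retained edges are concatenated under TDR if and only if they induce the same split of $\leafset$. That statement is true (edges with equal nontrivial projections are necessarily joined by a path whose internal vertices all have degree $2$ in the retained subtree, and conversely), and the paper itself invokes it without proof in Theorem~\ref{Eispreimage}, but in your argument it is essentially the entire content of the lemma, so it deserves an explicit proof rather than a citation of Definition~\ref{TDR}. One further small point: in the forward direction you take $\bar T$ in the interior of $\cO$, while the hypothesis only places some preimage point in $\cO$; if that point lies on a proper face you need the extra remark that any additional split of $S(\cO)$ projects to a split compatible with all of $S(T)$, hence (since $T$ is binary and its split set is maximal) to an element of $S(T)$ or to a trivial split, so the conclusion still holds. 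The paper's induction glosses over the same case, so this is a shared, and easily repaired, imprecision rather than a defect of your approach.
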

\begin{proof} We proceed by induction on $\ell = |N \backslash \leafset|$.

If $\ell = 1$ and $\bar{T}$ is an extension of $T\in \mathcal{T}^{\leafset}$ by grafting leaf $g$ to edge $e \in \cE(T)$, then from the proof of Lemma~\ref{l:grafting}, $\bar{T}$ has split set $S(\bar{T}) = \{\bar{P_f}: f \in \cE(T) \backslash e\} \cup \bar{P}_g \cup \bar{P_e}^L \cup \bar{P_e}^R$.  Recall that removing edge $f$ from $T$ induces two subtrees, the vertices of which become the two parts of splits $P_f$, and that  $\bar{P_f}$ was constructed from $P_f$ by adding leaf $g$ to the partition corresponding to the subtree to which $g$ was grafted.  Thus $\bar{P_f}$ projects to $P_f$ by construction for all $f$.  Similarly, $\bar{P_e}^L$ and $\bar{P_e}^R$ were constructed such that they project unto $P_e$.  Finally $\bar{P_g}$ projects onto a split with one partition empty, which we delete.

%As in Lemma 3.3, if $\ell = 1$ and $\bar{T}$ is an extension of $T\in \mathcal{T}^{n}$ (by grafting), then $\bar{T}$ has split set $S(\bar{T}) = P_1',P_2',\dots,P_{i-1}',P_i^L,P_i^R,P_{i+1}',\dots, P_{2n-3}', K$, $K = (n+1)(1,2,\dots,n)$, where the $n+1$ label is placed in one half of the split or the other, without perturbing the labels $1,2,\dots,n$, according to the rule: $P_j$ induces subtrees with leaf sets $P_j$ and $P_j^c$ by removing the edge $e_j$, one of which contains the edge given by split $P_i$. Without loss of generality, if this is the $P_j^c$ subtree, then add the label $n+1$ to the $P_j^c$ half of the partition. Thus each edge $P_j'$ projects to $P_j$, and $P_i^L$ and $P_i^R$, which are obtained by adding the $n+1$ label alternately to each half of $P_i$, both project to $P_i$, by construction. 

Conversely, if a set $S$ of pairwise-compatible splits on $[N]$ projects to $S(T)$ under deletion of some leaf $g = N \backslash \leafset$, then we claim there exists a unique split $P/P^c \in S(T)$ which has two preimages.  Suppose not.  That is, suppose for $P/P^c$ and $Q/Q^c$ splits in $T$, the collective split preimages are $(P\cup g)(P^c)$, $(P)(P^c \cup g)$, $(Q \cup g)(Q^c)$, and $(Q)(Q^c \cup g)$. Then compatibility of $P$ and $Q$ in $T$ guarantees that precisely one of $Q\cap P, Q^c\cap P, Q \cap P^c, Q^c \cap P^c$ is empty, say without loss of generality $Q\cap P$. Then $(Q \cup g)(Q^c)$ and $(P \cup g)(P^c)$ are not compatible, because none of the four intersections of their partitions are empty.  Thus $S$ contains only one of them. So for any pair of splits in $T$, there are at most 3 preimage splits in $S$, and unique splits have distinct preimages, so we conclude that there is a unique split in $T$ with both preimages, i.e. the set $S$ must look precisely as above, $\{\bar{P_f}: f \in \cE(T) \backslash e\} \cup \bar{P}_g \cup \bar{P_e}^L \cup \bar{P_e}^R$, and from this we can construct $\bar{T}\in \Psi_{\leafset}^{-1}(T)$ uniquely by grafting the $g$-leaf edge to the middle of edge $e$.

%Conversely, if a set $S$ of $2n-3$ pairwise-compatible splits projects to $S(T)$ under deletion of the $n+1$-leaf, then there exists a unique split $P/P^c$ which has two preimages: for $P/P^c$ and $Q/Q^c$ splits in $T$, the collective split preimages are $P\cup (n+1)$, $P^c \cup (n+1)$, $Q \cup (n+1)$, and $Q^c \cup (n+1)$. Then compatibility of $P$ and $Q$ in $T$ guarantees that precisely one of $Q\cap P, Q^c\cap P, Q \cap P^c, Q^c \cap P^c$ is empty, say without loss of generality $Q\cap P$. Then $Q\cup (n+1)$ and $P\cup (n+1)$ are not compatible, implying that $S$ contains only one of them. So for any pair of splits in $T$, there are at most 3 preimage splits in $S$, and unique splits have distinct preimages, so we conclude that there is a unique split in $T$ with both preimages, i.e. the set $S$ must look precisely as above, $P_1',\dots,P_i^L,P_i^R,\dots, P_{2n-3}',K$, and from this we can construct $\bar{T}\in \Psi_n^{-1}(T)$ uniquely by grafting the $(n+1)$-leaf edge to the middle of edge $e_i$.

So we have the result for the $\ell = 1$ case. 

Then assume for induction that there exists $\bar{T}\in \cO \subset \mathcal{T}^{n + \ell}$ such that $\Psi_{\leafset}(\bar{T}) = T$, if and only if $\Psi_{\leafset}(S(\cO))=S(T)$. Then let $\cO'$ be an orthant in $\mathcal{T}^{n+\ell+1}$.  So then $\Psi_{n+\ell}(\cO')$ is an orthant in $\mathcal{T}^{n+\ell}$, and applying the inductive hypothesis, there exists $\bar{T}'\in \Psi_{n+\ell}(\cO')$ with $\Psi_{\leafset}(\bar{T}') = T$ if and only if $\Psi_{\leafset}(S(\Psi_{n+\ell}(\cO'))) = S(T)$. Since $S(\Psi_{n+\ell}(\cO')) = \Psi_{n+\ell}(S(\cO'))$ from the one-step case, and $\Psi_{\leafset}(\Psi_{n+\ell}(S(\cO'))) = \Psi_{\leafset}(S(\cO'))$, giving us the forward direction. %we find $\bar{T}\in S$ such that $\Psi_{\leafset}(\bar{T}) = T$.
For the reverse direction, we know that $\bar{T}' \in \Psi_{n+\ell}(\cO'),$
%But we know that $\bar{T}' \in \Psi_{n+\ell}(S),$ 
which means that there is some tree $\bar{T}\in \cO$ such that $\Psi_{n+\ell}(\bar{T}) = \bar{T}'$ by the base case. For this tree, then, $\Psi_{\leafset}(\bar{T}) = \Psi_{\leafset}\Psi_{n+\ell} \bar{T} = \Psi_{\leafset} \bar{T}' = T$, and the proof is complete.

\end{proof}

\begin{proof} (of Proposition) Suppose we have a maximal clique in $G_N^T$. Then this clique represents a set of pairwise compatible splits. Since $L_n^1$ is a flag complex, these splits represents an orthant $\cO$ in $\mathcal{T}^{N}$, of dimension corresponding to the size of the clique. By Lemma~\ref{l:tdr_on_splits}, these splits projects to the splits of $T$, so the orthant $\cO$ contains elements of the extension space.

Conversely, suppose a tree $\bar{T}$ is in the extension space. Then by Lemma~\ref{l:tdr_on_splits}, the splits of $\bar{T}$ are among the vertex set of $G_N^T$, and since $\bar{T}$ is a tree in $\mathcal{T}^{N}$, its splits are compatible. Since this is the condition for connectivity in $G_N^T$ as well as $L_n^1$, $\bar{T}$ maps to a clique in $G_N^T$. 
\end{proof}

\begin{prop}\label{connection runtime}This algorithm is $O(2^{3\ell}n^3)$.
\end{prop}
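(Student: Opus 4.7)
The plan is to analyze each of the three steps of the Computation of Connection Cluster algorithm in turn and sum their contributions.

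First, I would bound $|\mathbf{Q}|$: since $T \in \cT^{\leafset}$ has $|\cE(T)| = 2n-3$ edges and each set $\mathbf{Q}_e$ contains exactly $2^\ell$ splits (one per subset of $[N]\setminus\leafset$ added to $P_e$), we get $|\mathbf{Q}| \leq (2n-3)\cdot 2^\ell = O(n\cdot 2^\ell)$. Each candidate split is stored as a subset of $[N]$, so it has size $O(N) = O(n+\ell)$. Step 1 therefore runs in $O(n\cdot 2^\ell\cdot N) = O(n^2\cdot 2^\ell)$ time.

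For Step 2, we build $G_T^N$ by testing compatibility of every pair in $\mathbf{Q}$. By Definition~\ref{compatible}, a compatibility check amounts to examining four pairwise partition intersections, and can be done in $O(N)$ time using bit-vector representations of splits. With $O(|\mathbf{Q}|^2) = O(n^2\cdot 2^{2\ell})$ pairs to test, Step 2 costs $O(n^3\cdot 2^{2\ell})$.

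Step 3, enumerating the maximal $(n+\ell-3)$-cliques in the thick subgraph, is where the main obstacle lies: generic maximal clique enumeration is expensive, and the output could be large. The plan is to exploit the following structural observation, which strengthens Lemma~\ref{l:tdr_on_splits}: any two distinct extensions $Q_1,Q_2 \in \mathbf{Q}_e$ of the same split $P_e$ are incompatible. Writing $Q_1 = (P_e\cup A)\,|\,(P_e^c\cup A^c)$ and $Q_2 = (P_e\cup B)\,|\,(P_e^c\cup B^c)$ with $A,B\subseteq [N]\setminus\leafset$, each of the four intersections $Q_1\cap Q_2$, $Q_1\cap Q_2^c$, $Q_1^c\cap Q_2$, $Q_1^c\cap Q_2^c$ contains a full copy of $P_e$ or of $P_e^c$, hence is nonempty. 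Consequently, every maximal clique selects at most one representative from each $\mathbf{Q}_e$. This collapses the clique search to a constrained selection problem over the $2n-3$ edges of $T$ with at most $2^\ell$ options on each edge, pruned in amortized constant time per compatibility query via the table built in Step 2. A careful Bron--Kerbosch-style traversal organized along the edges of $T$ then enumerates all cliques in $O(|\mathbf{Q}|^3) = O(n^3\cdot 2^{3\ell})$ time.

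Summing the three steps, Step 3 dominates and yields the claimed $O(2^{3\ell}n^3)$ bound. The hardest part is making the Step 3 analysis precise: the co-extension incompatibility observation is what prevents a blow-up in the clique-enumeration phase, and verifying that a pruned search truly realizes the stated bound is the delicate point of the argument.
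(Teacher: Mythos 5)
Your analysis of Steps 1 and 2 matches the paper's, but your treatment of Step 3 rests on a false structural claim, and this is a genuine gap. You assert that any two distinct extensions $Q_1 = (P_e\cup A)\,|\,(P_e^c\cup A^c)$ and $Q_2 = (P_e\cup B)\,|\,(P_e^c\cup B^c)$ of the same split $P_e$ are incompatible because all four partition intersections contain a copy of $P_e$ or $P_e^c$. Only two of the four do: $Q_1\cap Q_2 \supseteq P_e$ and $Q_1^c\cap Q_2^c \supseteq P_e^c$, but $Q_1\cap Q_2^c = A\setminus B$ and $Q_1^c\cap Q_2 = B\setminus A$, either of which is empty whenever $A$ and $B$ are nested. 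Concretely, $\bar{P_e}^L=(P_e)(P_e^c\cup g)$ and $\bar{P_e}^R=(P_e\cup g)(P_e^c)$ are distinct, compatible members of $\mathbf{Q}_e$, and by Lemma~\ref{l:grafting} both appear in every tree obtained by grafting $g$ to the interior of edge $e$. More decisively, a maximal clique corresponds to an orthant with $2N-3$ splits projecting onto only $2n-3$ splits of $T$ (Lemma~\ref{l:tdr_on_splits}), so by pigeonhole some $\mathbf{Q}_e$ must contribute at least two vertices to every maximal clique; your ``at most one representative per $\mathbf{Q}_e$'' constraint excludes the entire connection cluster. The reduction to a constrained selection problem therefore collapses, and with it your Step 3 bound.

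Even setting that aside, the claim that a ``Bron--Kerbosch-style traversal'' runs in $O(|\mathbf{Q}|^3)$ is asserted rather than proved: worst-case Bron--Kerbosch is exponential, and an output-sensitive bound requires an actual enumeration theorem. The paper's proof instead invokes the algorithm of Tsukiyama et al.~\cite{TIAS}, taking maximal clique enumeration to cost $O(|E|\cdot|V|)$, and combines this with the vertex count $|V| = 2^\ell(2n-2)-\ell-n-1 = O(2^\ell n)$ from \cite{UIUC} and $|E| = O(|V|^2)$ to get $O(2^{3\ell}n^3)$, which dominates Steps 1 and 2. To repair your argument you should drop the incompatibility claim and either cite such an enumeration result or prove an output-sensitive bound directly.
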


\begin{proof}
In the first step, we do a simple enumeration, with run time $(2n-3)2^\ell$. The second step of removing duplicates and initializing the graph is then $O(2^{2\ell}n^2)$, and to check compatibility is $O(2n-3+\ell)$ in each pair, so has $O(2^{2\ell}n^3)$. By \cite{TIAS}, the run time of maximal clique enumeration is $O(|E|*|V|)$, and from \cite{UIUC} we have that the vertex set has size $2^{\ell}(2n-2)-\ell-n-1$, and the edge set size being at most the square of this, we have a $O(2^{3\ell}n^3)$ run time for clique enumeration. This dominates the other steps, which gives the result.
\end{proof}

Note that while this is fairly quick in $n$, it may be the case that we have small fragments of large trees, implying a very dominant $\ell$ term. In this case, the algorithm is essentially reconstructing a large portion of $\mathcal{T}^{n+\ell}$, and so there is not much improvement which can be made, since the solution space itself is large. In the next section we will address a method for handling small tree fragments among a set of tree fragments.

\subsubsection{Metric Step}
%Given a clique $K \simeq K_{n+\ell}$ in $G_{S(T),n,\ell}$, $K$ gives 
Consider an orthant $\cO \subset S^N_T \subset \mathcal{T}^N$, and index its corresponding splits by $Q_1,Q_2,\dots,Q_{2N-3}$ (for example, in lexicographical order).
%, where the split set of $K$ is extended to include leaf partitions, and with slight abuse of notation in the indexing. 
By construction, $\Psi_{\leafset}(Q_j) = P_i$ for some $i\in\{1,\dots, 2n-3\}$. We represent this assignment with a $(2n-3)\times (2N-3)$ {\bf projection matrix} $M^{\cO}_T = (m_{ij})$, where 
$$m_{ij} = \left\lbrace\begin{array}{l l}1 & \mbox{ if } \Psi_{\leafset}(Q_j) = P_i\\
0 & \mbox{otherwise}\end{array}\right.$$
 Since this is a well-defined map from $\{Q_j\}$ to $S(T)= \{P_i\}$, columns each have a unique non-zero entry. 
We then set up the real system of equations:
\begin{equation}\label{metconstraints} 
\begin{array}{l}
M^{\cO}_T \cdot \mathbf{x}^{\cO} = \mathbf{w}\\
\mathbf{x}^{\cO}\geq 0 \end{array}
\end{equation}
for $\mathbf{x}^{\cO}$ the vector of non-negative edge weights in $\cO$ ($x_j$ the weight of split $Q_j$), and $\mathbf{w}$ the vector of edge weights in $T$.

To see what this system is producing, notice that this specifies, for each split $P_i$ in $T$ with weight $w_i$, the equation
$$ x_{j_1}+x_{j_2}+\dots+x_{j_{a_i}} = w_i$$
for $Q_{j_1},\dots, Q_{j_{a_i}}\in S(\cO)$ projecting to $P_i$, so that under tree dimensionality reduction $\Psi_{\leafset}$, the (non-negative) lengths of the edges $e'_{j_1}, e'_{j_2}, ..., e'_{j_{a_i}}$ of a tree in $\cO$ concatenated to produce edge $e_i\in T$ sum precisely to $w_i$. So solving this system of equations would find vectors of possible edge lengths in tree topologies which project to $T$.

\begin{defi} Given an orthant $\cO \in S_T^N \in \mathcal{T}^{n+\ell}$, which, alternatively, has splits corresponding to a clique in $G_T^N$ and a topology in $C_T^N$, we call the set of $\mathbf{x}^{\cO}$ satisfying (\ref{metconstraints}) the {\bf extension space of $T$ in $\cO$}, denoted $E^{\cO}_{T,n,\ell}$ or $E^{\cO}_T$. The {\bf extension space of $T$ in $\mathcal{T}^N$} is defined to be the union of extension spaces over all orthants in the connection space:
$$ E_{T,n,\ell} := \bigcup_{\cO \in S^N_T} E^{\cO}_{T,n,\ell}.$$
\end{defi}

Note that the image of $\mathbf{Q} = \{Q_1,\dots, Q_{2N-3}\}$ under tree dimensionality reduction to $\leafset(T)$ gives a partition of the set into precisely $2n-3$ components, because $\Psi_{\leafset}(\mathbf{Q})$ is well-defined and surjective on $P_i$'s. Because it is a partition and $w_i > 0$, we are guaranteed a solution of dimension $\sum_{j}m_{ij}-1$ to the equation above, and a total solution space of dimension 
$$\sum^{2n-3}_{i=1} ((\sum^{2N-3}_{j=1} m_{ij}) - 1) = \sum^{2N-3}_{j=1} \sum^{2n-3}_{i=1} m_{ij} - (2n-3) = (2N-3)-(2n-3) = 2\ell.$$
This generalizes the single leaf extension case in that, after the equations are solved for all orthants, the result is the direct product of a piecewise-linear connected $\ell$-manifold (intersecting a strict subset of orthants each in an $\ell$-dimensional linear subspace), with $(\mathbb{R}_{\geq 0})^\ell$. Connectivity follows from the consideration that if two orthants share a $k$-dimensional face, then that face is represented as a $k$-clique in the connection graph, and the metric extension space meets the face in a set of equations of precisely the same sort on each side. %, for $Q_{j_1},Q_{j_2},\dots, Q_{j_k}$ for $k < 2N-3$.  

\begin{prop}
For leafset $\leafset \subset [N]$, let $T \in \cT^\leafset$ be a binary tree.  The extension space of $T$, $E_T^N$, is connected.  Furthermore, for adjacent orthants $\cO_1, \cO_2 \subset S_T^N$, $E^{\cO_1 \cap \cO_2}_T = E^{\cO_1}_T \cap \cO_2 = \cO_1 \cap E^{\cO_2}_T$. 
\end{prop}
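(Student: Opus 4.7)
The plan is to prove the face identity first, then deduce connectivity via an induction on the number of added leaves $\ell = N - n$, using the face identity as the gluing tool at every orthant crossing.

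For the face identity, I argue directly from the linear systems defining each orthant-restricted extension space. Let $F = \cO_1 \cap \cO_2$, so that $S(F) = S(\cO_1) \cap S(\cO_2)$, and let $J$ index the splits in $S(\cO_1) \setminus S(F)$ in the coordinate system of $\cO_1$. A point $\mathbf{x}^{\cO_1}$ sits in $\cO_2$ exactly when $x_j = 0$ for every $j \in J$. Substituting these zeros into $M^{\cO_1}_T \mathbf{x}^{\cO_1} = \mathbf{w}$, $\mathbf{x}^{\cO_1} \geq 0$ yields precisely the system $M^F_T \mathbf{x}^F = \mathbf{w}$, $\mathbf{x}^F \geq 0$, because $M^F_T$ is by construction the column-submatrix of $M^{\cO_1}_T$ with columns indexed by $S(F)$. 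This gives $E^{\cO_1}_T \cap \cO_2 = E^F_T$, and the symmetric argument starting from $\cO_2$ yields $\cO_1 \cap E^{\cO_2}_T = E^F_T$. The identity holds trivially (both sides empty) when $E^F_T = \emptyset$, which occurs exactly when deleting the $J$-columns destroys the surjectivity of the induced projection onto $S(T)$.

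For connectivity, I induct on $\ell$. The base case $\ell = 1$ is Lemma~\ref{l:grafting}, which realizes $E^N_T$ as the direct product of $\RR_{\geq 0}$ with a piecewise-linear connected graph isomorphic to $T$. For the inductive step, fix a leaf $g \in [N] \setminus \leafset$ and let $\Psi_{\bar{g}} : \cT^N \to \cT^{N \setminus \{g\}}$ be the corresponding reduction. By the composition identity $\Psi_\leafset = \Psi_\leafset \circ \Psi_{\bar{g}}$ noted after Proposition~\ref{prop:subset_additive_matrix}, one has $E^N_T = \Psi_{\bar{g}}^{-1}(E^{N \setminus \{g\}}_T)$. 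Given $\bar{T}_1, \bar{T}_2 \in E^N_T$, project them to $E^{N \setminus \{g\}}_T$, pick a path $\gamma$ in the latter between the two projections (guaranteed by the inductive hypothesis), and lift $\gamma$ to a continuous path in $E^N_T$ from $\bar{T}_1$ to $\bar{T}_2$. Lemma~\ref{l:grafting} supplies a connected fiber over each $\gamma(t)$, and the face identity just proved ensures that these fibers glue compatibly along orthant boundaries of $\cT^{N \setminus \{g\}}$.

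The main obstacle is the continuous lifting of $\gamma$ across orthant boundaries of $E^{N \setminus \{g\}}_T$. Within a fixed orthant of $\cT^{N \setminus \{g\}}$, the one-leaf fiber has the explicit product structure from Lemma~\ref{l:grafting}, and a straightforward linear interpolation of grafting location and leaf-edge length produces a continuous lift. At a boundary crossing of $\gamma$, one split of the base tree degenerates and another may emerge, but the face identity shows that the single-leaf grafting spaces over the boundary tree agree in the two adjacent orthants, so the lift extends across the boundary without leaving $E^N_T$. Because piecewise linearity forces $\gamma$ to cross only finitely many boundaries, the lift is assembled in finitely many pieces, completing the induction and hence the proof.
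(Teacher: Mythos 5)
Your proof of the face identity $E^{\cO_1\cap\cO_2}_T = E^{\cO_1}_T\cap\cO_2 = \cO_1\cap E^{\cO_2}_T$ is essentially the paper's: both of you observe that setting to zero the coordinates of the splits in $S(\cO_1)\setminus S(\cO_1\cap\cO_2)$ turns the system $M^{\cO_1}_T\mathbf{x}=\mathbf{w}$, $\mathbf{x}\geq 0$ into the system for the shared face, because each $Q_j$ projects to the same $P_i$ regardless of which orthant it is viewed in. Where you diverge is connectivity. The paper handles it in one stroke: each $E^{\cO}_T$ is convex (hence connected) as the non-negative solution set of a linear system, and the face identity glues adjacent pieces; this is short but leaves implicit why adjacent pieces actually share a point and why the adjacency structure of $S^N_T$ chains any two supporting orthants together. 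You instead induct on $\ell=N-n$, take Lemma~\ref{l:grafting} as the base case, write $E^N_T=\Psi_{\bar g}^{-1}\bigl(E^{N\setminus\{g\}}_T\bigr)$, and lift a path in the base through connected one-leaf grafting fibers. This buys a more genuinely global argument---connectivity is traced all the way back to the explicit $T$-shaped curve of the one-leaf case rather than asserted from local gluing---at the cost of the path-lifting bookkeeping across orthant boundaries, which you identify and handle plausibly (parametrize the graft point by proportion along an edge, let it converge to the vertex as the edge collapses). One caution: the identity $E^N_T=\Psi_{\bar g}^{-1}\bigl(E^{N\setminus\{g\}}_T\bigr)$ is a consequence of Theorem~\ref{Eispreimage}, which appears \emph{after} this proposition in the paper; its proof does not use this proposition, so there is no circularity, but to keep the logical order clean you should either cite it explicitly as independent or derive the identity directly from the defining linear systems (the $\cO$-system for $T$ is obtained by composing the $\cO$-system for $\Psi_{\bar g}$ with the system for $\Psi_{\leafset}$ on the intermediate orthant).
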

\begin{proof}
For each orthant $\cO \subset S_T^N$, the extension space $E_T^{\cO}$ is connected, since it is the solution of a linear system of equations, restricted to the non-negative orthant.  Any two adjacent orthants $\cO_1, \cO_2 \subset S_T^N$ share some $k$-dimensional boundary orthant, which corresponds to a $k$-clique in the connection graph. 
Suppose the $k$ splits in the clique are $Q_{j_1},Q_{j_2},\dots, Q_{j_k}$.  Then any solutions $\mathbf{x}^{\cO_1}$, $\mathbf{x}^{\cO_2}$ on the boundary only have non-zero weights for the splits $Q_{j_1},Q_{j_2},\dots, Q_{j_k}$.  
Furthermore, since the projection of each $Q_j$ onto a unique split $P_i$ in $S(T)$ does not depend on the orthant, when we remove the 0 weights from each system of equations ($M_T^{\cO_1} \cdot \mathbf{x}^{\cO_1} = \mathbf{w}$ and $M_T^{\cO_2} \cdot \mathbf{x}^{\cO_2} = \mathbf{w}$), the two systems of equations will now be identical.  
Therefore the intersection of $E^{\cO_1}_T$ and $E^{\cO_2}_T$ is precisely each of their intersections with the boundary orthant $\cO_1 \cap \cO_2$.  
\end{proof}

\begin{exmp} Returning to the tree $T$ from Examples 3.4 and 3.6, based on the projection $\Psi_{\bar{4}}(Q_j)$ which deletes the label ``4", we set up the following linear system.
$$\left\lbrace\begin{array}{l}
x_{(25)(134)}+x_{(13)(245)} = 0.2 = w_{(13)(25)}\\
x_{(24)(135)} + x_{(2)(1345)} = 0.3 = w_{(2)(1345)}\\
x_{(45)(123)}+x_{(5)(1234)} = 0.25 = w_{(5)(1234)}\\
x_{(14)(235)} + x_{(1)(2345)} = 0.15 = w_{(1)(2345)}\\
x_{(34)(125)} + x_{(3)(1245)} = 0.2 = w_{(3)(1245)}\\
x_j \geq 0\hspace{1em} \forall j
\end{array}\right.$$
Without the leaf dimensions, the portion of the extension space pictured in Example 3.4 is specified by the first equation and the non-negative constraints.
\end{exmp}

\begin{theorem}\label{Eispreimage} Let $\leafset \subset [N]$ and $T \in \cT^{\leafset}$.  Then $E_T^N = \Psi_{\leafset}^{-1}(T) \subset \mathcal{T}^{N}$.
\end{theorem}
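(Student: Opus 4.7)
The plan is to establish the two inclusions $E_T^N \subseteq \Psi_{\leafset}^{-1}(T)$ and $\Psi_{\leafset}^{-1}(T) \subseteq E_T^N$ separately, using Proposition~\ref{CT} and Lemma~\ref{l:tdr_on_splits} to handle the combinatorics (orthant support) and the linear system~(\ref{metconstraints}) to handle the metric content.

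For the forward inclusion, I would begin with an arbitrary $\bar{T} \in E_T^N$, so $\bar{T} \in E_T^{\cO}$ for some orthant $\cO \subset S_T^N$. By Proposition~\ref{CT}, $\Psi_{\leafset}(S(\cO)) = S(T)$, so the topology of $\Psi_{\leafset}(\bar{T})$ is determined by a split set whose elements correspond, under projection, bijectively with the edges of $T$ (those $Q_j$ with weight zero in $\bar{T}$ simply do not contribute edges). The key observation is that the act of concatenation in Definition~\ref{TDR} precisely sums the lengths of all edges in $\bar{T}$ that collapse onto a single edge $e_i$ of $\Psi_{\leafset}(\bar{T})$: this is exactly the row of $M_T^{\cO}\cdot \mathbf{x}^{\cO}$ indexed by $P_i$. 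The system~(\ref{metconstraints}) forces this sum to equal $w_i$, so the resulting edge in $\Psi_{\leafset}(\bar{T})$ has the same length as the corresponding edge in $T$, and thus $\Psi_{\leafset}(\bar{T}) = T$.

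For the reverse inclusion, I would take $\bar{T} \in \Psi_{\leafset}^{-1}(T)$, let $\cO$ denote the (not necessarily maximal) orthant containing $\bar{T}$, and apply Lemma~\ref{l:tdr_on_splits} to conclude $\Psi_{\leafset}(S(\cO)) = S(T)$. Then $\cO$ is either one of the maximal orthants returned by the combinatorial step, or a face of one, so $\cO \subset S_T^N$. Writing $\mathbf{x}^{\cO}$ for the weight vector of $\bar{T}$ in $\cO$, the equality $\Psi_{\leafset}(\bar{T}) = T$ means that for each split $P_i \in S(T)$, the total weight of all edges of $\bar{T}$ whose splits project onto $P_i$ equals $w_i$ (again because concatenation preserves sums of lengths). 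This is exactly the $i$-th row of $M_T^{\cO} \mathbf{x}^{\cO} = \mathbf{w}$, and the non-negativity $\mathbf{x}^{\cO} \geq 0$ is automatic, so $\bar{T} \in E_T^{\cO} \subseteq E_T^N$.

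The main subtlety I anticipate is the bookkeeping on the boundary: when $\bar{T}$ lies in a lower-dimensional face of a maximal orthant, some coordinates of $\mathbf{x}^{\cO}$ vanish, and one must verify that Lemma~\ref{l:tdr_on_splits} still applies to this face and that the linear system restricted to this face is consistent with the one on any maximal orthant containing it. This is addressed by noting that projection $\Psi_{\leafset}$ of a split $Q_j$ to a split $P_i$ is independent of which orthant $Q_j$ is viewed in, so zero coordinates contribute zero to the corresponding equations and the systems agree on shared faces; this is already the content of Proposition~\ref{connection runtime}'s companion lemma on adjacent orthants. With these two directions established, the theorem follows.
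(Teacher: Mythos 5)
Your proposal is correct and follows essentially the same route as the paper's proof: it uses Proposition~\ref{CT} (and Lemma~\ref{l:tdr_on_splits}) to match the orthant supports, and then observes that within a fixed orthant the condition $\Psi_{\leafset}(\bar{T}) = T$ is exactly the statement that the concatenated edge lengths sum to the $w_i$, i.e.\ the linear system~(\ref{metconstraints}); the paper simply phrases the two inclusions as a single orthant-wise equivalence. The only cosmetic slip is that the adjacent-orthant consistency you invoke is the unnumbered connectivity proposition preceding Theorem~\ref{Eispreimage}, not a companion of Proposition~\ref{connection runtime}.
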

\begin{proof}
By construction and Proposition (\ref{CT}), $E_T^N \subset S_T^N$, so $\Psi_{\leafset}(S(\bar{T}))= S(T)$ for each $\bar{T}\in E_T^N$, i.e. $E_T^N$ and $\Psi_{\leafset}^{-1}(T)$ intersect the same orthant set, given by $S_T^N$. Furthermore, the procedure of dimension reduction as given in Definition 2.4 guarantees that each edge $e_i \in \cE(\Psi_{\leafset}(\bar{T}))$ will be obtained by concatenating edges $\bar{e_j}$ projecting to $e_i$. Thus, to satisfy $T = \Psi_{\leafset}(\bar{T})$, for a fixed orthant $\cO \in S_T^N$, there is a fixed procedure of dimensionality reduction, and a fixed set of splits $\{Q_j\}$, each with weight $\bar{w_j}$, projecting to some $P_i\in S(T)$. Therefore $\Psi_{\leafset}(\bar{T}) = T$ is equivalent to having $\sum_{j: \Psi_{\leafset}(Q_j)= P_i} \bar{w_j} = w_i$ for each $e_i \in \cE(T)$ with weight $w_i$, which is precisely the condition specified by the equations of $E^{\cO}_T$. Since $E_T^N$ and $\Psi_{\leafset}^{-1}(T)$ agree in each orthant, we have the result.
\end{proof}

%\begin{remark}With minimal pre- and post-processing, this algorithm can be used for more general cases, which will be heavily utilized in the next section. Given a phylogenetic tree $T \in \mathcal{T}^{\leafset(T)}$ with a set of $n$ labels $L(T)\subset \mathbb{N}$, and given $\lambda$ a disjoint set of leaf labels to be added, $|\lambda| = \ell$, we can calculate $E^{L(T)\cup \lambda}_{T}\simeq E_{T,n,\ell}$. Since $\mathcal{T}^{L(T)} \cong \mathcal{T}^n$ via a relabeling (which induces an isometry), and similarly $\mathcal{T}^{\lambda\cup L(T)} \cong \mathcal{T}^{n+\ell}$ via an aligned relabeling $R$ (i.e. $R:\lambda \mapsto [n], \lambda^* \mapsto \{n+1,\dots,n+\ell\}$), we can take the image $R(T) \in \mathcal{T}^n$, calculate $E_{R(T),n,\ell}$, and take $E^{L(T)\cup \lambda}_{T} = R^{-1}(E_{R(T),n,\ell})\subset \mathcal{T}^{L(T)\cup \lambda}$.
%\end{remark}

\hspace{-1.5em}{\bf Complexity Results}\\

If we restrict our computation to a single orthant, the matrix $M^{\cO}_T$ can be computed by calculating each $\Psi_{\leafset}(Q_j)$ and matching with $P_i$, which is $O(N)$. Each computation like this determines a column of $M^{\cO}_T$ (with unique non-zero entry in $i$-th position), so $M^{\cO}_T$ is computed in $O(N^2)$. 

The barrier to a polynomial time algorithm is the size of $C^N_T$, which by \cite{UIUC}, we have is 
$$ \frac{(2(n+\ell)-5)!!}{(2n-5)!!} \in O (N^\ell).$$
These two estimates imply that computing all extension matrices is less than quadratic in the support size of the space. 

\begin{prop} The computation of the collection of matrices $M^{\cO}_T$ is $O(N^{\ell+2})$. Since this dominates the complexity for the previous steps, the total complexity of this algorithm is $O(N^{\ell+2})$.
\end{prop}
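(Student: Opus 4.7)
The strategy is a straightforward complexity accounting, combining two estimates that are essentially already in hand. The first is the orthant count: by \cite{UIUC}, the connection cluster satisfies $|C_T^N| = (2N-5)!!/(2n-5)!! \in O(N^\ell)$, and so $S_T^N$ decomposes as a union of $O(N^\ell)$ maximal orthants. The second is the per-orthant cost, already recorded in the paragraph preceding the proposition: a single matrix $M^{\cO}_T$ has $2N-3$ columns, each determined by one projection $\Psi_{\leafset}(Q_j)$ computable in $O(N)$ time (with its unique nonzero row entry then recorded in $O(1)$), giving $O(N^2)$ per orthant. Multiplying yields $O(N^\ell) \cdot O(N^2) = O(N^{\ell+2})$ for the whole collection of matrices.

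For the dominance claim, I would compare with the combinatorial step, which runs in $O(2^{3\ell} n^3)$ by Proposition \ref{connection runtime}. Under the standing asymptotic convention of the paper (treating $\ell$ as a fixed parameter, so that $2^{3\ell}$ is an absolute constant and $N = n+\ell = \Theta(n)$), this becomes $O(n^3)$, which is dominated by $O(N^{\ell+2})$ for every $\ell \geq 1$, since $N^{\ell+2} \geq N^3 \geq n^3$. The case $\ell = 0$ is degenerate, as no extension is performed. Combining this with the previous paragraph shows that the metric matrix construction dictates the overall runtime, which is therefore $O(N^{\ell+2})$.

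No genuine obstacle arises; the argument is essentially bookkeeping assembled from results already proved. The one delicate point worth flagging is being explicit that $\ell$ is a fixed parameter — otherwise the bound $O(N^{\ell+2})$ would not even be meaningful as a polynomial estimate, and the constant $2^{3\ell}$ from the combinatorial step could not be absorbed into a big-$O$ in $N$ — but this convention is implicit throughout the preceding runtime analyses and is the natural regime for the intended application, where one wants to extend a tree by a relatively small number of new leaves.
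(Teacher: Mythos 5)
Your proposal is correct and follows essentially the same route as the paper: multiply the $O(N^2)$ per-orthant cost of building $M^{\cO}_T$ (from the paragraph preceding the proposition) by the $O(N^\ell)$ bound on $|C_T^N|$ from \cite{UIUC}, then observe that the resulting $O(N^{\ell+2})$ dominates the $O(2^{3\ell}n^3)$ combinatorial step of Proposition \ref{connection runtime} in the fixed-$\ell$ regime, which is exactly the paper's remark that the total $N^{\ell+2}+2^{3\ell}n^3$ is polynomial of degree $\ell+2$ for $\ell \ll n$ fixed. Your explicit justification of the dominance claim is in fact slightly more careful than the paper's own one-line proof.
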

\begin{proof}
Combining with Proposition \ref{connection runtime}, the total algorithm will be dominated by $N^{\ell+2} + 2^{3\ell}n^3$, and so we have the complexity bound given in the statement. For $\ell << n$ fixed, this is polynomial of degree $\ell + 2$. 
\end{proof}

The actual space of solutions, a convex affine polytope, can be presented by its boundary vertices in each orthant; interior points can then be expressed as convex combinations of boundary vertices. These can be computed, but there are a lot of them: since $M$ is rank $n$, we expect around ${N \choose n}$ basic feasible solutions, which gives an estimate for boundary vertices. In low dimensions, enumeration might be reasonable; there exist algorithms to do this. In general, we will operate on this space in indirect ways. 

\begin{lemma}\label{decision} 
Let binary tree $T \in \cT^{\leafset}$ with $\leafset \subset [N]$, $|\leafset| = n$, and $|N \backslash \leafset| = \ell$.  To test whether a point $\bar{x} \in \mathcal{T}^N$ is in $E^N_T$, it is sufficient to check whether $\Psi_{\leafset}(\bar{x}) = T$, which is $O(N)$.
\end{lemma}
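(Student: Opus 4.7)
The statement really has two parts: a correctness claim ($\bar x \in E_T^N$ iff $\Psi_{\leafset}(\bar x)=T$) and a complexity claim (this test runs in $O(N)$). The plan is to treat them separately, since the first part is essentially free from earlier results while the second is where a bit of bookkeeping is needed.

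For correctness, I would simply invoke Theorem~\ref{Eispreimage}, which already establishes $E_T^N = \Psi_{\leafset}^{-1}(T)$. By definition, $\bar x \in \Psi_{\leafset}^{-1}(T)$ is equivalent to $\Psi_{\leafset}(\bar x) = T$, so no additional argument is needed; the decision procedure is exactly the membership test for the right-hand side of that theorem. I would state this explicitly in one sentence rather than redoing any of the orthant-by-orthant analysis.

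For the complexity, I would argue that a point $\bar x \in \cT^N$ is represented by its topology (a set of at most $2N-3$ splits) together with their non-negative weights, and that computing $\Psi_{\leafset}(\bar x)$ reduces, by Definition~\ref{TDR}, to iteratively pruning each of the $\ell$ leaves in $N \setminus \leafset$: delete the leaf edge and, if its attachment vertex has become degree two, concatenate the two remaining adjacent edges by adding their weights. With a standard adjacency-list representation of the tree, each prune-and-concatenate operation is constant time, so the $\ell \le N$ prunings cost $O(N)$ total. Equivalently, one can iterate through the split set of $\bar x$ once, project each split $Q$ to $Q \cap \leafset$, discard splits with empty or singleton intersection, and accumulate weights into a hash indexed by the projected split; this is also linear in the number of splits, hence $O(N)$. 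Comparing the resulting tree to $T$ (which has only $O(n) \le O(N)$ edges and splits) is then another $O(N)$ traversal.

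The only subtlety, and the piece worth being explicit about, is that ``check whether $\Psi_{\leafset}(\bar x)=T$'' must include both the topology and the edge weights, not just the underlying split set; this is why the equality of trees in $\cT^{\leafset}$ is the right notion, and why I would phrase the comparison step as matching projected-and-summed weights against $\mathbf w$, the weight vector of $T$. The main (minor) obstacle is just ensuring the data structures support the prune-and-concatenate step in amortized constant time — standard, but worth noting — after which the $O(N)$ bound follows immediately.
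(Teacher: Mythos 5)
Your proposal matches the paper's proof in both structure and substance: correctness follows by citing the preimage theorem ($E_T^N=\Psi_{\leafset}^{-1}(T)$), and the complexity bound comes from performing the $\ell$ prune-and-concatenate steps in $O(\ell)$ followed by an $O(n)$ comparison of the two binary trees on $\leafset$. Your explicit remark that the comparison must check weights as well as topology is a small clarity improvement over the paper's terser ``determine isometry by a simultaneous traversal,'' but the argument is the same.
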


\begin{proof}
The first part is obvious from Theorem \ref{TDR}. For the complexity, we note that in order to check the latter condition, we must perform dimensionality reduction on $\bar{x}$, which can be done in $O(\ell)$ from the tree representation of $\bar{x}$: each successive leaf removal results in at most one concatenation (see Definition \ref{TDR}). Then we must compare $\Psi_{\leafset}(\bar{x})$ to $T$. Since they are both binary trees in $\mathcal{T}^{\leafset}$, they each have $2n-3$ splits and, as graphs, $2n-4$ vertices. We can therefore determine isometry by a simultaneous traversal, which is $O(n)$. Since $N >n,\ell$, we have the result, which is not tight.
\end{proof} 
For the more general statement of this, see Prop (\ref{Ecomp}).

\begin{remark}
To find a point $\bar{x}$ in $E^{\cO}_T$ which optimizes a linear function $f(\bar{x})$ in orthant $\cO$, standard linear programming methods will find a global solution in polynomial time, with an average runtime $\sim N^3 B$ using the simplex method. To estimate $B$, we note that matrices $M^{\cO}_T$ will always be $2n-3 \times 2N-3$ (binary) matrices, with $2n-3$ float edge lengths, requiring a total of $O(Nn)$ bits, for a total average run time on the order of $N^4n$.
\end{remark}

\subsection{Comparing extension spaces}
One might hope that, as we have $d_{\mathcal{T}^{\leafset}}(\cdot,\cdot)$ which gives a well-defined metric on $\mathcal{T}^{\leafset}$, we might be able to use this metric to define a meaningful distance between $E^N_{T_1}$ and $E^N_{T_2}$ as sets. Though this calculation is possible, distances between the sets $E_1$ and $E_2$ in $\mathcal{T}^N$ do not produce a metric on extension spaces.
%
%Given $E_{T,n,\ell}$, we may now compare this object with a tree $\mathfrak{T} \in \mathcal{T}^{n+\ell}$. Since $E_{T,n,\ell}$ is closed, we have a well-defined distance 
%$$ d(E_{T,n,\ell},\mathfrak{T}) = \inf_{\bar{T} \in E_T}d_{\mathcal{T}}(\bar{T}, \mathfrak{T})$$
%which will be 0 iff $\mathfrak{T} \in E_{T,n,\ell}$.
%We denote by $\mathcal{E}^N$ the space of extension spaces with $n+\ell=N$, including, as extension spaces $E_{\mathfrak{T},N,0}$, points in $\mathcal{T}^N$ without modification. 
%
%%The extension spaces with $\ell \geq 1$ are never compact, but we can also define such a distance on two extension spaces, by the following lemma.
%%\begin{lemma} Let $E_{T,n,\ell}, E_{T',n',\ell'}$, with $N = n+\ell = n'+\ell'$, be extension spaces of $T$ and $T'$, respectively. Then the function
%%$$ d_{\mathcal{E}^N}(E_{T,\ell},E_{T',\ell'} ):=\inf_{\bar{T}\in E_T, \bar{T'}\in E_{T'}}d_\mathcal{T}(\bar{T},\bar{T'})+ $$
%%is  on $\mathcal{E}^N$.
%%\end{lemma}
%%\begin{proof}
%% 
%%\end{proof}
%
\begin{remark} The distance function $d_{E^N}: (E^N_T,E^N_{T'})\mapsto\inf_{\bar{T},\bar{T'} \in \cT^N}d_{\cT^N}(\bar{T},\bar{T'})$ is not a pseudometric. To see this, take two distinct points $\mathfrak{T}_1, \mathfrak{T}_2$ in a non-trivial extension space $E$; they are each trivial extensions of themselves, so they are in the domain of the distance function, and there is a positive tree space distance $d_{\mathcal{T}^N}(\mathfrak{T}_1,\mathfrak{T}_2) = d_{E^N}(\mathfrak{T}_1,\mathfrak{T}_2)$. However, each have $\inf_{\bar{T}\in E}(\mathfrak{T}_i,\bar{T}) = 0$, $i =1,2$, so $\inf_{\bar{T}\in E}(\mathfrak{T}_1,\bar{T}) + \inf_{\bar{T}\in E}(\mathfrak{T}_2,\bar{T}) = 0$, which violates the triangle inequality. Furthermore, $d(E_1,E_2) = 0$ and $d(E_2,E_3) = 0$ do not imply $d(E_1, E_3) = 0$. 
%Furthermore,The function $(E_1,E_2)\mapsto\inf_{\bar{T},\bar{T'}}d_\mathcal{T}(\bar{T},\bar{T'})$ isn't necessarily achieved by a unique geodesic in $\mathcal{T}^{n+\ell}$. For example, take any tree $T$ such that $|e|>1$ for all edges $e$ of $T$, with $E_{T,n,\ell}$, and let $T'=T-1$ be the tree with the same split set as $T$, for which each edge $|e|_{T'} = |e| -1$, i.e. the length of the split in $T'$ is 1 less than the corresponding edge length in $T$. Then $E_{T',n,\ell}$ will be parallel to $T$ in all orthants, with a fixed distance of 1 corresponding to the translation of each preimage tree. 
\end{remark}

However, the vanishing of this quantity is meaningful, and it will correspond to a ``compatibility" of trees:

\begin{lemma} 
Let $E_{T,n,\ell}, E_{T',n',\ell'}$, with $N = n+\ell = n'+\ell'$, be extension spaces of $T$ and $T'$, respectively. Then $ d(E_{T,n,\ell}, E_{T',n',\ell'}) = 0$ if and only if there exists a tree $\mathfrak{T}\in \mathcal{T}^N$ which contains all the splits of $T$ and all the splits of $T'$, with lengths as in $T$ and $T'$.
\end{lemma}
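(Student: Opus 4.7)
The backward direction is immediate from Theorem~\ref{Eispreimage}: a tree $\mathfrak{T}$ realizing all the splits of $T$ (resp.\ $T'$) with the prescribed lengths has $\Psi_{\leafset(T)}(\mathfrak{T}) = T$ and $\Psi_{\leafset(T')}(\mathfrak{T}) = T'$, so $\mathfrak{T} \in E_{T,n,\ell} \cap E_{T',n',\ell'}$, forcing $d(E_{T,n,\ell}, E_{T',n',\ell'}) = 0$.

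For the forward direction, the plan is to upgrade the infimum $d = 0$ to a witness point $\mathfrak{T} \in E_{T,n,\ell} \cap E_{T',n',\ell'}$. I would choose minimizing sequences $\bar{T}_k \in E_{T,n,\ell}$, $\bar{T}'_k \in E_{T',n',\ell'}$ with $d_{\cT^N}(\bar{T}_k, \bar{T}'_k) \to 0$. By Proposition~\ref{CT}, each extension space is supported on finitely many orthants, and within each orthant it is the closed polyhedral solution set of the linear system~(\ref{metconstraints}). So, after passing to a subsequence, one may assume all $\bar{T}_k$ lie in a fixed orthant $\cO \subset S_T^N$ and all $\bar{T}'_k$ in a fixed $\cO' \subset S_{T'}^N$.

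The crux is compactness. The constraints $M_T^{\cO}\bar{\mathbf{x}} = \mathbf{w}$ with $\bar{\mathbf{x}} \geq 0$ bound every internal edge weight of $\bar{T}_k$ as well as every leaf-edge weight indexed by $a \in \leafset(T)$, and symmetrically for $\bar{T}'_k$; the only unconstrained directions are the leaf edges indexed by $[N]\setminus\leafset(T)$ and $[N]\setminus\leafset(T')$, respectively. Using that $\cT^N = \BHV_N \times (\RR_{\geq 0})^N$ is a metric product with the leaf edges occupying the Euclidean factor, I would replace the $a$-leaf weight in both $\bar{T}_k$ and $\bar{T}'_k$ by $0$ for every $a \in [N] \setminus (\leafset(T) \cup \leafset(T'))$: these coordinates are unconstrained in both extension spaces, so membership is preserved, and the modification only weakly decreases $d_{\cT^N}(\bar{T}_k, \bar{T}'_k)$. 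After this modification, for each leaf $a \in [N]$ at least one of the two sequences has its $a$-coordinate bounded, and $d \to 0$ forces the other to be bounded as well. Hence both sequences are bounded.

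With bounded sequences inside the closed polyhedra $E^{\cO}_T$ and $E^{\cO'}_{T'}$, extract convergent subsequences $\bar{T}_k \to \mathfrak{T} \in E^{\cO}_T$ and $\bar{T}'_k \to \mathfrak{T}' \in E^{\cO'}_{T'}$; continuity of $d_{\cT^N}$ gives $d(\mathfrak{T}, \mathfrak{T}') = 0$, so $\mathfrak{T} = \mathfrak{T}'$. Theorem~\ref{Eispreimage} then yields $\Psi_{\leafset(T)}(\mathfrak{T}) = T$ and $\Psi_{\leafset(T')}(\mathfrak{T}) = T'$, which by Lemma~\ref{l:tdr_on_splits} together with~(\ref{metconstraints}) is exactly the condition that $\mathfrak{T}$ carries lifts of every split of $T$ and every split of $T'$ with the prescribed weights. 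The main obstacle is the compactness step: minimizing sequences can a priori escape to infinity along the free leaf-edge directions, and neutralizing those coordinates via the product metric structure is the essential maneuver.
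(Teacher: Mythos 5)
Your proof is correct and follows essentially the same route as the paper's: reduce to showing the two extension spaces actually intersect, then apply Theorem~\ref{Eispreimage} to an intersection point to extract the common supertree. The paper dispatches the intersection step in one line (``if distance is zero then they intersect, since extension spaces are locally affine''), whereas you supply the compactness argument that line is implicitly relying on---in particular, neutralizing the unbounded free leaf-edge directions via the product structure of $\cT^N$---which is a welcome increase in rigor but not a different approach.
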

\begin{proof}
If distance is zero then they intersect, since extension spaces are locally affine. If they intersect, their intersection is non-empty, and we can choose a tree $\mathfrak{T}$ in this intersection. Then by Proposition (\ref{Eispreimage}), $\mathfrak{T}$ projects to each of $T$ and $T'$ under $\Psi_{\leafset(T)}$ and $\Psi_{\leafset(T')}$, and so $\mathfrak{T}$ contains a preimage of each split $P \in T, P'\in T'$, which separates the same leaves that $P$ and $P'$ do. Furthermore by previous results we know that the pairwise distances between leaves are preserved between $T$ and $\mathfrak{T}$ (and $T'$ and $\mathfrak{T}$).
\end{proof}
Then $\mathfrak{T}$ can be seen as combining the information of $T$ and $T'$, as in the case that $T$ and $T'$ are samples of a larger tree on different taxa subsets, and this $ d(E_{T,n,\ell}, E_{T',n',\ell'}) = 0$ case (and later, $ d(E_{T,n,\ell}, E_{T',n',\ell'}) <\epsilon$) is what we will explore in the next section.

\section{Extension of tree sets}

 By Theorem \ref{Eispreimage}, an intersection point of two extension spaces is an intersection of the preimages. In particular, if $\bar{T}\in \mathcal{T}^N$ is contained in $\Psi_{\leafset(T)}^{-1}(T)$ and $\Psi_{\leafset(T')}^{-1}(T')$, then by definition, $\Psi_{\leafset}(\bar{T}) = T$ and $\Psi_{\leafset'}(\bar{T}) = T'$. This means that $\bar{T}$ can be seen as ``combining" the information of two ``compatible" trees, with different leaf sets $\leafset$ and $\leafset'$.

\begin{exmp} Building on Example 3.4, suppose we have a second tree $T'$ with labels $\leafset(T') = \{1,2,3,4\}$, leaf edge lengths $(0.15,0.3,0.2,0.35)$ respectively, and interior edge $(13)(24)$ with length 0.15, pictured on the left in Figure~\ref{fig:intersection}. Then the preimage of $T'$, shown in the center of Figure~\ref{fig:intersection}, under pruning of the 5th leaf is also a $T'$-shaped subspace of $\mathcal{T}^5$, and it intersects $\Psi_{\bar{5}}^{-1}(T)$ in a single point (circled), $(0.05,0.15)$ in the $(13)-(25)$ plane (green), representing the tree pictured on the right in Figure~\ref{fig:intersection}, with leaf edges $(0.15,0.25,0.3,0.2,0.35,0.25)$ %$(0.15,0.3,0.2,0.35,0.25)$
, respectively. This combination of information can also be realized as the pairwise path distance matrix of $\bar{T}$, which contains the distance matrices for $T$ and $T'$ as distinct minors.

\begin{figure}
    \centering
    \includegraphics[scale=0.5]{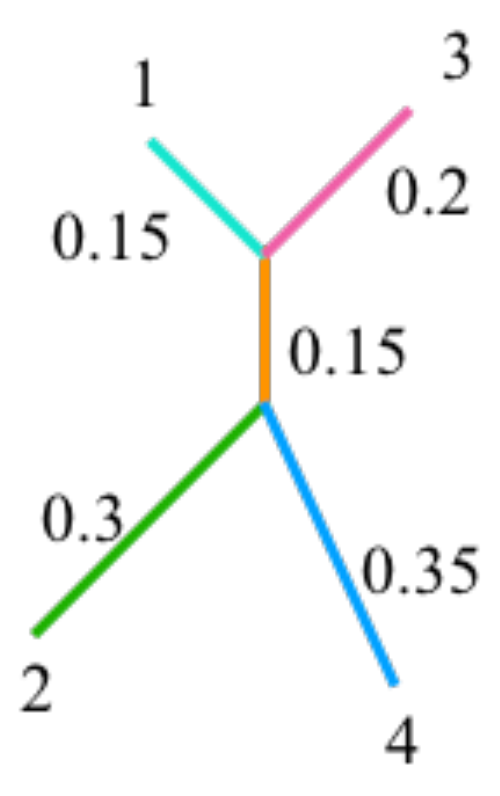}
    \includegraphics[scale=0.5]{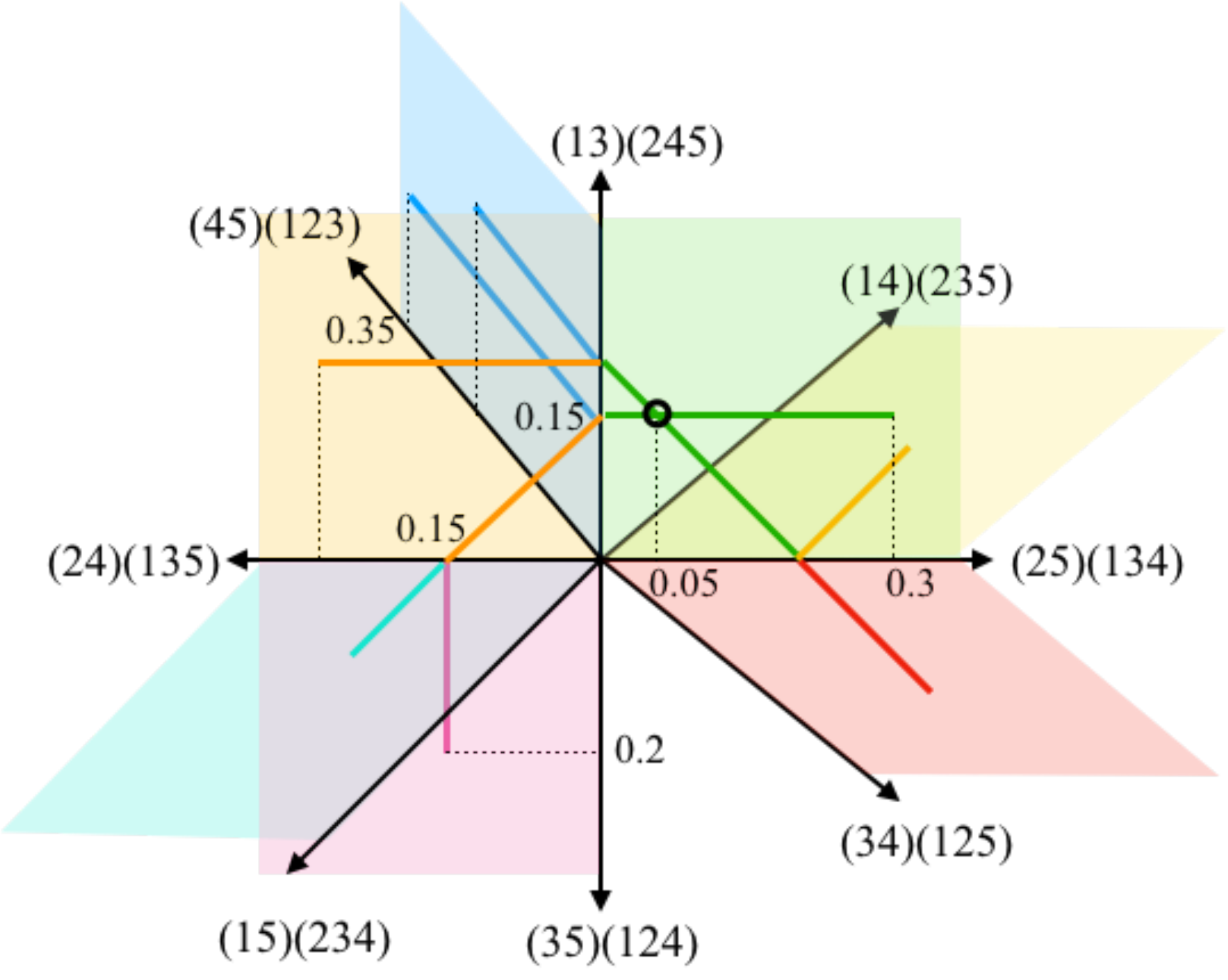}
    \includegraphics[scale=0.5]{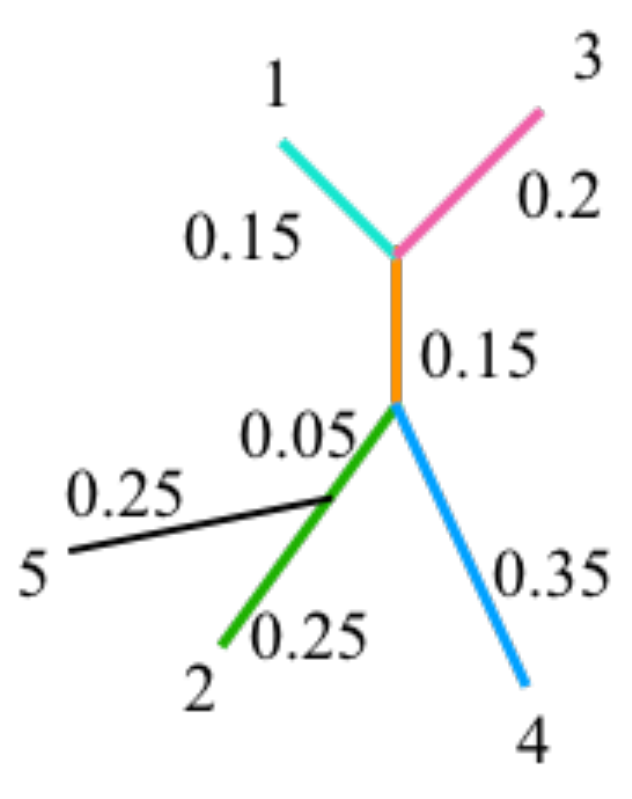}
    \caption{Left, a second tree $T'$ with leaves $\{1,2,3,4\}$. Center, the $T$-shaped subspace of $\Psi_{\bar{5}}^{-1}(T)$ and the $T'$-shaped subspace of $\Psi_{\bar{5}}^{-1}(T')$, with their unique intersection circled.  Right, the tree at the intersection point of the two subspaces. }
    \label{fig:intersection}
\end{figure}

$$ A_{\bar{T}} = \left(
\begin{array}{l l l l l }
0 & .65 & .35 & .65 & .6 \\
.65 & 0 & .7  & .7 & .55\\
.35 & .7 & 0 & .7 & .65 \\
.65 & .7 & .7 & 0 & .65 \\
.6 & .55 & .65 & .65 & 0 
\end{array}\right)$$

\end{exmp}

In this section, we are interested in characterizing intersection points of this type, and quickly computing the equations which define the complete set. More generally, consider a collection of trees $\mathbf{T} = T_1,\dots,T_k$ with leaf-sets $\leafset_r$, where $|\leafset_r| = n_r$.  By fixing $\ell_r = N - n_r$ we consider their tree dimensionality reduction preimages $\Psi_{\leafset_r}^{-1}(T_r)$ collectively in $\mathcal{T}^N$. We can now define generalizations of the 
\begin{itemize}
\item {\em connection cluster} $C^N_\mathbf{T} := \cap_{r} C^N_{T_r}$, 
\item {\em connection space} $S^N_\mathbf{T} := \cap_r S^N_{T_r}$, and 
\item {\em connection graph} $G^N_\mathbf{T} := \cap_r G^N_{T_r}$. 
\end{itemize} 
These correspond to the topologies in $\cT^N$ which simultaneously extend $S(T_r)$ for all $T_r \in \mathbf{T}$. 

As in Section 3, where $\mathbf{T} = \{T\}$, we will proceed by first finding $C^N_\mathbf{T}$, and finding solutions to a system of metric constraints, which will provide us with the {\bf intersection extension space} $E^N_\mathbf{T}:= \cap_r E^N_{T_i}$. 

However, due to the high codimension of $E^N_{T_r}$, the extension space of $\mathbf{T}$ can be unstable under small treespace perturbations of the $T_r$. In the next section, we will present a relaxation which will allow for bounded independent perturbations of $T_1,\dots, T_k$, which produces a neighborhood of each $E^N_{T_r}$ for transverse intersection. This also gives rise to two ``measures of compatibility", $\alpha_\mathbf{T}$ and $p_\mathbf{T}$, the minimum parameter under two relaxation regimes giving a non-empty extension intersection.

In the final section, we will discuss methods for consolidating more diverse tree topologies, which will choose orthants of highest likelihood for analysis.

It may be useful to first give a few remarks on $N$. We are assuming that the data has consistently labeled trees - i.e. that label $j$ represents the same sample across trees in $\mathbf{T}$. %This means that the tree $T_i$ will not live in the standard $\mathcal{T}^{n_i}$, but instead the isometric $\mathcal{T}^{L_i}$. 
If the labels are numbers, we may want to take $N$ equal to the maximum label, to represent missing taxa, but it might also make sense to take $N$ equal to the number of different labels, which would simplify the solution space and decrease computation time, and add degrees of freedom later. Whatever $N$ is chosen, we will assume that the label set $\leafset_r$ of $T_r$ is a subset of $[N]$, and we will denote by $\Psi_{\leafset_r}$ the TDR projection map from $\mathcal{T}^N$ to $\mathcal{T}^{\leafset_r}$.

\subsection{Combinatorial intersection}

Given $T_1,\dots,T_k$ binary trees with leaves $\leafset_r$ such that $L_r \subset [N]$ for each $r$, we can construct $G^N_{T_r}$ for each $r$, and take the intersection, to find tree topologies which project under $\Psi_{\leafset_r}$ to $S(T_r)$ for each $r$.  However, if we are starting from the split sets $S(T_r)$, it is much more efficient to construct the intersection itself, since it can be much smaller than the largest $G^N_{T_r}$. The algorithm proceeds via the following steps.
\begin{enumerate}
\item Reindex the trees so that $T_1$ has the greatest number of leaves $n_1$, and therefore the smallest $\ell_1$. This will ensure that we begin with the smallest connection graph.
\item Generate $G = G^N_{T_1}$. 
\item For each $Q \in V(G)$, check if $\Psi_{\leafset_r}(Q_j)\in S(T_r)$ for all $i = 2,\dots,k$. If not, remove $Q$ from $G$, as well as all of its incident edges. 
\item Find $(2N-3)$-cliques in $G$, output this set as $C^N_{\mathbf{T}}$. 
\end{enumerate}

\begin{prop} Given $\mathbf{T} = \{T_r\}$ a finite set of binary trees, and $N$ such that $\leafset_r \subset [N]$ for each $r$, then $G = \bigcap_{r}G^{N}_{T_r}$, and therefore topology $C \in C^N_{\mathbf{T}}$ if and only if $\Psi_{\leafset_r}(S(C)) = S(T_r)$ for each $T_r$. 
\end{prop}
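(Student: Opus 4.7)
The plan is to verify the two assertions in sequence: first, that the graph $G$ returned by the algorithm equals $\bigcap_{r} G^{N}_{T_r}$ (established by matching vertex and edge sets), and second, that this graph-level identity combines with Proposition~\ref{CT} applied individually to each $T_r$ to yield the characterization of $C^N_{\mathbf{T}}$.

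For the first assertion, I would recall from Section~\ref{extcombstep} that $G^N_{T_r}$ has vertex set $\mathbf{Q}_{T_r} = \{Q : \Psi_{\leafset_r}(Q) \in S(T_r)\}$ and edges given by compatibility of split pairs. Thus $\bigcap_r G^N_{T_r}$ has as vertices the splits $Q$ on $[N]$ satisfying $\Psi_{\leafset_r}(Q) \in S(T_r)$ for every $r$, with edges being compatibility restricted to this vertex set. The algorithm begins with $G = G^N_{T_1}$ and deletes precisely those vertices failing $\Psi_{\leafset_r}(Q) \in S(T_r)$ for some $r \geq 2$, along with their incident edges. Because compatibility of two given splits on $[N]$ is an intrinsic binary relation on the split lattice and is unchanged across the $G^N_{T_r}$, the surviving edges are exactly the compatibility edges among the surviving vertices. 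Hence $G = \bigcap_r G^N_{T_r}$ as graphs.

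For the second assertion, by definition $C^N_{\mathbf{T}} := \bigcap_r C^N_{T_r}$. Proposition~\ref{CT} gives $C \in C^N_{T_r}$ if and only if $\Psi_{\leafset_r}(S(C)) = S(T_r)$, so applying the equivalence to each $r$ and intersecting yields the stated characterization: $C \in C^N_{\mathbf{T}}$ iff $\Psi_{\leafset_r}(S(C)) = S(T_r)$ for every $r$. In algorithmic terms, such a $C$ corresponds to a $(2N-3)$-clique living in each $G^N_{T_r}$, hence in their intersection $G$, so the final clique-enumeration step recovers exactly $C^N_{\mathbf{T}}$.

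The main obstacle is bookkeeping rather than conceptual: one must check that the asymmetric pruning in the algorithm (start with $G^N_{T_1}$ and delete) actually produces the symmetric intersection. This is handled cleanly by the intrinsic nature of compatibility, which guarantees that the edge relation is determined once the correct vertex set has been identified. A secondary subtlety is that the characterization requires set equality $\Psi_{\leafset_r}(S(C)) = S(T_r)$ and not mere containment; this is already embedded in Lemma~\ref{l:tdr_on_splits} and Proposition~\ref{CT}, where the cardinality count $|S(C)| = 2N-3$ reducing to $2n_r-3 = |S(T_r)|$ under pruning of the $\ell_r$ extra leaves forces the containment to be an equality.
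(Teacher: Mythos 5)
Your proposal is correct and follows essentially the same route as the paper: verify that the pruned graph has the right vertex set, observe that the edge relation (pairwise compatibility) is intrinsic to the splits and hence independent of which $T_r$ one starts from, and then apply Proposition~\ref{CT} to each $T_r$ together with the definition $C^N_{\mathbf{T}} = \bigcap_r C^N_{T_r}$ to get the stated characterization. The paper additionally spells out that maximal $(2N-3)$-cliques of $G$ coincide with those of each $G^N_{T_r}$, which you cover implicitly in your closing remark, so there is no substantive difference.
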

\begin{proof}
By construction of the final graph $G$, $V(G)$ consists of splits $Q_j$ such that $\Psi_{\leafset_r}(Q_j) \in S(T_r)$ for each $r$. This is the vertex set of $\cap_r G^N_{T_r}$, by construction. The edges of $G$, formed in Step 2, come from pairwise compatibility, which is independent of the original tree set. We know also that compatibility determines adjacency equally for each $G^N_{T_r}$, so that the intersection of connection graphs is the full subgraph of the intersection of the vertex set in $L_N^1$, and any edge which is present in $G^N_{T_1}$ is present in all $G_{T_r}$ containing both endpoints. Therefore all edges of $\cap_r G^N_{T_r}$ are present in Step 2, and none are deleted, since their endpoints remain. So $G = \cap_r G^N_{T_r}$. 

We can also note that if $K$ is a maximal $(2N-3)$-clique in $G$, then $K$ is also a maximal clique in each $G^N_{T_r}$, and conversely, so that $C^N_\mathbf{T}= \cap_r C^N_{T_r}$.

Next, we note that by Proposition 3.6, topology $C \in C^N_{T_r}$ if and only if $\Psi_{\leafset_r}(S(C)) = S(T_r)$. Then since $C^N_\mathbf{T}=\cap_r C^N_{T_r}$, it follows that $C \in C^N_\mathbf{T}$ if and only if $\Psi_{\leafset_r}(N) = S(T_r)$ for each $r$. 
\end{proof}

\begin{defi} We call a set $\mathbf{T} = \{T_r\}$ of binary trees {\bf combinatorially compatible} if $C^N_{\mathbf{T}}\neq \emptyset$.
\end{defi}
This definition relates to edge compatibility (Definition \ref{compatible}), but edge compatibility is not a special case of it.  The requirement that the inputs be binary trees would need to be generalized.

%\begin{lemma} If $\mathbf{T} = \{T_1,\dots,T_k\}$ distinct with $k > 2^{N-2}$, then $C^N_\mathbf{T}$ is empty.
%\end{lemma}
%\begin{proof}
%If two distinct trees $T_i$ and $T_j$ have the same leaf set, then their extension spaces are disjoint, so the full intersection is empty. The bound above is therefore the number of possible non-trivial leaf sets.
%\end{proof}

\begin{prop}
If $N\cdot k < 2^{2\ell_1}$, then this algorithm is $O(2^{3\ell_1}n_1^3)$. If $N\cdot k > 2^{2\ell_1},$ then it is $O(2^{\ell_1}n_1^4k^2)$. Either way, it is $O(2^{3\ell_1}n_1^4k^2)$. 
\end{prop}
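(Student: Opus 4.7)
The plan is to bound the running time of each of the four steps of the algorithm individually and then take the dominating term under each of the two complexity regimes. Throughout, I will reuse the size estimates for $G^N_{T_1}$ from the proof of Proposition \ref{connection runtime}: $|V(G^N_{T_1})| = O(n_1 2^{\ell_1})$ and, consequently, $|E(G^N_{T_1})| = O(n_1^2 2^{2\ell_1})$.

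First, Step 1 is a single sort of the tree list and contributes only $O(k\log k)$. Step 2 is exactly the graph-construction part of the single-tree algorithm of Proposition \ref{connection runtime}; its cost is dominated by the pairwise compatibility tests and works out to $O(n_1^3 2^{2\ell_1})$. Next, I would analyze Step 3: for each of the $O(n_1 2^{\ell_1})$ vertices $Q$ and each of the $k-1$ remaining trees $T_r$, we compute $\Psi_{\leafset_r}(Q)$ in time $O(N)$ and then test whether that projected split lies in $S(T_r)$. A naive scan compares the projected split against each of the $2n_r-3$ splits of $T_r$ using $O(n_r)$ label comparisons apiece, for $O(n_1^2)$ per test; coupled with the bookkeeping cost of deleting all incident edges whenever a vertex is pruned, this bounds Step 3 by a term which I will loosen to $O(2^{\ell_1} n_1^4 k^2)$ (the extra factors allowing slack for edge-removal and for membership checks that sweep through the combined split list of the $k$ trees). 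Finally, Step 4 applies the Tsukiyama et al.\ clique-enumeration routine to a subgraph of $G^N_{T_1}$, costing at most $O(|V|\cdot|E|) = O(2^{3\ell_1} n_1^3)$.

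Adding the four bounds, I then split into cases. Under the first regime $N k < 2^{2\ell_1}$, Step 3's contribution $O(2^{\ell_1} n_1^4 k^2)$ is absorbed by the clique-enumeration cost of Step 4 (Step 2 is also of the same order or lower), so the total collapses to $O(2^{3\ell_1} n_1^3)$. Under the second regime $N k > 2^{2\ell_1}$, the Step 3 term strictly dominates, yielding $O(2^{\ell_1} n_1^4 k^2)$. The unconditional bound $O(2^{3\ell_1} n_1^4 k^2)$ then follows as the pointwise maximum of the two.

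The main obstacle will be pinning down Step 3's cost so the crossover lines up exactly at $N k = 2^{2\ell_1}$ rather than at some other threshold; an overly crude bound would shift the transition. I would therefore be careful about which factors (projection cost $O(N)$ versus membership-scan cost $O(n_1^2)$) attach to which index of summation, and about organizing the pruning so that each edge is touched only a constant number of times across Step 3. All other steps are either standard enumerations or quoted directly from Proposition \ref{connection runtime}, so no subtler obstruction is expected.
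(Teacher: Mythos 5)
Your proposal follows the paper's proof essentially step for step: the same per-step bounds (Step 2 inherited from the single-tree analysis, Step 3 at $O(2^{\ell_1}n_1^3 N k)$ loosened via $N < k\cdot n_1$, Step 4 via the Tsukiyama et al.\ clique enumeration on a graph no larger than $G^N_{T_1}$) and the same two-regime comparison. The one caution, which you already flag yourself, is that the crossover at $N k = 2^{2\ell_1}$ must be read off by comparing Step 4 against the tighter form $O(2^{\ell_1}n_1^3 N k)$ of the Step 3 bound rather than against the loosened $O(2^{\ell_1}n_1^4 k^2)$, and that is exactly how the paper argues it.
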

\begin{proof}
Reindexing the trees to put the tree with the most leaves first is $O(k)$. By Proposition 3.8, we have that Step 2 is $O(2^{2\ell_1}n_1^3)$. For Step 3, we iterate through each of $\sim 2^{\ell_1}n_1$ vertices, and for each, delete leaves to get down to $L_r$ (order $N$) and compare with the $2n_r - 3$ splits of $T_r$ (order $n_r(2n_r-3) \sim 2n_r^2 \lessapprox 2n_1^2$).  In total, then, Step 3 is $O(2^{\ell_1}n_1^3Nk)$, and we can simplify to $O(2^{\ell_1}n_1^4k^2)$ by noting that $N < k\cdot n_1$. For Step 4, in the worst case, the size of $G$ is comparable to $G_{T_1}^N$, so by Proposition 3.8, Step 4 is $O(2^{3\ell_1}n_1^3)$. 
If $N\cdot k < 2^{2\ell_1}$, then Step 4 dominates. If not, Step 3 does. 
\end{proof}

\subsection{Metric intersection}
Given a binary topology $C \in C_{\mathbf{T}}^N$%$K_{N-3}$ clique 
with splits $Q_1,\dots, Q_{N-3}$, plus leaf splits $Q_{N-2},\dots,Q_{2N-3}$, we have an $2\ell_r$-dimensional solution space for each $T_r$, cut out by a set of equations $$ x_{m_1}+x_{m_2}+\dots+x_{m_{a_j}} = w_i$$ for each $P_i \in S(T_r)$, $i = 1,\dots, 2n_r - 3$. We take the collection of equations from all $T_r$, and this defines a solution space: either it is empty, or there is some linear subspace of solutions, with dimension at most $\min_r \ell_r$, which simultaneously satisfies the collection of metric constraints. Unlike the single-tree extension case, this system can be overdetermined, and have no solution in an orthant $\cO \in S_\mathbf{T}^N$.

\begin{defi} 
Let $\cO \in S^N_\mathbf{T}$ be an orthant in the intersection cluster, with split lengths parametrized by respective coordinates $(x_1, \dots, x_{2N-3})$.%$(x_1,\dots,x_m,\dots,x_{2N-3})$. 
Let $M^{\cO}_{T_r}$ be the $(2n_r-3)\times(2N-3)$ projection matrix of $S$ to $\mathcal{T}^{\leafset_r}$. We then write 
\begin{equation}  \label{exeq}
\left(
\begin{array}{c}
M^{\cO}_{T_1}\\
\hline
M^{\cO}_{T_2}\\
\hline
\vdots \\
M^{\cO}_{T_k}
\end{array}
\right) \mathbf{x}^{\cO} = \left(
\begin{array}{c}
\mathbf{w}_1\\
\mathbf{w}_2\\
\vdots \\
\mathbf{w}_k
\end{array}
\right), \hspace{1em} \mathbf{x}^{\cO} \geq 0
\end{equation}

Then the solution space of $\mathbf{x}^{\cO}$ satisfying these is denoted $E^{\cO}_\mathbf{T}$. The matrix on the left is denoted $M^{\cO}_\mathbf{T}$ for brevity, and the vector on the right hand side $\mathbf{w_T}$, so expressing the equation more compactly, $M^{\cO}_\mathbf{T}\mathbf{x}_{\cO} = \mathbf{w_T}$. The {\bf intersection extension space} of a collection $\mathbf{T}$ of trees is defined to be $$E_{\mathbf{T}} := \bigcup_{\cO \in S_\mathbf{T}} E^{\cO}_{\mathbf{T}},$$ where as before, $N$ is taken to be the size of the total leaf set $L(\mathbf{T})$ and $\ell_r = N-n_r$ for $T_r \in \mathbf{T}$ of size $n_r$.
\end{defi} 

Note that when $\mathbf{T}= \{T\}$, $E_\mathbf{T}= T$, since $N$ is set to $\leafset(T)$, unless we set a larger extension space, in which case $E^N_\mathbf{T}= E^N_T$, and so the results of Section 3 are a special case of this definition and algorithm.

\begin{defi} Given a set of trees $\mathbf{T}$ as above, we call the set {\bf compatible} if $E_\mathbf{T}\neq \emptyset$.
\end{defi} 
Trivially, for $T\in \mathcal{T}^N$, $\Psi_{\leafset}(T)$ and $\Psi_{\leafset'}(T)$ are compatible for $\leafset,\leafset' \subset [N]$.
%To see this, take two splits $P$ and $Q$ on a common leaf set $n$, and we can view them alternately as degenerate tree topologies in $\mathcal{T}^n$ with one internal edge each. Then they are compa

\begin{prop}\label{intext} For a collection $\mathbf{T}$ of trees with total leaf set of size $N$, the intersection extension region of $\mathbf{T}$ is the intersection of the extension regions of $T \in \mathbf{T}$.  That is, $E^{\cO}_\mathbf{T} = \bigcap_{T \in \mathbf{T}} E^{\cO}_{T}$, $E_\mathbf{T}^N = \bigcap_{T \in \mathbf{T}} E_T^N$.
\end{prop}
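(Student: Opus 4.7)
The plan is to prove the two equalities directly from the definitions, with the orthant-wise equality doing almost all of the work and the global equality following by gluing across orthants via Proposition~\ref{CT}.

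First I would fix an orthant $\cO \subset \cT^N$ and prove $E^{\cO}_{\mathbf{T}} = \bigcap_{r} E^{\cO}_{T_r}$. By the definition in~(\ref{exeq}), a vector $\mathbf{x}^{\cO} \geq 0$ lies in $E^{\cO}_{\mathbf{T}}$ exactly when the single block-stacked equation $M^{\cO}_{\mathbf{T}}\mathbf{x}^{\cO} = \mathbf{w_T}$ is satisfied. Because $M^{\cO}_{\mathbf{T}}$ is literally the vertical concatenation of the matrices $M^{\cO}_{T_r}$ and $\mathbf{w_T}$ is the concatenation of the vectors $\mathbf{w}_r$, this block equation is equivalent to the simultaneous satisfaction of the $k$ separate systems $M^{\cO}_{T_r}\mathbf{x}^{\cO} = \mathbf{w}_r$. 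Taken together with the common non-negativity constraint, this is precisely membership in every $E^{\cO}_{T_r}$, which gives the orthant-wise equality.

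Next I would promote this to $E_{\mathbf{T}}^N = \bigcap_{T \in \mathbf{T}} E_T^N$. For $\subseteq$, let $\mathbf{x} \in E_{\mathbf{T}}^N$; then $\mathbf{x} \in E^{\cO_0}_{\mathbf{T}}$ for some $\cO_0 \in S^N_{\mathbf{T}} = \bigcap_r S^N_{T_r}$ (the latter equality being the orthant-support statement proved in Section~4.1). The orthant-wise step places $\mathbf{x}$ in every $E^{\cO_0}_{T_r}$, hence in every $E^N_{T_r}$. For $\supseteq$, let $\mathbf{x} \in \bigcap_r E^N_{T_r}$ and let $\cO_0 = \cO(\mathbf{x})$ be the minimal orthant containing $\mathbf{x}$. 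By Proposition~\ref{CT} applied to each $T_r$, the condition $\mathbf{x} \in E^N_{T_r} \cap \cO_0$ forces $\cO_0 \in S^N_{T_r}$ for every $r$, so $\cO_0 \in S^N_{\mathbf{T}}$; the orthant-wise step then gives $\mathbf{x} \in E^{\cO_0}_{\mathbf{T}} \subset E_{\mathbf{T}}^N$.

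The main obstacle is essentially bookkeeping rather than mathematics: a point on a shared boundary lies in several maximal orthants, and one must check that the equations in $\cO_0$ obtained by ``erasing'' the zero coordinates coincide with the equations obtained from any of the ambient maximal orthants. This is exactly the compatibility already recorded in the proposition that $E^{\cO_1 \cap \cO_2}_T = E^{\cO_1}_T \cap \cO_2 = \cO_1 \cap E^{\cO_2}_T$ for adjacent orthants, applied to each $T_r$ in turn, so no new argument is needed and the two-step plan above goes through cleanly.
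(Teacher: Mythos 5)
Your proposal is correct and follows essentially the same route as the paper: both arguments rest on (a) the orthant support of the intersection being the intersection of the orthant supports (the combinatorial intersection result of Section~4.1) and (b) the observation that the block-stacked linear system $M^{\cO}_{\mathbf{T}}\mathbf{x}^{\cO}=\mathbf{w_T}$ is satisfied exactly when each subsystem $M^{\cO}_{T_r}\mathbf{x}^{\cO}=\mathbf{w}_r$ is, i.e.\ $E^{\cO}_{\mathbf{T}}=\bigcap_r E^{\cO}_{T_r}$. If anything, your element-chasing treatment of points on shared orthant boundaries is more careful than the paper's one-line interchange of the union and intersection, which it justifies only by finiteness.
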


\begin{proof}From Proposition 5.2, we know that the orthant support of the intersection is the intersection of the orthant supports. From this, we note that $$\bigcap_T E_T = \bigcap_T \bigcup_{\cO} E_T^{\cO} = \bigcup_{\cO} \bigcap_T E_T^{\cO} = \bigcup_{\cO} E_\mathbf{T}^{\cO},$$
where the first equality is by definition, the middle from finiteness of this union and intersection, and the last equality follows from the fact that the intersection of real linear varieties is the vanishing set of the collection of generating equations.
\end{proof}

\hspace{-1.5em}{\bf Complexity}\\

As in Section 3, we can quickly do the operations that size allows. For $\mathcal{C} = \max \{ \sum_{T_r \in \mathbf{T}} 2n_r - 3, N\}$, equation (\ref{exeq}) is a $\mathcal{C}$-dimensional system of equations which can be set up in $O(kN^2)$. As before, this solution space is cumbersome to describe enumeratively and quick to search. 

\begin{prop}\label{Ecomp} Given $\mathbf{T} = \{T_r\}_{r=1,\dots,k}$, $[N] = \cup L_r$, and a tree $T \in \mathcal{T}^N$, the decision problem ``Is $T$ in $E^N_\mathbf{T}$?" can be solved in $O(kN)$.
\end{prop}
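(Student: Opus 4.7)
The plan is to reduce the membership problem for the intersection extension space to individual tree membership problems, and then invoke the single-tree decision procedure from Lemma~\ref{decision}. By Proposition~\ref{intext}, we have $E^N_\mathbf{T} = \bigcap_{r=1}^{k} E^N_{T_r}$, so $T \in E^N_\mathbf{T}$ if and only if $T \in E^N_{T_r}$ for every $r$. This converts the original decision problem into $k$ independent single-tree membership tests, which can be short-circuited: as soon as one returns false, the entire test returns false.

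For each such test, Lemma~\ref{decision} gives an $O(N)$ procedure, namely compute $\Psi_{\leafset_r}(T)$ and compare the result to $T_r$. The projection phase prunes $\ell_r = N - n_r$ leaves, each contributing at most one edge concatenation, so it runs in $O(\ell_r)$; the comparison of the resulting binary tree to $T_r$, both living in $\mathcal{T}^{\leafset_r}$, can be carried out by a simultaneous traversal of their $2n_r - 3$ splits in $O(n_r)$. Both quantities are bounded by $O(N)$.

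Running this test once per $T_r$ then yields total complexity $O(kN)$. The only point requiring a moment of care is that the constant in the per-tree bound from Lemma~\ref{decision} is genuinely independent of $r$ and of the rest of $\mathbf{T}$: this is immediate because each individual test looks only at $T$ and at a single $T_r$. No global preprocessing of $\mathbf{T}$ (such as building $S^N_\mathbf{T}$ or $M^{\cO}_\mathbf{T}$) is needed, since Proposition~\ref{intext} lets us avoid computing the intersection explicitly and replace it with $k$ orthant-agnostic projection-and-compare checks.
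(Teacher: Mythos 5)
Your proof is correct and follows essentially the same route as the paper: reduce membership in $E^N_\mathbf{T}$ to the $k$ individual checks $\Psi_{\leafset_r}(T) = T_r$ (via Proposition~\ref{intext} together with Theorem~\ref{Eispreimage}) and apply the $O(N)$ test of Lemma~\ref{decision} to each. Your writeup is in fact slightly more explicit than the paper's about why no global preprocessing of $\mathbf{T}$ is needed, but the argument is the same.
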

\begin{proof}
To answer the decision problem, it suffices to check, for each $T_r \in \mathbf{T}$, if $\Psi_{L_r}(T) = T_r$. By Lemma \ref{decision}, each can be done in $O(N)$, so the problem is $O(kN)$.
\end{proof}

$C^N_\mathbf{T}$ may be substantially smaller than $C^N_{T_1} \sim N^{\ell_1}$, which may make a complete description possible. A starting point is linear feasibility, i.e. determining if the system (\ref{exeq}) has a solution, which, in contrast to the single-tree case, is not automatically true.  To solve, we introduce $\mathcal{C}$ slack variables $\mathbf{y}_P$ and a $\ell_\infty$-norm variable $\alpha$, and we minimize $\alpha$ subject to
\begin{equation}\label{existence}\begin{array}{r}
\left(\begin{array}{l l l}
M_\mathbf{T} & I
\end{array}  \right)\left(\begin{array}{c}
\mathbf{x}^{\cO}\\
\hline
\mathbf{y}_P\\
\end{array} \right)= \left(\begin{array}{c}
\mathbf{w_T}
\end{array} \right)\\
 x_{m_a}\geq 0 \\
\alpha \geq y_{P} \geq 0
\end{array}
\end{equation} 
This LP has an initial feasible solution: $\mathbf{x}^{\cO} = \mathbf{0}$, $\mathbf{y}_P = \mathbf{w_T}$, and $\min \alpha = 0$ if and only if there is an $\mathbf{x}^{\cO}$ satisfying (\ref{exeq}). This takes as long as your favorite LP solver, for example the simplex method, which will have an average runtime of $O(\mathcal{C}^5)$. In the next section we will investigate the case $\min \alpha > 0$. For the LP formulation, skip to Section \ref{alphaLP}.

\section{Relaxation}

Since each $E^N_{T_r}$ (for collection $T_r$ as in previous section with fixed $N = n_r + \ell_r$) is locally a submanifold of codimension $2n_r-3$ in each orthant, for $n_r + n_{r'} > N+1$, two extension manifolds will not intersect stably. As a result of this, a small perturbation in two different projections of an $N$-tree may give the impression of subtree incompatibility. In the language of our linear optimization problem (\ref{existence}), given a small amount of sampling error in compatible trees, we may obtain an approximate solution with small, but positive, objective value. To resolve this and ensure stability of intersection, we find a minimum amount of error $\alpha_\mathbf{T}$, and find intersections of $\alpha_\mathbf{T}$-neighborhoods of the $E_{T_r}^N$ in each orthant.

\subsection{Uniform $\alpha$-relaxation}\label{alpha}
We can uniformly expand a single orthant of extension region $E_T^{\cO}$ by replacing each equation of the form
$$ x_{m_1}+x_{m_2}+\dots + x_{m_{a_j}} = w_i$$ 
with a pair of equations of the form
$$ x_{m_1}+x_{m_2}+\dots + x_{m_{a_j}} \geq w_i - \alpha $$
$$ x_{m_1}+x_{m_2}+\dots + x_{m_{a_j}} \leq w_i + \alpha $$
Formally, we expand the equation (\ref{exeq}) to the set of inequalities
\begin{equation}  \label{eqalpha}
\left(
\begin{array}{c}
M^{\cO}_{T_1}\\
\hline
M^{\cO}_{T_2}\\
\hline
\vdots \\
M^{\cO}_{T_k}
\end{array}
\right) \mathbf{x}^{\cO} \geq \left(
\begin{array}{c}
\mathbf{w}_1\\
\mathbf{w}_2\\
\vdots \\
\mathbf{w}_k
\end{array}
\right) - \alpha\cdot\mathbf{1}, \hspace{1em}\left(
\begin{array}{c}
M^{\cO}_{T_1}\\
\hline
M^{\cO}_{T_2}\\
\hline
\vdots \\
M^{\cO}_{T_k}
\end{array}
\right) \mathbf{x}^{\cO} \leq \left(
\begin{array}{c}
\mathbf{w}_1\\
\mathbf{w}_2\\
\vdots \\
\mathbf{w}_k
\end{array}
\right) + \alpha\cdot\mathbf{1}, \hspace{1em} \mathbf{x}^{\cO} \geq 0
\end{equation}
For a single tree $T_r$, the solution space in a fixed orthant $\cO$ is the extension space of a rectangular $\alpha$-neighborhood of $T_r$ in $\mathcal{T}^{\leafset_r}$, and we will see that it contains a neighborhood of the $2\ell_r$-plane $E^{\cO}_{T}$ in $\mathcal{T}^N$. When $\alpha <w_i$ for all $P_i \in S(T)$, the solution space does not contain the cone point. The orthant solution space for $\mathbf{T}$ then becomes a (bounded or unbounded, empty or non-empty) polytope $E^{\cO}_\mathbf{T}(\alpha)$. We choose $\alpha$ uniformly across orthants to ensure that the extension polytope is closed for small $\alpha$.

\begin{defi}For a given tree $T\in \mathcal{T}^{\leafset}$, we denote by $E^N_{T}(\alpha) := \bigcup_{\cO \in S_T^N} E^{\cO}_{T}(\alpha)$, and this is called the {\bf $\alpha$-extension region of $T$ in $\mathcal{T}^N$}. \end{defi}

\begin{exmp} Let $\alpha = 0.05$, then the $\alpha$-extension region of our first example is shown in Figure~\ref{fig:orthants_alpha}.

\begin{figure}
    \centering
    \includegraphics[scale=0.5]{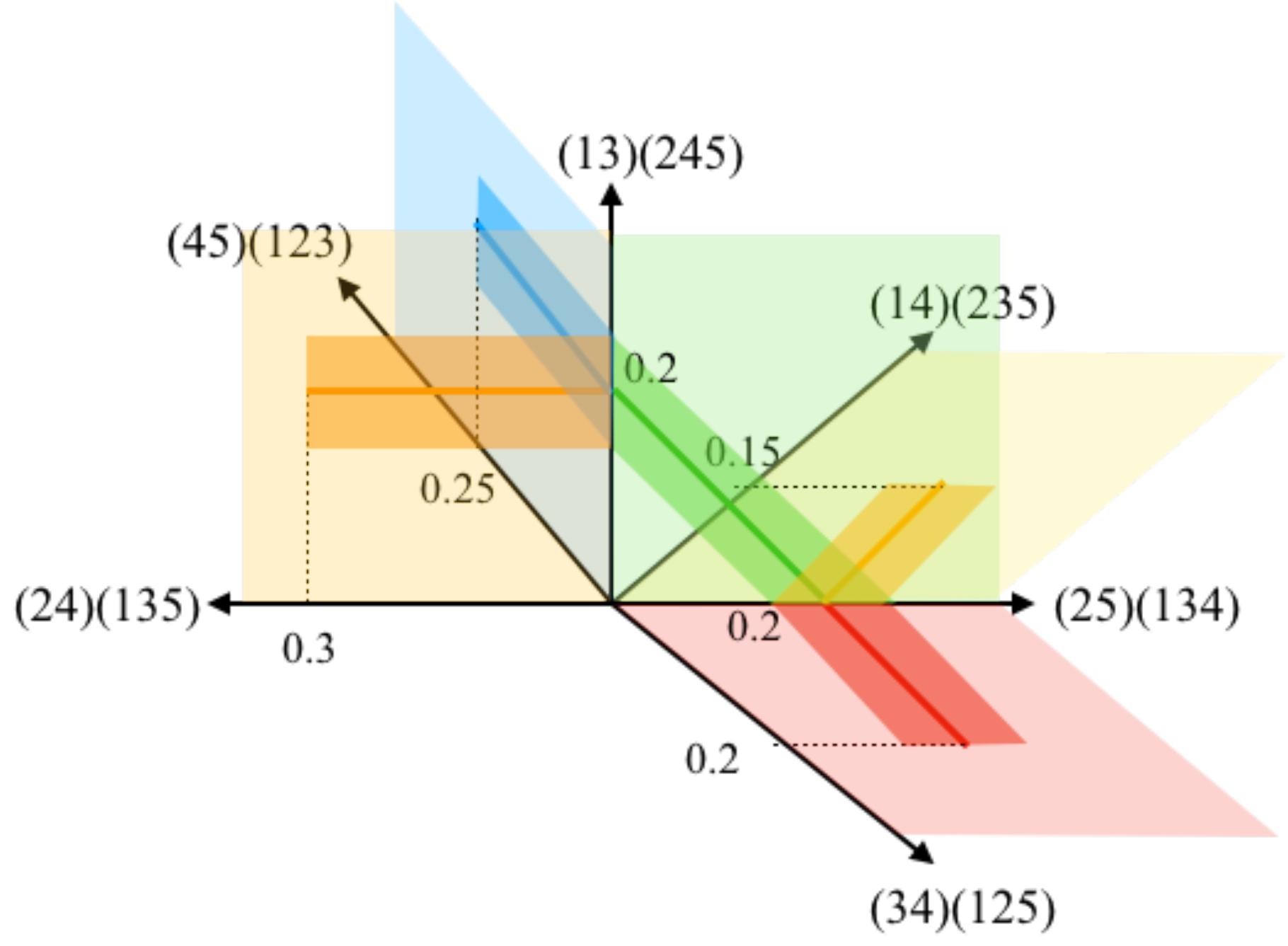}
    \caption{The $\alpha$-extension region of tree $T$ from Example~\ref{ex:running_start} is the darker shaded region within the 5 orthants.  Here $\alpha = 0.05$. }
    \label{fig:orthants_alpha}
\end{figure}

\end{exmp}
\begin{defi} For a finite collection $\mathbf{T} = \{T_r\}$ of binary trees and orthant $\cO \in S^N_\mathbf{T}$, the $\alpha$-relaxation of the equations (\ref{exeq}) gives a (possibly empty) polytope in $\cO$, denoted $E^{\cO}_\mathbf{T}$. The {\bf $\alpha$-intersection region} of $\mathbf{T}$ is defined to be $$E_{\mathbf{T}}(\alpha) := \bigcup_{\cO \in S_\mathbf{T}^N} E^{\cO}_{\mathbf{T}}(\alpha),$$ where as before, $N$ is taken to be the size of the total leaf set $\cup \leafset_r$ and $\ell_r = N-n_r$.
\end{defi}

\begin{prop} Let binary tree $T \in \cT^{\leafset}$ have leaf-set $\leafset \subset [N]$.  If tree $\mathfrak{T} \in E_\mathbf{T}^{\cO}(\alpha)$, then $d_\mathcal{T^{\leafset}}(\Psi_{\leafset}(\mathfrak{T}),T) < c\alpha$ for all $T\in \mathbf{T}$, where $c$ is a constant depending on $\cO$ and $\leafset(T)$.
\end{prop}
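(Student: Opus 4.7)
The plan is to unpack what membership in $E_\mathbf{T}^{\cO}(\alpha)$ says about each coordinate of $\mathfrak{T}$, apply the projection $\Psi_{\leafset}$ edge-by-edge, and then bound the resulting BHV distance by a straight-line computation inside a single orthant of $\mathcal{T}^{\leafset}$.

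First I would use the fact that $\cO \in S_\mathbf{T}^N$, so by Lemma~\ref{l:tdr_on_splits} (combined with Proposition~\ref{CT}), the splits $Q_1,\dots,Q_{2N-3}$ of $\cO$ project under $\Psi_{\leafset}$ exactly onto the splits $P_1,\dots,P_{2n-3}$ of $T$, where $n = |\leafset(T)|$. Writing $x_j$ for the weight of $Q_j$ in $\mathfrak{T}$, linearity of $\Psi_{\leafset}$ on $\cO$ (point 1 of the Section~3 strategy) then says that the weight of edge $P_i$ in $\Psi_{\leafset}(\mathfrak{T})$ equals $v_i := \sum_{j:\Psi_{\leafset}(Q_j)=P_i} x_j$, i.e. $\mathbf{v} = M^{\cO}_T \mathbf{x}^{\cO}$. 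The inequalities in~(\ref{eqalpha}) applied to the $T$-block give $|v_i - w_i| \leq \alpha$ for every edge $P_i$ of $T$.

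Next, observe that both $T$ and $\Psi_{\leafset}(\mathfrak{T})$ lie in the closed orthant $\overline{\cO(T)} \subset \mathcal{T}^{\leafset}$: $T$ sits in its interior by the assumption that $T$ is binary, and $\Psi_{\leafset}(\mathfrak{T})$ has splits contained in $S(T)$ (some $v_i$ may vanish, pushing it to a boundary face). Because both points sit inside a single Euclidean orthant of $\mathcal{T}^{\leafset}$, the straight-line segment between them is a piecewise-smooth path of length equal to the orthant Euclidean distance, and this is an upper bound for $d_{\mathcal{T}^{\leafset}}$ by definition. Therefore
\[
d_{\mathcal{T}^{\leafset}}\bigl(\Psi_{\leafset}(\mathfrak{T}),T\bigr) \;\leq\; \sqrt{\sum_{i=1}^{2n-3}(v_i - w_i)^2} \;\leq\; \sqrt{2n-3}\cdot \alpha,
\]
so taking $c = \sqrt{2|\leafset(T)|-3}$ (a constant determined by $\leafset(T)$ and the fact that $\cO$ has a well-defined projection $\Psi_{\leafset}$ matrix $M^{\cO}_T$) gives the claim with strict inequality whenever any constraint in~(\ref{eqalpha}) is non-tight.

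There is essentially no hard step here; the only thing to be careful about is verifying that $\Psi_{\leafset}(\mathfrak{T})$ and $T$ do in fact share a common closed orthant, so that the BHV geodesic is no longer than the straight segment. That reduction, which lets us replace $d_{\mathcal{T}^{\leafset}}$ by a coordinate-wise $\ell^2$-norm, is precisely what makes the $\alpha$-relaxation geometrically meaningful and is what one would highlight as the crux of the argument.
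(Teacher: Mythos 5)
Your proof is correct, but it takes a genuinely different route from the paper's. The paper argues indirectly: it asserts that any $\mathfrak{T} \in E^{\cO}_\mathbf{T}(\alpha)$ lies within distance $\alpha$ of some exact extension $\mathfrak{T}' \in E^{\cO}_{\mathbf{T}}$ (so $\Psi_{\leafset}(\mathfrak{T}') = T$), and then applies the fact from \cite{ZKBR2} that $\Psi_{\leafset}$ is $\log_2(N)$-Lipschitz, obtaining $c = \log_2(N)$. You instead work entirely on the image side: the $T$-block of (\ref{eqalpha}) bounds each concatenated edge weight $v_i$ of $\Psi_{\leafset}(\mathfrak{T})$ within $\alpha$ of $w_i$, and since $\Psi_{\leafset}(\mathfrak{T})$ and $T$ share the closed orthant $\overline{\cO(T)}$ (by Lemma~\ref{l:tdr_on_splits} and Proposition~\ref{CT}), the tree-space distance is dominated by the Euclidean distance there, giving $c = \sqrt{2|\leafset(T)|-3}$. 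Your version buys something real: it does not require the existence of a nearby point of $E^{\cO}_{\mathbf{T}}$, which is delicate (and vacuous when $E^{\cO}_{\mathbf{T}} = \emptyset$ --- exactly the situation the $\alpha$-relaxation is designed to handle), whereas the paper's route yields a constant depending on $N$ rather than $n$ and ties the statement to the general Lipschitz behavior of tree dimensionality reduction. Two small points to tidy in your write-up: your bound is non-strict when every constraint in (\ref{eqalpha}) is tight, so to literally obtain $< c\alpha$ you should inflate the constant (e.g.\ take $c = 2\sqrt{2|\leafset(T)|-3}$) rather than appeal to non-tightness; and you should note that columns of $M^{\cO}_T$ corresponding to splits of $\cO$ that project to a split with an empty side (the leaf edges of the added taxa) contribute nothing to any $v_i$, so they do not disturb the coordinate-wise comparison.
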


\begin{proof}
If $\mathfrak{T}\in E^{\cO}_\mathbf{T}(\alpha)$, then there is some $\mathfrak{T}' \in E^{\cO}_{\mathbf{T}}$ such that $d(\mathfrak{T},\mathfrak{T}')< \alpha$. Since $\mathfrak{T}'\in E_\mathbf{T}^{\cO}$, we have that $\Psi_{\leafset}(\mathfrak{T}') = T$ for all $T\in \mathbf{T}$. So by 
Section 4.3 in \cite{ZKBR2}, we can take $c = \log_2(N)$ to be the max number of edges concatenated in $\Psi_{\leafset}$ acting on $S(\mathcal{T}^N)$.
 \end{proof} 

Note that $E_T^N(\alpha)$ is not defined as an $\alpha$-neighborhood of $E_{T}^N$, but its restriction to each orthant in $S_T^N$ is an $\alpha$-neighborhood in that orthant. Furthermore, for small $\alpha$, $E_T^N(\alpha)$ is closely related to the neighborhood. 

\begin{prop} \label{neighborhood}
Let $T \in \cT^{\leafset}$ be a binary tree with leaf-set $\leafset \subset [N]$.
For $\alpha < \log_2(N)^{-1}\min_{e\in \cE(T)} w_e$, $E_T^N(\alpha)$ contains the $\alpha$-neighborhood of $E_{T}^{N}$ in $\mathcal{T}^N$.
\end{prop}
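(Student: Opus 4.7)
The plan is to take any tree $\mathfrak{T}$ within BHV-distance $\alpha$ of $E_T^N$ and exhibit a maximal orthant $\cO \in S_T^N$ containing $\mathfrak{T}$ in which the $\alpha$-relaxed system~(\ref{eqalpha}) is satisfied. First pick a witness $\mathfrak{T}' \in E_T^N$ with $d_{\cT^N}(\mathfrak{T}, \mathfrak{T}') < \alpha$, lying in some orthant $\cO' \in S_T^N$, and let $\cO$ be any maximal orthant containing $\mathfrak{T}$.

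The critical step is verifying $\cO \in S_T^N$; by Lemma~\ref{l:tdr_on_splits} this is equivalent to every split of $T$ having a preimage in $S(\cO)$. For each $P_i \in S(T)$, the splits $\{Q_{j_1}, \ldots, Q_{j_{a_i}}\} \subset S(\cO')$ projecting to $P_i$ carry weights in $\mathfrak{T}'$ summing to $w_i$, with $a_i \leq \log_2(N)$ by the concatenation-depth bound of~\cite{ZKBR2}. If none persisted in $S(\cO)$, any geodesic from $\mathfrak{T}'$ to $\mathfrak{T}$ would collapse all of them on its way out of $\cO'$, so its length is bounded below by the $L^2$-norm of $(y_{j_1}, \ldots, y_{j_{a_i}})$, which by Cauchy-Schwarz is at least $w_i/\sqrt{a_i} \geq W/\sqrt{\log_2(N)}$, where $W = \min_{e \in \cE(T)} w_e$. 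The hypothesis $\alpha < W/\log_2(N) \leq W/\sqrt{\log_2(N)}$ then contradicts $d_{\cT^N}(\mathfrak{T},\mathfrak{T}') < \alpha$. So every $P_i$ has a preimage in $S(\cO)$ and $\cO \in S_T^N$.

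Having $\cO \in S_T^N$, the projection $\Psi_{\leafset}$ restricted to $\cO$ is the linear map encoded by the $0/1$-matrix $M^{\cO}_T$, whose $i$-th row sums the weights of edges concatenating to form $P_i$. Since each such row has at most $\log_2(N)$ ones (Section~4.3 of~\cite{ZKBR2}), $\Psi_{\leafset}|_{\cO}$ is Lipschitz with constant controlled by $\log_2(N)$. Applying this to $d_{\cT^N}(\mathfrak{T},\mathfrak{T}') < \alpha$ with $\Psi_{\leafset}(\mathfrak{T}') = T$ gives a coordinate-wise bound on $M^{\cO}_T \mathbf{x}^{\cO} - \mathbf{w}$ within the single orthant $\cO(T)$, and the hypothesis on $\alpha$ absorbs the Lipschitz factor into the slack $\alpha$ appearing in~(\ref{eqalpha}).

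The main obstacle is the first step: the quantitative threshold $\alpha < W/\log_2(N)$ is precisely calibrated so that a BHV geodesic of length less than $\alpha$ cannot annihilate all preimages of any single edge of $T$, which is what would knock $\mathfrak{T}$ out of the connection space and break the very definition of $E_T^{\cO}(\alpha)$. Once orthant membership in $S_T^N$ is secured, the $\alpha$-relaxed inequalities reduce to a direct consequence of the coordinate-summing, low-operator-norm structure of $\Psi_{\leafset}$.
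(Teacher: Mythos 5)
Your combinatorial step is correct and takes a genuinely different route from the paper's. To show that the maximal orthant $\cO$ containing $\mathfrak{T}$ lies in $S_T^N$, the paper runs a contradiction through the image of a connecting geodesic under $\Psi_{\leafset}$, using the fact that $\Psi_{\leafset}$ expands path lengths by at most $\log_2(N)$; you instead bound $d_{\cT^N}(\mathfrak{T},\mathfrak{T}')$ from below by the Euclidean distance of the split-weight vectors and apply Cauchy--Schwarz to the at most $\log_2(N)$ preimages of each $P_i$. Both arguments close that step. One small repair: Lemma~\ref{l:tdr_on_splits} requires $\Psi_{\leafset}(S(\cO)) = S(T)$, not merely that each $P_i$ have a preimage in $S(\cO)$; the missing half follows because $T$ is binary, so any split in $S(\cO)$, being compatible with a preimage of every $P_i$, projects to a split compatible with all of $S(T)$ and hence to an element of $S(T)$ or a trivial split. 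You should say this.

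The genuine gap is in your metric step. What must be verified is $|(M^{\cO}_T \mathbf{x}^{\cO})_i - w_i| \leq \alpha$ for every row $i$, per (\ref{eqalpha}). Your Lipschitz argument gives $|(M^{\cO}_T \mathbf{x}^{\cO})_i - w_i| \leq \sum_{Q : \Psi_{\leafset}(Q) = P_i} |w_Q(\mathfrak{T}) - w_Q(\mathfrak{T}')| \leq c\,\alpha$ with $c$ on the order of $\sqrt{a_i}$ or $a_i \leq \log_2(N)$, and this factor $c>1$ does not go away: the hypothesis $\alpha < \log_2(N)^{-1}\min_{e} w_e$ compares $\alpha$ to the shortest edge of $T$, which is exactly what the orthant-membership step consumes, but it cannot convert a bound of $c\alpha$ into one of $\alpha$. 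The discrepancy is real rather than an artifact of a loose estimate: already inside a single orthant, a point at Euclidean distance $\delta/\sqrt{2}$ from the hyperplane $x_{j_1}+x_{j_2}=w_i$ violates that constraint by $\delta$, so the Euclidean $\alpha$-neighborhood of $E^{\cO}_T$ is not contained in the $\ell_\infty$ relaxation $E^{\cO}_T(\alpha)$. (The paper's own proof is thin at the same spot --- it asserts $N_\alpha \cap \cO = E_T^N(\alpha)$ for connection-space orthants without argument --- but your explicit computation surfaces the constant and then dismisses it with ``the hypothesis absorbs the Lipschitz factor,'' which is a non sequitur.) As written, the containment of the $\alpha$-neighborhood in the relaxed polytope is not established.
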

\begin{proof}
The $\alpha$-neighborhood $N_{\alpha} := N_\alpha(E_T^N) \subset \mathcal{T}^N$ is path-connected. Suppose $\mathfrak{T}\in N_\alpha \backslash E_T^N(\alpha)$. Since $N_\alpha \cap \cO = E_T^N(\alpha)$ for $\cO \subset S_T^N$, we conclude that $\mathfrak{T} \notin \cO$ for any orthant of the connection space, so the orthant $\cO'$ containing $\mathfrak{T}$ does not contain a preimage of some edge $e'\in \cE(T)$, i.e. $e' \notin\Psi_{\leafset}(S(\cO))$. Since the neighborhood is path-connected though, between $\mathfrak{T}$ and $E_T^N$ there is some geodesic path $\gamma$ contained in $N_\alpha$, corresponding to a deformation of $\mathfrak{T}$ to some tree $\bar{T}\in E_T^N$. 

Consider the image of $\gamma$ under $\Psi_{\leafset}$. $\Psi_{\leafset}(\mathfrak{T})$ does not have edge $e'$, so $\Psi_{\leafset}(\gamma)$ must have length at least the length of the projection to $e'$.  Therefore the length of $\Psi_{\leafset}(\gamma)$ must be greater than $\alpha$ since the $e'$ component of the path has length at least $w_{e'}\geq\min_e w_e >\log_2(N)\alpha$. Since by \cite{ZKBR} geodesic lengths grow by at most $\log_2(N)$ under $\Psi_{\leafset}$, this implies that $\mathfrak{T}\notin N_\alpha$, a contradiction.
\end{proof}

\begin{lemma}
\label{alphastab} 
Let $\mathbf{T} = \{T_1, ..., T_r\}$ be a set of binary trees in $\cT^N$, each with leafset $\leafset(T_r) \subset [N]$.  If $\alpha_1 < \alpha_2$, then $E_\mathbf{T}^N(\alpha_1) \subset E_\mathbf{T}^N(\alpha_2)$. For $T,T'\in \mathcal{T}^{\leafset}$ with $d_{\mathcal{T}^{\leafset}}(T,T')< \min\{\alpha,\log_2(N)^{-1}\min_{e_j \in T} w_j\}$, we have the inclusion $E_{T'}^N \subset E_{T}^N(\alpha)$.
\end{lemma}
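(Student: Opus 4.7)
The proof splits into two assertions. For the monotonicity claim, I would observe that the orthant support $S_\mathbf{T}^N$ is independent of $\alpha$ (it is determined purely by the combinatorial algorithm of Section 5.1), and in each $\cO \in S_\mathbf{T}^N$ the $\alpha$-relaxation (\ref{eqalpha}) defines the polytope
$$E_\mathbf{T}^{\cO}(\alpha) = \{\mathbf{x}^{\cO} \geq 0 : \mathbf{w_T} - \alpha \mathbf{1} \leq M^{\cO}_\mathbf{T} \mathbf{x}^{\cO} \leq \mathbf{w_T} + \alpha \mathbf{1}\}.$$
Since the interval $[w_i - \alpha_1, w_i + \alpha_1]$ is contained in $[w_i - \alpha_2, w_i + \alpha_2]$ whenever $\alpha_1 < \alpha_2$, the containment $E_\mathbf{T}^{\cO}(\alpha_1) \subset E_\mathbf{T}^{\cO}(\alpha_2)$ is automatic, and the global inclusion follows upon taking unions over $\cO$.

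For the second claim, my plan is to first show $T$ and $T'$ share the same topology, hence the same extension support in $\mathcal{T}^N$. Since $T$ is binary, its orthant $\cO(T) \subset \mathcal{T}^{\leafset}$ is maximal, and the distance from $T$ to the nearest point on the boundary face $\{w_e = 0\}$ of $\cO(T)$ is exactly $w_e$. Any tree outside $\cO(T)$ must cross some such face, so $d_{\mathcal{T}^{\leafset}}(T, T') \geq \min_{e \in \cE(T)} w_e$ whenever $T' \notin \cO(T)$. The hypothesis $d(T, T') < \log_2(N)^{-1} \min_e w_e \leq \min_e w_e$ therefore forces $T' \in \cO(T)$, so $S(T) = S(T')$. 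By Proposition \ref{CT}, $S_T^N = S_{T'}^N$, and for every $\cO$ in this shared support the projection matrices $M_T^{\cO}$ and $M_{T'}^{\cO}$ coincide, since each depends only on which splits of $S(\cO)$ project onto each element of the common split set.

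With the combinatorial setup aligned, the metric step is short. For any $\bar{T'} \in E_{T'}^N$ with weight vector $\mathbf{x}^{\cO} \geq 0$ in an orthant $\cO \in S_{T'}^N = S_T^N$, the defining equation $M^{\cO} \mathbf{x}^{\cO} = \mathbf{w}_{T'}$ gives
$$M^{\cO} \mathbf{x}^{\cO} - \mathbf{w}_T = \mathbf{w}_{T'} - \mathbf{w}_T.$$
Since $T$ and $T'$ share an orthant of $\mathcal{T}^{\leafset}$, each coordinate of $\mathbf{w}_T - \mathbf{w}_{T'}$ is bounded in absolute value by the $L^2$ norm $\|\mathbf{w}_T - \mathbf{w}_{T'}\| = d_{\mathcal{T}^{\leafset}}(T, T') < \alpha$, so $\bar{T'}$ satisfies the $\alpha$-relaxed system (\ref{eqalpha}) and lies in $E_T^{\cO}(\alpha) \subset E_T^N(\alpha)$. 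The main obstacle is the topological stability step: ensuring the bound on $d(T, T')$ rules out the contraction of any internal edge of $T$, so that the extension supports match and the metric comparison reduces to a purely coordinate-wise estimate.
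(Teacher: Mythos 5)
Your proposal is correct and takes essentially the same route as the paper's proof: the first claim follows from the nesting of the intervals $[w_i-\alpha_1,w_i+\alpha_1]\subset[w_i-\alpha_2,w_i+\alpha_2]$, and for the second, the distance hypothesis forces $S(T')=S(T)$, after which the coordinatewise bound $|w_e-w_e'|\leq d_{\mathcal{T}^{\leafset}}(T,T')<\alpha$ shows that every solution of the $T'$ system satisfies the $\alpha$-relaxed $T$ inequalities. You simply supply more detail than the paper does, justifying the shared topology via the distance to the orthant boundary and the shared orthant support via Proposition~\ref{CT}.
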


\begin{proof}
The first statement is clear from construction. For the second, if $d_{\cT^{\leafset}}(T,T')< \min w_j/\log(N)$, then we have that $T'$ has the same split set as $T$, with $w_j'$ the corresponding lengths. Each $w_j'<w_j +d_{\cT^{\leafset}}(T,T') <w_j + \alpha$, and similarly $w_j'>w_j -d_{\cT^{\leafset}}(T,T') > w_j - \alpha$, so solutions to $x_{m_1}+x_{m_2}+\dots+x_{m_{a_j}} = w_j'$ satisfy both inequalities.
\end{proof}

\begin{defi} For a combinatorially compatible collection $\mathbf{T}$ of trees $T_i$ as above, and a given orthant $\cO \in S_\mathbf{T}^N$, we denote by $\alpha^{\cO}_{\mathbf{T}}$ the infimum of $\alpha$ such that $E^{\cO}_{\mathbf{T}}(\alpha)$ is non-empty.
 Then the {\bf intersection parameter} $\alpha_{\mathbf{T}} := \min_{\cO} \alpha^{\cO}_{\mathbf{T}}$. 
\end{defi}

If $\mathbf{T}$ can be obtained from a single $N$-tree by deleting subsets of the leaves, then $\alpha_{\mathbf{T}} = 0$. We also have a natural upper bound on $\alpha^{\cO}_{\mathbf{T}}$ given by the length of the longest edge in $\mathbf{T}$ (so that $E_{\mathbf{T}}(\alpha)$ contains all $E_T(\alpha)$), so $\alpha_{\mathbf{T}}^{\cO}$ is guaranteed to be finite. The parameter $\alpha_\mathbf{T}$ represents minimum amount the preimages of the trees $T_r$ must be perturbed to have a metric solution, assuming combinatorial compatibility.

\subsubsection{Computing $\alpha_\mathbf{T}$}
\label{alphaLP}

When the system of equations (\ref{existence}) has a non-zero optimal solution, we conclude that (\ref{exeq}) had no solutions in that orthant, but we also obtain a valuable by-product: a measure of the degree to which the extension spaces $E^N_{T_r}$ miss each other. For a solution $\mathbf{x}^{\cO}, \mathbf{y}_P$ to (\ref{existence}), for each $r = 1,\dots, k$ we have a unique subset $(\mathbf{y}_P)_r \subset \mathbf{y}_P$, satisfying only the system of equations corresponding to the $M_{T_r}^{\cO}$ rows of $M_\mathbf{T}^{\cO}$. Rearranging those rows,
\begin{equation}\label{exeqr}
(\mathbf{y}_P)_r = \mathbf{w}_r- M_{T_r}^{\cO} \mathbf{x}^{\cO} 
\end{equation}
Thus the $(\mathbf{y}_P)_r$ can be viewed as representing the edge lengths of a positive ``error tree" in orthant $\cO$ of $\mathcal{T}^{\leafset_r}$, and the maximum entry in $(\mathbf{y}_P)_r$ is the minimum amount of $\ell_\infty$ error between $T_r$ and a tree satisfying the $T_r$ rows of equation (\ref{exeq}). Then a global solution is the minimum $\ell_\infty$ error which must be tolerated to include all $T_r \in \mathbf{T}$.

To make this precise, we must add another relaxation variable to stretch $E^N_{T_i}$ to include larger trees as well as smaller ones. 

\begin{prop} The uniform relaxation parameter $\alpha^{\cO}_\mathbf{T}$ of a tree set $\mathbf{T}$ in orthant $\cO \in S^N_\mathbf{T}$ is equal to the objective value of the linear program \begin{equation}\label{alphaT}\begin{array}{l l} 
\mbox{minimize }  & \alpha  \\
\mbox{s.t.} & 
\left(\begin{array}{l l l}
M_\mathbf{T}^{\cO} & I & -I
\end{array}  \right)\left(\begin{array}{c}
\mathbf{x}^{\cO}\\
\hline
\mathbf{y}_P\\
\hline
\mathbf{y}_N\\
\end{array} \right)= \left(\begin{array}{c}
\mathbf{w_T}
\end{array} \right)\\
 & 0 \leq x_{m_a} \\
 & 0 \leq y_{P,m}, y_{N,m} \leq \alpha
\end{array}
\end{equation}
\end{prop}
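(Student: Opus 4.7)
The plan is to show the two sets of feasible solutions are in bijection under a simple coordinate change, and hence the infima of $\alpha$ agree. Concretely, I want to prove that for any fixed $\alpha \geq 0$, the $\alpha$-relaxed system (\ref{eqalpha}) has a feasible $\mathbf{x}^{\cO} \geq 0$ if and only if the LP (\ref{alphaT}) has a feasible $(\mathbf{x}^{\cO}, \mathbf{y}_P, \mathbf{y}_N)$ with that value of $\alpha$.

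For the forward direction: given feasible $\mathbf{x}^{\cO} \geq 0$ satisfying (\ref{eqalpha}), set $\mathbf{d} := M^{\cO}_\mathbf{T}\mathbf{x}^{\cO} - \mathbf{w_T}$. By the two-sided inequality in (\ref{eqalpha}), each coordinate $d_m$ lies in $[-\alpha, \alpha]$. Define the positive and negative parts $y_{N,m} := \max(d_m, 0)$ and $y_{P,m} := \max(-d_m, 0)$. Then $\mathbf{y}_P, \mathbf{y}_N \geq 0$, both are coordinatewise bounded by $\alpha$, and $\mathbf{y}_N - \mathbf{y}_P = \mathbf{d}$, so $M^{\cO}_\mathbf{T}\mathbf{x}^{\cO} + \mathbf{y}_P - \mathbf{y}_N = \mathbf{w_T}$, giving a feasible LP point at the same value of $\alpha$.

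For the reverse direction: given an LP-feasible $(\mathbf{x}^{\cO}, \mathbf{y}_P, \mathbf{y}_N)$ with parameter $\alpha$, the equality $M^{\cO}_\mathbf{T}\mathbf{x}^{\cO} = \mathbf{w_T} - \mathbf{y}_P + \mathbf{y}_N$ combined with $0 \leq y_{P,m}, y_{N,m} \leq \alpha$ gives $\mathbf{w_T} - \alpha\mathbf{1} \leq M^{\cO}_\mathbf{T}\mathbf{x}^{\cO} \leq \mathbf{w_T} + \alpha\mathbf{1}$, so $\mathbf{x}^{\cO}$ solves (\ref{eqalpha}) at that $\alpha$.

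Once this bijection is set up, the proposition follows: the set of $\alpha$ for which (\ref{eqalpha}) is feasible coincides exactly with the projection onto the $\alpha$-coordinate of the LP feasible region, so their infima are equal. Taking the infimum over $\alpha$ in the LP is an actual minimum since the LP is bounded below by $0$ and feasible (take $\mathbf{x}^{\cO} = 0$, $\mathbf{y}_P = \mathbf{w_T}$, $\mathbf{y}_N = 0$, $\alpha = \max_i w_i$), and standard LP theory guarantees the minimum is attained. The only mild subtlety, and thus the main thing to be careful with, is that the ``$\infty$-norm splitting'' $d_m = y_{N,m} - y_{P,m}$ with both parts nonnegative and individually bounded by $\alpha$ is the tightest such decomposition precisely because $\max(|d_m|, 0) \leq \alpha$ is equivalent to the existence of such $y_{P,m}, y_{N,m}$; this is immediate but is the content of the equivalence.
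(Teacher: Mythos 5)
Your proof is correct: the positive/negative-part splitting $d_m = y_{N,m} - y_{P,m}$ with both parts in $[0,\alpha]$ gives exactly the bijection between feasible points of the two-sided relaxation (\ref{eqalpha}) and feasible points of the LP (\ref{alphaT}) at each fixed $\alpha$, and your attainment remark (feasibility plus boundedness below) justifies replacing the infimum by a minimum. The paper itself states this proposition without a formal proof --- it only gives the informal interpretation of $\mathbf{y}_P$ as the one-sided ``error tree'' $\mathbf{w}_r - M^{\cO}_{T_r}\mathbf{x}^{\cO}$ and remarks that $\mathbf{y}_N$ is added to absorb error in the other direction --- so your argument is a faithful formalization of exactly the decomposition the authors had in mind rather than a genuinely different route.
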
 

 To use the intrinsic $\BHV$ metric, which is piecewise $\ell_2$, we could use the objective function $\min \sum y_{P,m}^2 + \sum y_{N,m}^2$, or in order to preserve linearity of the objective function, we can use the $\ell_1$ metric in tree space, minimizing $\sum y_{P,m} + \sum y_{N,m}$. %\megan{What's $y_m$?  $y_{P,m} - y_{N,m}$?}\gill{ I meant a sum over all the $y$'s, we could say $\sum y_{P,m}^2 + \sum y_{N,m}^2$ if that's more clear.}\megan{Can you change it?  I'm still not quite sure how that fits in with both $\min \sum y_m^2$ and $\sum y_m$.}\gill{I think I get where it's intuitively confusing - $y_P$ and $y_M$ are positive slack variables, so in general $y_P + y_M$ (or the squared version) won't have much meaning, but if the sum is minimized, one of $y_P$ or $y_M$ will be zero (for a given $x,y_P,y_N$, if both $y_P$ and $y_N$ are positive, then replacing the larger, say $y_P$ WLOG, with $y_P-y_N$, set $y_N$ to 0, will give the same solution with a smaller objective value), and it will represent the distance in that dimension from $E_\mathbf{T}$. Do you have a good way to say this?} \megan{I actually just meant to change the first line as you did.  For some reason I wasn't seeing it at the time.} To see that, for $T'$ the tree represented by $x^\cO$ in some solution $x^\cO,y_P,y_M$, we have $\min \sum y_{P,m}^2 + \sum y_{N,m}^2 = \ell_2(T',E^N_\mathbf{T})$, we observe that 

Regarding complexity considerations, if $\mathcal{C} = \max \{ \sum_{T_r \in \mathbf{T}} 2n_r - 3, N\}$, then each matrix is size $\sim\mathcal{C}^2$, so the simplex algorithm will run in $(\mathcal{C}^5)$ time on average, although this is emphatically not a worst-case estimate. This will solve $\alpha_\mathbf{T}$, but again we may not want to enumerate the boundary points.

\subsubsection{Computing $E_\mathbf{T}(\alpha)$}
Using $M_\mathbf{T}^{\cO}$, $\mathbf{w_T}$, $\mathbf{x}^{\cO}$, $\mathbf{y}_P$, $\mathbf{y}_N$ as defined previously, $\cO \in S^N_\mathbf{T}$ and $\alpha \geq \alpha^S_\mathbf{T}$, the $\alpha$-relaxed extension space of $\mathbf{T}$ is defined by the equation
\begin{equation}\label{relax}\begin{array}{r}
\left(\begin{array}{l l l}
M^{\cO}_\mathbf{T} & I & 0 \\
M^{\cO}_\mathbf{T} & 0 & -I
\end{array}  \right)\left(\begin{array}{c}
\mathbf{x}^{\cO}\\
\hline
\mathbf{y}_P\\
\hline
\mathbf{y}_N\\
\end{array} \right)= \left(\begin{array}{c}
\mathbf{w_T} + \alpha\\
\mathbf{w_T} - \alpha
\end{array} \right)\\
x_{m_a},y_{m,P}, y_{m,N} \geq 0
\end{array}.
\end{equation}
We can use this description to search $E_\mathbf{T}^N(\alpha)$ for optimal solutions to a linear function (i.e. a function on $\mathcal{T}^N$ whose restriction to orthants is linear, or a linear function supported in a limited number of orthants).

\subsection{Proportional relaxation}
The $\alpha$-extension region, which is closely related to the $\alpha$ neighborhood of $E_\mathbf{T}$ for small $\alpha$ (Proposition \ref{neighborhood}), is a natural choice for relaxation, but we can also choose a neighborhood proportional to the extension region by solving the inequalities

\begin{equation}  \label{eqpalpha}
M^{\cO}_\mathbf{T} \mathbf{x}^{\cO} \geq (1-p_\alpha)\mathbf{w_T}, \hspace{1em}
M^{\cO}_\mathbf{T}
 \mathbf{x}^{\cO} \leq (1+p_\alpha)\mathbf{w_T}, \hspace{1em} \mathbf{x}^{\cO} \geq 0
\end{equation}

 \begin{defi} Let $\mathbf{T}= \{T_r\}$ be a finite set of binary trees, $\leafset_r \subset [N]$, $C^N_\mathbf{T}$ nonempty, and let $\cO \in S^N_\mathbf{T}$. Then for a fixed $p_\alpha \in [0,1]$, the non-negative solutions to (\ref{eqpalpha}) in $\mathbb{R}^N_{\geq 0}$ give a $(2N-3)$-dimensional solution space in $\cO$; the polytope generated with such a $p_\alpha$ is denoted $ E^{\cO}_\mathbf{T}(p_\alpha)_p$, with corresponding {\bf $(p_\alpha)$-proportional extension region} $$E_\mathbf{T}^N(p_\alpha)_p = \bigcup_{\cO \in S^N_\mathbf{T}} E^{\cO}_\mathbf{T}(p_\alpha)_p.$$ 
Then define the {\bf proportional intersection parameter}
$$p_\mathbf{T} = \inf_{E_\mathbf{T}^N(p_\alpha)_p\neq \emptyset} p_\alpha$$
 \end{defi}

 \begin{prop} The proportional intersection parameter $p_\mathbf{T}\in [0,1]$. For each $\cO \in S^N_\mathbf{T}$, set
$$p^{\cO}_\mathbf{T} := \inf_{E^{\cO}_\mathbf{T}(p_\alpha)_p\neq \emptyset}p_\alpha.$$ Then for $p_\alpha < 1$, $p_{\mathbf{T}} =  \min_{\cO} p^{\cO}_{\mathbf{T}}$.
 \end{prop}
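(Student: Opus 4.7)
The proposition contains two claims: that $p_\mathbf{T} \in [0,1]$, and that this infimum distributes over the finite set of orthants in $S^N_\mathbf{T}$. My plan is to dispatch each in turn, leaning on a single monotonicity observation that makes the relaxation a nested family.

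First, for the range statement: the lower bound $p_\mathbf{T} \geq 0$ is immediate from the definition, since we restrict to $p_\alpha \in [0,1]$. For the upper bound, I would plug $p_\alpha = 1$ into the defining inequalities (\ref{eqpalpha}) to obtain $M^{\cO}_\mathbf{T}\mathbf{x}^{\cO} \geq 0$ and $M^{\cO}_\mathbf{T}\mathbf{x}^{\cO} \leq 2\mathbf{w_T}$, with $\mathbf{x}^{\cO}\geq 0$. Since the entries of $\mathbf{w_T}$ are edge weights of binary trees and hence strictly positive, the zero vector $\mathbf{x}^{\cO} = \mathbf{0}$ satisfies every constraint. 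Thus, provided the combinatorial compatibility hypothesis $S^N_\mathbf{T} \neq \emptyset$ holds (inherited from $C^N_\mathbf{T}$ being nonempty, which is implicit in the definitions), we have $E^{\cO}_\mathbf{T}(1)_p \ni \mathbf{0}$ for every $\cO \in S^N_\mathbf{T}$, so $E^N_\mathbf{T}(1)_p \neq \emptyset$ and $p_\mathbf{T} \leq 1$.

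Next, for the equality $p_\mathbf{T} = \min_{\cO} p^{\cO}_\mathbf{T}$, the key lemma is monotonicity: if $0 \leq p_\alpha \leq p_\alpha' \leq 1$ and $E^{\cO}_\mathbf{T}(p_\alpha)_p \neq \emptyset$, then $E^{\cO}_\mathbf{T}(p_\alpha')_p \neq \emptyset$. This follows because the relaxation simultaneously weakens both inequalities in (\ref{eqpalpha}): the lower bound $(1-p_\alpha)\mathbf{w_T}$ decreases and the upper bound $(1+p_\alpha)\mathbf{w_T}$ increases as $p_\alpha$ grows. So any solution for the smaller parameter remains a solution for the larger. Consequently, for each orthant the feasibility set $A_\cO := \{p_\alpha \in [0,1] : E^{\cO}_\mathbf{T}(p_\alpha)_p \neq \emptyset\}$ is upward-closed in $[0,1]$, with infimum exactly $p^{\cO}_\mathbf{T}$.

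Finally, since $E_\mathbf{T}^N(p_\alpha)_p = \bigcup_{\cO \in S^N_\mathbf{T}} E^{\cO}_\mathbf{T}(p_\alpha)_p$ by definition, the feasibility set for $p_\mathbf{T}$ is the finite union $\bigcup_\cO A_\cO$, and I use the fact that the infimum of a finite union of subsets of $\RR$ is the minimum of their individual infima:
\begin{equation*}
p_\mathbf{T} \;=\; \inf \bigcup_{\cO \in S^N_\mathbf{T}} A_\cO \;=\; \min_{\cO \in S^N_\mathbf{T}} \inf A_\cO \;=\; \min_{\cO \in S^N_\mathbf{T}} p^{\cO}_\mathbf{T}.
\end{equation*}
The qualifier ``for $p_\alpha < 1$'' I read as emphasizing that the equation is nontrivial exactly when the common value lies strictly below $1$; if both sides equal $1$, the statement is witnessed only by the degenerate zero solution from part one. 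The main obstacle is purely bookkeeping, namely verifying that taking infima commutes with the finite union (which holds for any finite family of nonempty subsets of $\RR$) and that zero genuinely lies in every $E^{\cO}_\mathbf{T}(1)_p$; both are direct from the shape of the linear system (\ref{eqpalpha}).
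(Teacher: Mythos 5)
Your proof is correct and follows essentially the same route as the paper's: exhibit the cone point $\mathbf{x}^{\cO}=\mathbf{0}$ as a feasible solution to establish $p_\mathbf{T}\leq 1$, and reduce the equality $p_\mathbf{T}=\min_{\cO}p^{\cO}_\mathbf{T}$ to the fact that the union over the finitely many orthants is nonempty exactly when some orthant-level set is. Your version is marginally tidier in two spots --- you check feasibility at $p_\alpha=1$ itself rather than at $p_\alpha>1$ (which lies outside the declared domain $[0,1]$), and you make the monotonicity of $p_\alpha\mapsto E^{\cO}_\mathbf{T}(p_\alpha)_p$ explicit where the paper leaves it implicit --- but these are refinements of the same argument, not a different one.
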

 
 \begin{proof}
 For $p_\alpha< 0,$ $1-p_\alpha > 1+p_\alpha$, so the system (\ref{eqpalpha}) has no solutions. This means that $E^{\cO}_\mathbf{T}(p_\alpha)_p = \emptyset$ for all $\cO$. Therefore $p_\mathbf{T}^{\cO} \geq 0$. 
 
 For $p_\alpha >1$, $1-p_\alpha < 0$, so $\mathbf{x}^{\cO} \cong 0$ is a solution to (\ref{eqpalpha}). Since this point is identified in each orthant, $E_\mathbf{T}^N(p_\alpha)_p$ is formally non-empty. This implies that $p_\mathbf{T}\leq 1$, and for $p_\mathbf{T}<1$, the cone point is not in $E_\mathbf{T}^N(p_\alpha)_p$. In this case, since $E_\mathbf{T}^N(\cdot)_p = \cup_{\cO} E_\mathbf{T}^{\cO}(\cdot)_p$, $E_\mathbf{T}^N(\cdot)_p$ is nonempty precisely when one of $E_\mathbf{T}^{\cO}(\cdot)_p$ is non-empty, which occurs at $\min_{\cO} p^{\cO}_\alpha$, showing equality with $p_\mathbf{T}$. 
 \end{proof}
 Note that as with the uniform parameter $\alpha$, the $p=0$ case gives the original extension regions, but unlike the $\alpha$ case, $p$ has a maximum, 1, which includes boundaries of each orthant, including the cone point. This means that we are guaranteed non-empty relaxed intersection extension region for some value of $p$. Also, for $\alpha< \frac{p}{\log_2(N)}\cdot \min_{e\in \mathbf{T}}w_e$, by Proposition (\ref{neighborhood}) $N_\alpha \subset E_\mathbf{T}^N(\alpha) \subset E_\mathbf{T}^N(p)_p$.
 
 We are also led to a slightly different notion of stability, or alternately, the condition on the following lemma can be strengthened to $d_{\cT^{\leafset}}(T,T')< \min_{e\in \cE(T)} p\cdot w_e$ to obtain the same inclusion.

 \begin{lemma} For any $N \in \NN$ with leafset $\leafset \subset N$, let $T,T'\in \cO \in \mathcal{T}^{\leafset}$, and let $p_\alpha \in [0,1)$. If $|w_e-w'_{e}| < p_\alpha w_e$ for each $e \in \cE(T)$, then $E^N_{T'} \subset E^N_T(p_\alpha)_p$ for any extension codomain $\mathcal{T}^N$. %\megan{what do you mean by extension codomain?  Can we just leave off "for any ...."?}\gill{I mean target tree space for extension, we could say for any $N$ with $\mathcal{L}\subset [N]$ or something}
\end{lemma}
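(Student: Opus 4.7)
The plan is to reduce everything to a pointwise comparison of right-hand sides of the defining linear systems. Since $T$ and $T'$ lie in the same orthant $\cO \subset \mathcal{T}^{\leafset}$, they have identical split sets $S(T)=S(T')$, differing only in the weight vectors $\mathbf{w}$ and $\mathbf{w}'$. The combinatorial construction of the connection cluster (the algorithm in Section \ref{extcombstep}) depends only on the set of splits $S(T)$ via Proposition \ref{CT}, so $C^N_T = C^N_{T'}$ and hence $S^N_T = S^N_{T'}$. In particular, both extension spaces are supported on the same collection of orthants in $\mathcal{T}^N$.

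Next, for each orthant $\cO' \subset S^N_T = S^N_{T'}$, the projection matrix $M^{\cO'}$ is determined entirely by which splits $Q_j \in S(\cO')$ project under $\Psi_{\leafset}$ onto which splits $P_i \in S(T) = S(T')$. This data is independent of the weights, so $M^{\cO'}_T = M^{\cO'}_{T'}$. Consequently the defining equations $M^{\cO'}_T\, \mathbf{x}^{\cO'} = \mathbf{w}$ for $E^{\cO'}_T$ and $M^{\cO'}_{T'}\, \mathbf{x}^{\cO'} = \mathbf{w}'$ for $E^{\cO'}_{T'}$ share the same left-hand side.

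Now take any $\bar{x} \in E^N_{T'}$, and let $\cO'$ be an orthant in $S^N_{T'}$ containing $\bar{x}$. Then $M^{\cO'}_{T'}\, \bar{x} = \mathbf{w}'$ and $\bar{x} \geq 0$. To show $\bar{x} \in E^{\cO'}_T(p_\alpha)_p$, I must verify the inequalities of (\ref{eqpalpha}), namely
\[
(1-p_\alpha)\mathbf{w} \;\leq\; M^{\cO'}_T\, \bar{x} \;\leq\; (1+p_\alpha)\mathbf{w}.
\]
Since $M^{\cO'}_T\, \bar{x} = M^{\cO'}_{T'}\, \bar{x} = \mathbf{w}'$, this reduces to the componentwise inequality $(1-p_\alpha)w_e \leq w'_e \leq (1+p_\alpha)w_e$ for each $e \in \cE(T)$, which is exactly $|w_e - w'_e| \leq p_\alpha w_e$. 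This is guaranteed (strictly) by hypothesis.

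There is no real obstacle here: the argument is essentially a bookkeeping check once one recognizes that both the orthant support and the projection matrices of the extension construction depend only on $S(T)$, so trees in the same orthant of $\cT^{\leafset}$ yield identically structured linear systems differing only in the target vector. The proportional relaxation of $T$ is precisely carved out to absorb any target vector within proportional distance $p_\alpha$ of $\mathbf{w}$, and the hypothesis places $\mathbf{w}'$ inside this window.
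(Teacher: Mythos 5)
Your argument is correct and is essentially the paper's own proof, which simply observes (by analogy with Lemma \ref{alphastab}) that solutions of the equality system for $E^N_{T'}$ satisfy the inequalities defining $E^N_T(p_\alpha)_p$; you have merely spelled out the supporting facts that the orthant support and projection matrices depend only on $S(T)=S(T')$. No gaps.
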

\begin{proof}
Similar to (\ref{alphastab}) in the previous section, we can easily see that solutions to equations for $E_{T'}^N$ satisfy the inequalities defining $E_T^N(p_\alpha)$.
\end{proof}
 
 \begin{prop} The proportional relaxation parameter $(p_\alpha)^{\cO}_\mathbf{T}$ of a tree set $\mathbf{T}$ in orthant $\cO \in S^N_\mathbf{T}$ is equal to the objective value of the linear program \begin{equation}\label{palphaT}\begin{array}{l l} 
\mbox{minimize }  & p  \\
\mbox{s.t.} & 
\left(\begin{array}{l l l}
M_\mathbf{T}^{\cO} & I & -I
\end{array}  \right)\left(\begin{array}{c}
\mathbf{x}^{\cO}\\
\hline
\mathbf{y}_P\\
\hline
\mathbf{y}_N\\
\end{array} \right) =  \left(\begin{array}{c}
\mathbf{w_T}
\end{array} \right) \\
 & 0 \leq x_{m_a}, y_{P,m}, y_{N,m} \\
 & 0 \leq p\cdot w_m - y_{P,m} \\
 & 0 \leq p\cdot w_m - y_{N,m} 
\end{array}
\end{equation}
\end{prop}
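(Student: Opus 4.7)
The plan is to show that the LP given by (\ref{palphaT}) is a direct reformulation of the inequality system (\ref{eqpalpha}) defining $E_\mathbf{T}^{\cO}(p_\alpha)_p$, under the standard trick of splitting a signed residual into positive and negative parts. Since $(p_\alpha)_\mathbf{T}^{\cO}$ is by definition the infimum of $p_\alpha$ for which (\ref{eqpalpha}) admits a feasible $\mathbf{x}^{\cO} \geq 0$, the equivalence of feasibility with that of the LP will immediately identify the two values.

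First I would rewrite the pair of inequalities in (\ref{eqpalpha}) as the single two-sided constraint
\[
-p_\alpha \mathbf{w_T} \leq M_\mathbf{T}^{\cO} \mathbf{x}^{\cO} - \mathbf{w_T} \leq p_\alpha \mathbf{w_T},\qquad \mathbf{x}^{\cO}\geq 0,
\]
and then write the residual vector as $\mathbf{r} := \mathbf{w_T} - M_\mathbf{T}^{\cO} \mathbf{x}^{\cO} = \mathbf{y}_P - \mathbf{y}_N$ with $\mathbf{y}_P,\mathbf{y}_N\geq 0$. The LP equality $M_\mathbf{T}^{\cO} \mathbf{x}^{\cO} + \mathbf{y}_P - \mathbf{y}_N = \mathbf{w_T}$ is then just the definition of $\mathbf{r}$, and the componentwise bounds $y_{P,m},y_{N,m}\leq p\cdot w_m$ from (\ref{palphaT}) force $-p\cdot w_m\leq r_m\leq p\cdot w_m$, which is exactly the two-sided constraint above.

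Next I would verify the two directions of equivalence. Given any feasible $\mathbf{x}^{\cO}$ for (\ref{eqpalpha}) at parameter $p_\alpha$, set $y_{P,m} := \max(r_m,0)$ and $y_{N,m} := \max(-r_m,0)$; both lie in $[0,p_\alpha w_m]$, and $(\mathbf{x}^{\cO},\mathbf{y}_P,\mathbf{y}_N,p_\alpha)$ is LP-feasible with objective $p_\alpha$. Conversely, given any LP-feasible tuple $(\mathbf{x}^{\cO},\mathbf{y}_P,\mathbf{y}_N,p)$, the vector $\mathbf{x}^{\cO}$ satisfies (\ref{eqpalpha}) at parameter $p$, since $|r_m| = |y_{P,m}-y_{N,m}|\leq \max(y_{P,m},y_{N,m})\leq p\cdot w_m$. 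Thus the set of $p$-values attained by LP-feasible tuples coincides with $\{p_\alpha : E_\mathbf{T}^{\cO}(p_\alpha)_p\neq \emptyset\}$, and the LP optimum equals $(p_\alpha)_\mathbf{T}^{\cO}$.

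The only subtlety, and the one I would be most careful about, is that the LP formulation permits both $y_{P,m}$ and $y_{N,m}$ to be strictly positive simultaneously, while the intended interpretation as positive and negative parts requires at most one to be non-zero. This is harmless for the value of the LP because any feasible tuple can be replaced by subtracting $\min(y_{P,m},y_{N,m})$ from both coordinates without changing $\mathbf{x}^{\cO}$ or worsening the bounds, so the inequality $|r_m|\leq p\cdot w_m$ still follows; no appeal to optimality of the split is needed for the equivalence of feasibility. Finally, since the LP's feasible region is closed and the infimum of $p_\alpha$ is attained whenever $C_\mathbf{T}^N\neq\emptyset$, the infimum in the definition of $(p_\alpha)_\mathbf{T}^{\cO}$ coincides with the LP minimum, completing the argument.
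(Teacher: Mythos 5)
Your argument is correct. One thing worth knowing: the paper states this proposition without any proof at all (it is immediately followed by the Future Work section), so there is no official argument to compare against; your write-up supplies the missing justification. The route you take --- rewriting (\ref{eqpalpha}) as the two-sided bound $|w_m - (M_\mathbf{T}^{\cO}\mathbf{x}^{\cO})_m| \leq p\, w_m$, splitting the residual into nonnegative parts $\mathbf{y}_P - \mathbf{y}_N$, and checking both directions of feasibility --- is the standard LP reformulation and is plainly what the authors intend, mirroring their discussion of the slack variables in (\ref{exeqr}) for the uniform case. Your handling of the two subtleties is also right: the bound $|y_{P,m}-y_{N,m}| \leq \max(y_{P,m},y_{N,m}) \leq p\,w_m$ makes the converse direction go through even when both slacks are positive, and since the feasible $p$-values form an upward-closed set in both formulations (because $\mathbf{w_T} \geq 0$) and the LP attains its minimum, the two infima coincide.
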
 

\section{Future work}\label{FW}

Based on the quantities $\alpha_\mathbf{T}$ and $(p_\alpha)_\mathbf{T}$ we have just defined, some future directions can be outlined. 
\begin{enumerate}
    \item As we saw earlier, $\alpha$ and $p_\alpha$ do not compute a distance on extension spaces, but rather a degree of (in)compatibility, with zero values on compatible tree sets. These parameters could potentially be used to produce a quantitative ``test of compatibility" in which some threshold value is set, and the parameter is used to determine whether or not a set of trees is sampled from mutually compatible phylogenies. Unfortunately, although $p_\alpha$ is normalized to lie in $[0,1]$, its value does not admit a uniform statistical interpretation.
    \item Another direction of work is an expansion of results such as Proposition \ref{neighborhood} and Lemma \ref{alphastab} to arbitrary distances, which would require a means of including orthants adjacent to those in the support of $E_\mathbf{T}$. Of particular interest might be a relaxation of the calculation of $C^N_\mathbf{T}$ to include orthants $\mathcal{O}$ satisfying $\Psi_{\mathcal{L}_r}(\mathcal{O}) = S(T_r)$ for a majority of $T_r \in \mathbf{T}$, rather than all. 
    \item Finally, we might extend the idea of extension spaces, their relaxations, and their intersections to tree sets including unresolved (non-binary) trees.  In this case, the  edge lengths near unresolved nodes may have an alternative mathematical interpretation. For example, if multiple leaves attach to a node, then the edge lengths to each of those leaves may represent the approximate path length to those leaves after resolving the node. The results of Section 3 can be expected to follow in a straightforward manner for the unresolved case, but Sections 4 and 5 present some challenge. 
 \end{enumerate}

\section*{Acknowledgements}
We thank Ruth Davidson, Aasa Feragen, Andy Ma, Ezra Miller, Amy Willis, and Th\'{e}r\`{e}se Wu for helpful discussions. The first author is especially grateful to her advisor, Andrew Blumberg, for advice and editing. This work was partially supported by a grant from the Simons Foundation (\#355824, Megan Owen),
NIH grants 5U54CA193313 and GG010211-R01-HIV, and AFOSR grant FA9550-15-1-0302.

\end{document}